\setlist{
  listparindent=\parindent,
}
\newtheorem{theorem}{Theorem}[section]
\newtheorem{corollary}[theorem]{Corollary}
\newtheorem{lemma}[theorem]{Lemma}
\newtheorem{definition}[theorem]{Definition}
\newtheorem{remark}[theorem]{Remark}
\newtheorem{claim}[theorem]{Claim}
\newcommand{\RR}{\mathbb{R}}
\newcommand{\ZZ}{\mathbb{Z}}
\newcommand{\cE}{\mathcal{E}}
\newcommand{\cF}{\mathcal{F}}
\newcommand{\cG}{\mathcal{G}}
\newcommand{\cS}{\mathcal{S}}
\newcommand{\cT}{\mathcal{T}}
\newcommand{\cU}{\mathcal{U}}
\newcommand{\set}[1]{\{ #1 \}}
\DeclarePairedDelimiter{\abs}{\lvert}{\rvert}
\DeclarePairedDelimiter{\norm}{\lVert}{\rVert}
\newcommand{\wt}{\widetilde}
\newcommand{\wh}{\widehat}
\newcommand{\interior}[1]{%
  {\kern0pt#1}^{\mathrm{o}}%
}
\newcommand{\<}{\langle}
\renewcommand{\>}{\rangle}
\DeclareMathOperator{\Ker}{ker}
\DeclareMathOperator{\Ima}{im}
\let\im\relax
\DeclareMathOperator{\im}{im}
\DeclareMathOperator{\supp}{supp}
\DeclareMathOperator{\polylog}{polylog}
\DeclareMathOperator{\Set}{Set}
\newcommand{\code}{\textnormal{code}}
\newcommand{\sq}{\textnormal{square}}
\newcommand{\euclid}{\textnormal{euclid}}
\newcommand{\sur}{\textnormal{sur}}
\newcommand{\rep}{\textnormal{rep}}
\newcommand{\qLDPC}{\textnormal{qLDPC}}
\newcommand{\LTC}{\textnormal{LTC}}
\newcommand{\inter}{\textnormal{int}}
\newcommand{\bound}{\partial}
\renewcommand{\int}{\textnormal{int}}
\newcommand{\bnd}{\partial}
\def\F2{\mathbb{F}_2}
\title{Optimal Geometrically Local Quantum and Classical Codes \\ from Subdivision}
\author{Ting-Chun Lin\thanks{Department of Physics, University of California San Diego, CA, and Foxconn Research, Taipei, Taiwan. Email: \texttt{til022@ucsd.edu}.} \and Adam Wills\thanks{ Foxconn Research, Taipei, Taiwan. Email: \texttt{adamjwills7248@gmail.com}.}\and Min-Hsiu Hsieh\thanks{Foxconn Research, Taipei, Taiwan. Email: \texttt{min-hsiu.hsieh@foxconn.com}.}}
\begin{document}

\sloppy

\maketitle

\begin{abstract}
  A geometrically local quantum code is an error correcting code situated within $\RR^D$, where the checks only act on qubits within a fixed spatial distance.
  The main question is: What is the optimal dimension and distance for a geometrically local code?
  Recently, Portnoy made a significant breakthrough with codes achieving optimal dimension and distance up to polylogs.
  However, the construction invokes a somewhat advanced mathematical result that involves lifting a chain complex to a manifold.
  This paper bypasses this step and streamlines the construction by noticing that a family of good quantum low-density parity-check codes, balanced product codes, naturally carries a two-dimensional structure.
  Together with a new embedding result that will be shown elsewhere,
    this quantum code achieves the optimal dimension and distance in all dimensions.
  In addition, we show that the code has an optimal energy barrier.
  We also discuss similar results for classical codes.
\end{abstract}

\section{Introduction}
Coding theory has been the center of many applications and is especially important in quantum computing to maintain quantum coherence under noise.
One particular focus is on quantum codes whose parity-checks only act on a few qubits,
  also known as the quantum low-density parity-check (qLDPC) codes.
The LDPC property is favorable for quantum codes
  because quantum information is highly sensitive to noise
  and measurements often cause errors.
The fewer the qubits involved, the lower the measurement error.
This is in sharp contrast to practical classical codes
  which suffer from essentially no measurement error.
Finding qLDPC codes with large distance and dimension has been an important open question in coding theory; it is only recently that such codes have been constructed \cite{panteleev2021asymptotically}.

However, for certain applications, having low density is not enough.
Since we live in a three dimensional world, the code should be embedded in $\RR^3$.
We want each check to be implementable via short range measurements
and we want the qubit density to be finite.
Such codes are called geometrically local.

Finding geometrically local codes with large distance and dimension has also been an important question for physicists and coding theorists.
Back in 2008, Bravyi and Terhal \cite{bravyi2009no} showed an upper bound on code distance and in 2009, Bravyi, Poulin, and Terhal \cite{bravyi2010tradeoffs} extended the previous result to an upper bound on the tradeoff between code dimension and code distance.
At the time, the only known geometrically local codes were the high-dimensional toric codes, which saturate the upper bound in two dimensions, but not in higher dimensions. As such, it was thought that this upper bound is not tight and can be improved.
However, Haah in 2011 \cite{haah2011local} and Michnicki in 2012 \cite{michnicki20143d}
  constructed new geometrically local codes that beat the toric codes, thus improving the lower bound.
Despite these advancements, the gap between the upper and lower bounds remained open.

Recently, Portnoy \cite{portnoy2023local} finally closes the gap up to polylogs.
The idea is to take the good qLDPC codes mentioned earlier,
  ``geometrize'' the codes which turns them into manifolds based on the results of Freedman and Hastings \cite{freedman2021building},
  and finally embed the manifolds into $\RR^D$ using the work of Gromov and Guth \cite{gromov2012generalizations}.
It is remarkable how the construction takes a code to geometry and then back to code again.

However, one unsatisfactory aspect is that the step that turns codes into manifolds is somewhat complicated.
In this paper, we circumvent this step by utilizing the structure of the good qLDPC codes people have constructed in a non-black box manner.
In particular, the structure we will use is the balanced product introduced in \cite{breuckmann2021balanced}.\footnote{Similar structures have been discovered independently in many works, including the lifted-product in \cite{panteleev2021asymptotically} and the left-right Cayley graph in \cite{dinur2022good}.}
As a result, our code construction is more explicit\footnote{The term ``explicit'' here is used in a non-techincal sense. In theoretical computer science, ``explicit'' typically means constructable in polynomial time, and in this sense, both codes by Portnoy and ourselves can be made explicit.}.
To demonstrate this advantage, we show that our code additionally has the optimal energy barrier.

\begin{remark}
  Other works with related results by Baspin \cite{baspin2023combinatorial}  and Williamson-Baspin \cite{williamson2023layer} appeared independently with our work.
  While these works are about classical and quantum codes that saturate the BPT bound,
    the results were derived independently and
    the details of the constructions are different.
  We will compare the codes later in \Cref{sec:comparison}.
\end{remark}

\subsection{Main Contributions}

Our main contribution is to identify a 2D geometrical structure for the balanced product codes.
This enables us to simplify the previous construction \cite{portnoy2023local} by avoiding the step that turns codes into manifolds.
Our code gives the optimal distance and energy barrier.

\begin{theorem} \label{thm:quantum-LDPC}
  There exists a family of $D$-dimensional geometrically local quantum codes ($D \ge 3$)
  with code dimension $k = \Omega(n^\frac{D-2}{D})$,
    distance $d = \Omega(n^\frac{D-1}{D})$,
    and energy barrier $\cE = \Omega(n^\frac{D-2}{D})$.
\end{theorem}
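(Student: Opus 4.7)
The plan is to bypass Portnoy's step of lifting a chain complex to a manifold by directly exploiting the 2-dimensional cell-complex structure already present in balanced product codes. At a high level, the proof proceeds in three stages: (i) start with a good balanced-product qLDPC code viewed as a 2-complex; (ii) subdivide it to obtain the correct geometric scaling; (iii) apply the new embedding result (from the companion paper) to place the subdivided 2-complex into $\RR^D$ while preserving locality.

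First, I would start with a family of good balanced-product quantum LDPC codes having $N$ qubits, dimension $\Theta(N)$, and distance $\Theta(N)$. The key observation (the paper's main structural contribution) is that the balanced product of two 2-term complexes yields a 3-term chain complex $C_2 \to C_1 \to C_0$ which can literally be interpreted as a 2-dimensional cell complex, with $C_i$ indexing $i$-cells. This is the starting geometric object; no manifold lift is required.

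Next, I would subdivide each 2-cell into a $t \times t$ grid, obtaining a new 2-complex on $n = \Theta(N t^2)$ qubits whose (co)homology is unchanged, so that $k = \Theta(N)$ while the minimum weight of a nontrivial homology representative scales up by $t$, giving $d = \Theta(N t)$. Setting $N = n^{(D-2)/D}$ and $t = n^{1/D}$ yields the target parameters $k = \Omega(n^{(D-2)/D})$ and $d = \Omega(n^{(D-1)/D})$. I would then invoke the embedding result as a black box to place the subdivided 2-complex in $\RR^D$ with bounded qubit density and bounded check radius, noting that the host ball of radius $R = n^{1/D}$ has volume matching the total qubit count.

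The last and hardest step is the energy barrier. I would argue that any sequence of single-qubit Pauli errors starting from $I$ and ending at a nontrivial logical must at some intermediate point produce a syndrome of weight $\Omega(N) = \Omega(n^{(D-2)/D})$. The approach is to push subdivided error configurations back to the original unsubdivided 2-complex (each subdivision patch is contractible, so it contributes no new low-barrier escape routes) and then invoke the confinement / small-set coboundary expansion inherent to balanced product codes. Here is where the explicit, non-black-box structure of the balanced product construction pays off, since the coboundary expansion argument can be made directly at the level of the underlying classical codes.

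The main obstacle I expect is precisely this energy barrier bound: dimension and distance follow from standard homology-preserving subdivision together with the cited embedding, but $\cE = \Omega(n^{(D-2)/D})$ is not implied by distance alone and requires controlling every possible local-move trajectory. The reduction to the balanced-product confinement property, and a clean argument that subdivision does not create spurious low-weight syndrome paths, will be the technical heart of the proof.
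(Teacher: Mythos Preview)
Your proposal is correct and matches the paper's approach: exploit the 2D square-complex structure of balanced product codes, subdivide, embed via the companion result, and derive both distance and energy barrier by reducing back to the small-set (co)boundary expansion of the original code via a chain map $\cF:X\to X_L$. The one refinement to flag is that mere contractibility of the subdivision patches is not enough for the energy-barrier step; the paper proves quantitative $(\beta,\eta)$-coboundary expansion for the local pieces (which it calls generalized surface and generalized repetition codes) and combines these through a two-stage cleaning procedure (first the 2D regions, then the 1D seams) to transfer $(\alpha,\beta,\gamma)$-small-set expansion from $X$ to $X_L$ with $\alpha,\beta=\Theta(1/L)$, from which distance and energy barrier follow simultaneously.
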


\begin{remark} \label{remark:no-polylog}
  In the early version of this work,
    similar to Portnoy's result,
    our construction only achieves the optimal code parameters up to a polylog.
  This limitation comes from the suboptimal embedding \cite[Theorem 7]{portnoy2023local}.

  After learning about Williamson and Baspin's work where they achieved optimal distance without a loss of polylog factor,
    we realized that we can also eliminate the polylog factor by constructing an optimal explicit embedding of the square complex in $\RR^D$.
  The detailed construction of this embedding will be presented in a forthcoming paper.
\end{remark}

This answers the questions raised in Portnoy's work \cite{portnoy2023local}.
For Question 1, as discussed in \Cref{remark:no-polylog},
  it is possible to remove the polylog factors by constructing an explicit embedding of the simplicial complexes which will be discussed in a future paper.
We answer Question 2 affirmatively and show that we can bypass the manifolds (at least for some good qLDPC codes).
Moreover, our code likely has a decoder based on surface codes which potentially resolves Question 3.

Besides these results on geometrically local quantum codes,
  by applying the same idea,
  we obtain geometrically local classical codes with
    optimal distance, and energy barrier.
Additionally, this code has nontrivial soundness, which utilizes the recent results on $c^3$ locally testable codes \cite{panteleev2021asymptotically,dinur2021locally,lin2022c}.

\begin{theorem} \label{thm:classical-LTC}
  There exists a family of $D$-dimensional geometrically local classical codes ($D \ge 2$)
  with code dimension $k = \Omega(n^\frac{D-1}{D})$,
    distance $d = \Omega(n)$,
    energy barrier $\cE =\Omega(n^\frac{D-1}{D})$,
    and soundness $s = \Omega(n^{-\frac{1}{D}})$.
\end{theorem}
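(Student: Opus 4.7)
The plan is to mirror the construction from \Cref{thm:quantum-LDPC} in the classical regime, replacing the balanced product qLDPC code with a good $c^{3}$-locally testable code. Specifically, I would start from a $c^{3}$-LTC $C_{0}$ with parameters $[N, \Theta(N), \Theta(N)]$ and constant soundness, taken from \cite{panteleev2021asymptotically, dinur2021locally, lin2022c}. These codes are naturally supported on a bounded-degree two-dimensional square complex, with bits on $2$-cells and parity checks on edges; it is exactly this 2D structure that allows the same geometric machinery as in the quantum case to be applied.

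Next, I would embed the square complex in $\RR^{D}$ using the optimal embedding referenced in \Cref{remark:no-polylog}, scaled so that each edge becomes a path of length $\ell = \Theta(N^{1/(D-1)})$ while the $2$-cells stay unit-size. To recover geometric locality after the stretching, I would subdivide each embedded edge into a chain of $\ell$ sub-edges by inserting $\ell - 1$ auxiliary bits, enforce repetition along the chain by local parity checks between consecutive sub-edge bits, and split the original edge check into two local checks at the chain's endpoints that each tie the chain's endpoint bit to the incident face bit. The resulting code is geometrically local in $\RR^{D}$ with $n = \Theta(N\ell) = \Theta(N^{D/(D-1)})$ bits, so that $N = \Theta(n^{(D-1)/D})$ and $\ell = \Theta(n^{1/D})$.

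For the parameter analysis: codewords are forced by the repetition checks to be constant along each sub-edge chain, and the endpoint checks force that constant to agree with the adjacent face value, so codewords of the new code are in bijection with codewords of $C_{0}$, giving $k = \Theta(N) = \Theta(n^{(D-1)/D})$. A nonzero face bit in a codeword propagates through its incident chains of like-valued faces, each of length $\Theta(\ell)$, yielding $d = \Omega(N\ell) = \Omega(n)$. The energy barrier $\cE = \Omega(n^{(D-1)/D})$ follows from the general fact that for a linear code with distance $d$ and soundness $s$, the halfway point along a single-flip path between $0$ and a minimum-weight codeword is at distance $\geq d/2$ from the code and hence has syndrome weight $\geq sd/2$; substituting $s = \Theta(1/\ell)$ and $d = \Theta(n)$ gives $\Theta(N)$.

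The main obstacle is proving the soundness bound $s = \Omega(n^{-1/D}) = \Omega(1/\ell)$. Soundness inevitably degrades under subdivision because a single flipped bit in the middle of a long chain produces only $O(1)$ violated checks while sitting at distance $\Theta(\ell)$ from the code, so $s \lesssim 1/\ell$ is tight. For the matching lower bound, I would decompose any erroneous word into a face component and a sub-edge-chain component, show that the chain component's contribution to distance-from-code is bounded by its local syndrome weight times $\ell$ by a routine repetition-code argument, and handle the face component by invoking the constant soundness of $C_{0}$. Combining these pieces without double-counting, and confirming that the $c^{3}$-constants of $C_{0}$ are preserved through the embedding and chain-splitting, is the most delicate part of the argument.
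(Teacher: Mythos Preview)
Your high-level strategy---start from a good $c^3$-LTC, subdivide to recover geometric locality, and prove soundness by combining local repetition-code expansion with the constant soundness of the base code---is exactly the paper's approach, and your parameter bookkeeping (including the tight $s = \Theta(1/\ell)$ degradation and the $\cE \gtrsim sd$ energy-barrier bound) is correct.

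However, your concrete realization is both more complicated than necessary and structurally off. The paper does \emph{not} invoke the 2D square complex for the classical result. It works directly with the LTC's Tanner graph: the bipartite graph $\cG$ with $V(\cG) = X(0) \cup X(1)$ (bits on one side, checks on the other) and an edge for each incidence. This $1$-dimensional graph is $L$-subdivided---each Tanner edge becomes a path of length $L$ with new bits and checks alternating---and the subdivided graph is embedded in $\ZZ^D$ with $L = \Theta(|V(\cG)|^{1/(D-1)})$ via the analogue of \Cref{thm:embedding} for graphs. The square complex was needed only in the quantum case, where a $3$-term chain complex requires a genuinely $2$-dimensional object to subdivide; for a classical code the $2$-term complex is already captured by a graph.

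Your 2D framing introduces two concrete problems. First, in the $c^3$-LTCs you cite, checks live at \emph{vertices} of the left-right Cayley complex (each vertex imposes a local tensor-code constraint on its incident squares), not on edges; your description ``split the original edge check into two local checks at the chain's endpoints that each tie the chain's endpoint bit to the incident face bit'' does not correspond to the actual code and cannot be made to, since an edge of the complex is bounded by two vertices and cobounded by many squares, not two. Second, the prescription ``each edge becomes a path of length $\ell$ while the $2$-cells stay unit-size'' is geometrically incoherent: a square bounded by four length-$\ell$ edges cannot have $\Theta(1)$ area. Both issues disappear once you drop to the Tanner graph; the soundness argument then goes through exactly as you sketch, formalized in the paper via the chain map $\cF$ between $X$ and $X_L$ and the generalized-repetition-code expansion of \Cref{cor:rep-expansion}.
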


As a consequence of the codes above, we obtain codes that saturate the tradeoff between code dimension and distance.
This is achieved by copying the codes multiple times and stacking them into a lattice grid.
In particular, one can construct a new family of codes of size $L^D$
  by stacking $a^D$ many codes of size $(L/a)^D$.
By choosing a suitable value for $a$, we obtain the following code family.
\begin{corollary}
  For any $\tau \ge \frac{D-2}{D}$ ($D \ge 3$),
    there exists a family of $D$-dimensional geometrically local quantum codes
  with code dimension $k = \Omega(n^\tau)$,
    distance $d = \Omega((n/k)^\frac{D-1}{2})$,
    and energy barrier $\cE = \Omega((n/k)^\frac{D-2}{2})$.
\end{corollary}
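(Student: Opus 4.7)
The plan is to take the codes from \Cref{thm:quantum-LDPC} and pack multiple copies into a $D$-dimensional grid, using the number of copies as a free parameter to hit the target exponent $\tau$. Given $\tau \ge (D-2)/D$, I would set $a := \lceil n^{(\tau - (D-2)/D)/2}\rceil$, let $N := n/a^D$ be the block length of each small code, and place $a^D$ independent copies of the code from \Cref{thm:quantum-LDPC} of block length $N$ into disjoint axis-aligned boxes tiling a region of volume $\Theta(n)$ in $\RR^D$. Geometric locality is inherited cell by cell since no check acts across copies, and the total block length is $a^D N = n$.

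Next I would compute the parameters. The dimensions add across copies, so
\[
 k \;=\; a^D \cdot \Omega\paren{N^{(D-2)/D}} \;=\; \Omega\paren{a^2 \, n^{(D-2)/D}} \;=\; \Omega(n^\tau),
\]
giving $n/k = O(n^{1-\tau})$. For distance and energy barrier, the key observation is that a stack of independent stabilizer codes is a direct sum: any nontrivial logical operator must restrict to a nontrivial logical operator on at least one copy, and any sequence of single-qubit operations producing a logical must produce a logical in some copy. Hence both quantities are bounded below by the corresponding quantities of a single small copy,
\[
 d \;\ge\; \Omega\paren{N^{(D-1)/D}} \;=\; \Omega\paren{(n/a^D)^{(D-1)/D}}, \qquad \cE \;\ge\; \Omega\paren{N^{(D-2)/D}} \;=\; \Omega\paren{(n/a^D)^{(D-2)/D}}.
\]
Substituting $a^D = n/N$ together with the chosen value of $a$ reduces to an exponent check which yields $d = \Omega\paren{(n/k)^{(D-1)/2}}$ and $\cE = \Omega\paren{(n/k)^{(D-2)/2}}$.

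There is no substantial obstacle here; the entire argument is routine once \Cref{thm:quantum-LDPC} is in hand. The only conceptual ingredient is that direct sums preserve the lower bounds on distance and energy barrier, and the only arithmetic ingredient is verifying that $(D-1)\brac{1/D - (\tau - (D-2)/D)/2} = (D-1)(1-\tau)/2$ (and analogously with $D-2$), so that the exponents in $d$ and $\cE$ line up with $(n/k)^{(D-1)/2}$ and $(n/k)^{(D-2)/2}$ respectively. Conceptually, the corollary just says that once one saturates the BPT tradeoff at the single point $\tau = (D-2)/D$, stacking automatically fills in the entire tradeoff curve for larger $\tau$.
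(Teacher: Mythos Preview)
Your proposal is correct and follows exactly the stacking argument the paper sketches: build a code of size $L^D$ by tiling $a^D$ independent copies of the \Cref{thm:quantum-LDPC} code of size $(L/a)^D$, then tune $a$ to hit the desired $\tau$. The paper does not spell out the parameter arithmetic or the direct-sum lower bounds for $d$ and $\cE$ that you supply, but the underlying construction is identical.
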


\begin{corollary}
  For any $\tau \ge \frac{D-1}{D}$ ($D \ge 2$),
    there exists a family of $D$-dimensional geometrically local classical codes
  with code dimension $k = \Omega(n^\tau)$,
    distance $d = \Omega((n/k)^D)$,
    energy barrier $\cE =\Omega((n/k)^{D-1})$,
    and soundness $s = \Omega((n/k)^{-1})$.
\end{corollary}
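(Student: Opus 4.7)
The plan is to apply the geometric stacking construction informally sketched just above the preceding corollary, but starting from the base classical code family of \Cref{thm:classical-LTC} rather than the quantum one. Concretely, given parameters $1 \le a \le L$, I would tile a cube of side $L$ in $\RR^D$ into $a^D$ disjoint sub-cubes of side $L/a$ and drop an independent copy of the base $D$-dimensional geometrically local classical code on $n_0 = (L/a)^D$ bits inside each sub-cube. Since the copies occupy disjoint regions and each base check already had bounded range, the resulting direct-sum code is still geometrically local, now inside a cube of side $L$ with $n = L^D$ bits.

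Next I would analyze how the parameters combine under this direct sum: the dimension adds, while the distance, energy barrier, and soundness are each inherited block-wise from a single copy. Feeding the guarantees of \Cref{thm:classical-LTC} for a copy of size $(L/a)^D$ into this pattern gives, for the stacked code,
\[
  k = \Omega(a L^{D-1}), \quad
  d = \Omega((L/a)^D), \quad
  \cE = \Omega((L/a)^{D-1}), \quad
  s = \Omega((L/a)^{-1}).
\]
To hit the target dimension $k = \Omega(n^\tau)$ for a given $\tau \in [(D-1)/D, 1]$, I would set $a = L^{D\tau - (D-1)}$, which lies in $[1, L]$ by the hypothesis on $\tau$. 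Then $k = a L^{D-1} = L^{D\tau} = n^\tau$ and, crucially, $L/a = L^{D(1-\tau)} = n/k$. Substituting $L/a = n/k$ into the remaining three formulas recovers exactly the claimed scalings for $d$, $\cE$, and $s$.

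The only content-bearing step is the claim that the energy barrier and the soundness of a direct-sum code really do descend from a single block at the stated rates; this will be the main thing to justify carefully. Both facts are elementary: any low-weight erroneous configuration supported inside a single block incurs the energy barrier of that block, and the per-block soundness inequality $\abs{\sigma(x_i)} \ge s_0 \, d(x_i, C_i)$ sums over blocks to $\abs{\sigma(x)} \ge s_0 \, d(x, C)$ because both the syndrome map and the minimum-distance function decompose along the direct-sum decomposition. Everything else reduces to the arithmetic above, so I do not anticipate a genuine obstacle.
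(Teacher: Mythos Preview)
Your proposal is correct and follows exactly the stacking construction the paper sketches in the paragraph preceding the two corollaries: place $a^D$ copies of the base code from \Cref{thm:classical-LTC} in disjoint sub-cubes, note that dimension adds while distance, energy barrier, and soundness are inherited from a single block, and solve for $a$ so that $k = n^\tau$. Your arithmetic and the direct-sum verifications for $\cE$ and $s$ are the details the paper leaves implicit, and they are handled correctly.
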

As discussed in \Cref{sec:upper-bound},
  $D$-dimensional geometrically local quantum codes satisfy $d = O((n/k)^\frac{D-1}{2})$
  and the $D$-dimensional geometrically local classical codes satisfy
  $d = O((n/k)^D)$.
Therefore, the codes above saturate the tradeoff between code dimension and distance.
Whether the tradeoff between code dimension and energy barrier is optimal is left as an open question.

Furthermore, our construction method readily generalizes to other settings,
  including codes derived from cubical complexes \cite{dinur2024expansion}
  and local Hamiltonians beyond codes, although their properties require further investigation.

\subsection{Construction and Proof Overview}
One of the difficulties in constructing geometrically local codes
  is that good qLDPC codes are geometrically non-local.
This is because good qLDPC codes are based on expander graphs
  which do not have geometrically local embeddings.
Nevertheless, we can still force an embedding
  by paying the cost that the vertices in the graph are far apart.
If one continues this thought,
  a way to get a graph with a local embedding
  is to insert new vertices on each long range edge, subdividing them into constant sized edges.
This gives a graph which is geometrically local
  yet retains some of the expansion properties.
A similar idea applies to good classical LDPC codes
  whose parity-check matrix corresponds to the adjacency matrix of a bipartite graph.
This procedure gives a geometrically local classical code
  with the optimal dimension and distance.

We essentially apply the same idea to quantum LDPC codes. Many of the known good qLDPC codes can be phrased in terms of a balanced product.
The balanced product codes may be obtained by taking a certain product of two classical codes.
To embed the resulting complex in $\RR^D$, all we need to do is to subdivide each edge and face by adding vertices. The subdivision process is illustrated in \Cref{fig:intro-subdivision}.

\begin{figure}[H]
  \centering
  \includegraphics[width=0.7\textwidth]{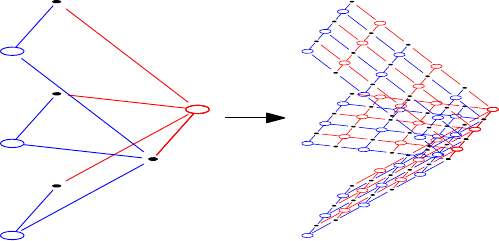}
  \caption{The subdivision of a balanced product code. (Balanced product code is described in \Cref{sec:prelim-good-qldpc}.) Blue circles, black dots and red circles represent X - stabilisers, qubits and Z - stabilisers, respectively.}
  \label{fig:intro-subdivision}
\end{figure}

The proof mainly involves properties of the chain map $\cF$, which maps from $\F2^{X(0)} \to \F2^{X(1)} \to \F2^{X(2)}$, the chain complex of the original good qLDPC code, to $\F2^{X_L(0)} \to \F2^{X_L(1)} \to \F2^{X_L(2)}$, the chain complex of the subdivided code.
\begin{equation}
  \begin{tikzcd}
    \F2^{X(0)} \arrow[r, "\delta_0"] \arrow[d, "\cF_0"] &
    \F2^{X(1)} \arrow[r, "\delta_1"] \arrow[d, "\cF_1"] &
    \F2^{X(2)} \arrow[d, "\cF_2"] \\
    \F2^{X_L(0)} \arrow[r, "\delta_0"] & \F2^{X_L(1)} \arrow[r, "\delta_1"] & \F2^{X_L(2)}
  \end{tikzcd}
\end{equation}
The strategy to show any desired property in $X_L$, say distance,
  is to reduce to the same property in $X$.
This reduction relies on codes that are similar to the surface code and the repetition code that arise naturally when we perform the subdivision.
We call these the generalized surface codes and the generalized repetition codes, as illustrated in \Cref{fig:intro-generalized-code}.
A helpful intuition is that the subdivided code is like
  a concatenated code where the inner code is the generalized surface and repetition codes
  and the outer code is the good qLDPC code.
The actual case is more nuanced.

\begin{figure}[H]
  \centering
  \includegraphics[width=0.7\linewidth]{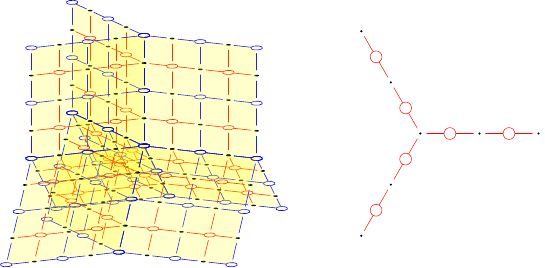}
  \vspace{1em}
  \caption{The generalized surface code and generalized repetition code.}
  \label{fig:intro-generalized-code}
\end{figure}

\subsection{Comparison} \label{sec:comparison}

We now compare the constructions of geometrically local quantum codes presented in Portnoy \cite{portnoy2023local}, Williamson-Baspin \cite{williamson2023layer}, and this paper.
All three constructions share a commonality in that they can all be interpreted as gluing 2D surface codes in a clever way based on good qLDPC codes.
The approach taken by Portnoy and ourselves involves viewing the good qLDPC codes as 2D geometrical structures
  and constructing new codes by associating each 2D structure with a surface code.
On the other hand, the approach taken by Williamson-Baspin involves replacing each qubit and check with a surface code.
However, a direct replacement leads to non-commuting checks, so it requires additional attachment of the ``defect lines''.

Both approaches have their own advantages and drawbacks, as summarized in \Cref{table:comparison}.
In particular, the 2D geometrical structures are found more easily in certain families of codes, so the first approach does not work for arbitrary codes yet.
On the other hand, the second approach currently works more naturally in three-dimension, partly due to the structure of the ``defect lines''.

\begin{table}[H]
  \centering
  \caption{Comparison of the constructions}
  \label{table:comparison}
  \begin{tabular}{ccccc}
    \toprule
    & \textbf{BPT bound?} & \textbf{Energy Barrier?} & \textbf{Arbitrary code?} & \textbf{All dimensions?} \\
    \midrule
    Portnoy & \checkmark $^\sharp$ & \checkmark $^\natural$ & Almost$^\flat$ & \checkmark \\
    Our work & \checkmark $^\sharp$ & \checkmark & Only balanced product & \checkmark \\
    Williamson-Baspin & \checkmark & \checkmark & \checkmark & Currently in 3D \\
    \bottomrule
  \end{tabular}
  \flushleft

  $^\sharp$After using the optimal embedding.
  $^\natural$Not stated, but it holds.
  $^\flat$Requires the code to have a sparse $\ZZ$-lift \cite[Definition 1.2.2]{freedman2021building}.
\end{table}

\subsection{Related Work}

\paragraph*{Good qLDPC codes}
Our result relies heavily on the recent progress of good qLDPC codes.
We summarize some recent advances.
One main open question in quantum coding theory was whether qLDPC codes with linear distance and dimension exist.
The earliest qLDPC codes are Kitaev's toric codes and surface codes \cite{kitaev2003fault} with dimension $k=\Theta(1)$ and distance $d=\Theta(\sqrt{n})$.
Progress in the field led to the discovery of codes with improved rates $k=\Theta(n)$ \cite{tillich2013quantum} and better distances $d=\Theta(\textnormal{polylog}(n) \sqrt{n})$ \cite{freedman2002z2, evra2020decodable, kaufman2020new}.
A significant breakthrough occurred when \cite{hastings2021fiber} improved the distance by a polynomial factor and broke the square root barrier.
  Following works achieved nearly linear distance $d=\Theta(n/\log n)$ \cite{panteleev2021quantum} and
  discovered the framework of balanced product and derandomized the earlier construction \cite{breuckmann2021balanced}.
Finally, \cite{panteleev2021asymptotically} showed the existence of good quantum LDPC codes with $k=\Theta(n)$ and $d=\Theta(n)$.
Subsequently, two more constructions were discovered \cite{leverrier2022quantum,dinur2022good}.
All these codes were shown to have a linear time sequential decoder \cite{gu2022efficient,leverrier2022efficient,dinur2022good}
  and a log time parallel decoder \cite{leverrier2022parallel}.
Besides coding theory, these codes also have applications in complexity theory \cite{hopkins2022explicit}, notably in resolving the NLTS conjecture \cite{anshu2022nlts}.

\paragraph*{Geometrically Local Quantum Codes}
As we live in a three-dimensional world,
  it is natural to explore codes that can be embedded in 3D,
  which have a potential realization in physical materials.
This exploration also extends to codes that can be embedded in higher dimensions.
We first discuss the lower bound of the parameters for the local quantum codes, and then we discuss the upper bound.

The earliest local quantum code studied was the generalized toric code \cite{dennis2002topological}.
The code is constructed by subdividing the $D = D_1 + D_2$ dimensional torus into grids of size $L^D$.
Subsequently, qubits are placed on $D_1$-cells,
  X-checks on $(D_1-1)$-cells,
  and Z-checks on $(D_1+1)$-cells.
Since the torus has nontrivial $D_1$-homology,
  the code is nontrivial with dimension $\Theta(1)$.
Additionally, since the nontrivial X (Z) logical operator corresponds to a $D_2$ ($D_1$) dimensional surface,
  one can show that the code has X-distance $\Theta(L^{D_2}) = \Theta(n^{\frac{D_2}{D}})$ and Z-distance $\Theta(L^{D_1}) = \Theta(n^{\frac{D_1}{D}})$
  where $n = \Theta(L^D)$ is the number of qubits.
In physics, this code is also known as the $\ZZ_2$ gauge theory which is a topological quantum field theory (TQFT).
Codes from TQFT were the best codes known at the time.
As a comparison to the codes that will be described later,
  the 3D toric code has parameters $[[n=\Theta(L^3), k=\Theta(1), d=\Theta(L), \cE = \Theta(1)]]$.

To obtain better codes, we have to go beyond TQFT.
In particular, two novel code constructions appeared.
One was by Haah in 2011 \cite{haah2011local} which constructed the first 3D gapped Hamiltonian beyond TQFT.
This code has parameters $[[n=\Theta(L^3), k=\Theta(L), d=\Omega(L^\gamma), \cE = \Omega(\log L)]]$ for some $\gamma > 1$ \cite{bravyi2011energy}.
In particular, it has superlinear distance and non constant energy barrier.
This breakthrough not only changes our understand of local quantum codes, but also sparks the nascent field of fractons in condensed matter physics.

Haah discovered the code by asking whether a 3D lattice code could exist without string operators, which is a characteristic property of TQFTs.
He observed that lattice codes in 3D with translation invariance can be compactly described as modules over the polynomial ring $\F2[x,y,z]$.
He then rephrased the no-string rule as a property in commutative algebra
  and discovered this nontrivial code.

The other construction was introduced by Michnicki in 2012 \cite{michnicki20143d}.
The new technique, known as ``welding'',
  involves welding together surface codes along the edges to form a larger code.
(In fact, our code can also be viewed as a welded code, using good qLDPC codes as a guide for the welding process.)
The code has parameters $[[n=\Theta(L^3), k=1, d=\Theta(L^{\frac{4}{3}}), \cE = \Theta(L^{\frac{2}{3}})]]$.
In particular, it has a polynomial energy barrier.
Notably, this is the first geometrically local quantum code that does not have translation invariance.

The most recent code by Portnoy in 2023 \cite{portnoy2023local}
  has parameters $[[n=\Theta(L^3), k=\Omega(L/\polylog L), d=\Omega(L^2/\polylog L)]]$.
In our work, we extend the parameters to include the energy barrier
  $[[n=\Theta(L^3), k=\Omega(L), d=\Omega(L^2), \cE=\Omega(L)]]$.

Besides the question on local quantum codes,
  one can also ask about local classical codes.
The work by Yoshida in 2011 \cite{yoshida2013information}
  constructed local classical codes based on fractals.
This achieves the optimal scaling between code dimension and distance up to polylogs
  $[n=\Theta(L^D), k=\Omega(L^{D-1}/\polylog L), d=\Omega(L^D/\polylog L), \cE=\Theta(\log L)]$.
Our work improves the energy barrier and show that the code has nontrivial soundness
  $[[n=\Theta(L^D), k=\Omega(L^{D-1}), d=\Omega(L^D), \cE=\Omega(L^{D-1}), s=\Omega(L^{-1})]]$.
We remark that quantum codes based on fractals have also been explored \cite{brell2016proposal,zhu2022topological,dua2023quantum}, but either the embedding in 3D is unknown
  or the code does not have superlinear distance.
This concludes the discussion on the lower bound.

The upper bound on distance and energy barrier for quantum and classical codes was studied in \cite{bravyi2009no}
  and the tradeoff between code dimension and distance was studied in \cite{bravyi2010tradeoffs}.
These imply that the optimal $D$ dimensional classical code has distance $\Theta(L^D)$ and energy barrier $\Theta(L^{D-1})$.
  Under these conditions the optimal dimension is $\Theta(L^{D-1})$.
Similarly, the optimal $D$ dimensional quantum code has distance $\Theta(L^{D-1})$ and energy barrier $\Theta(L^{D-2})$.
  Under these conditions the optimal dimension is $\Theta(L^{D-2})$.
Specific results can be found in \Cref{sec:upper-bound}.

On the other hand, through a different starting assumption, \cite{haah2021degeneracy} showed that
  all quantum codes that have ``homogeneous topological order''
  have upper bounds on the code dimension $k = O(L^{D-2})$.
Notice that all the codes discussed above have homogeneous topological order.

\subsection{Further Directions}

\paragraph*{Efficient Decoder?}
Can we show the code discussed in this paper has an efficient decoder that decodes correctly up to a constant times the distance?
It seems plausible this decoder could be obtained by combining the decoder for the surface code and the decoder for the original qLDPC code.

\paragraph*{``Geometrize'' Quantum Tanner Code by Leverrier and Z\'emor?}
All known constructions of good qLDPC codes can be viewed as a balanced product which
  induces a geometric embedding
  with the exception of the quantum Tanner code \cite{leverrier2022quantum}.
Can we induce a geometric embedding for the quantum Tanner code?\footnote{In the previous version of the paper, there was a construction proposed. However, it has been removed as it was found to be incorrect. A proper construction will be presented in a forthcoming work.}

\paragraph*{Simpler Manifold Construction?}
This paper bypasses Freedman, Hastings \cite{freedman2021building} by noticing balanced product codes naturally correspond to square complexes.
Can we use this feature to construct a manifold corresponding to the balanced product code in a simpler way?
Perhaps this leads to a lower dimensional manifold construction.
See \cite[Sec 1.9]{freedman2021building} for further context.

\paragraph*{Geometric Embedding Beyond Codes?}
This paper focuses on converting a qLDPC code into a geometrically local quantum code.
We wonder if this reduction can be applied to other questions.
For example, we know that the local Hamiltonian question is QMA-hard.
Can we obtain a reduction to questions about geometrically local Hamiltonians?
Or, the SYK model is given by the random 4-fermion Hamiltonian.
Can we convert this into a geometrically local Hamiltonian?
These questions can be posed in the context of both traditional classical CSP problems and quantum Hamiltonians.

\paragraph*{Geometrically Local Codes with Translation Invariance?}
This paper shows a matching upper and lower bound for distance and energy barrier of geometrically local codes.
What if we additionally impose translation invariance?
The code constructed by Haah in 3D of size $L \times L \times L$ was shown to have superlinear distance $\Omega(L^\gamma)$ for some $\gamma > 1$, and energy barrier $\Omega(\log L)$ by Bravyi and Haah \cite{bravyi2011energy}.
Neither of these match the upper bound in \cite{bravyi2009no}.
Can we possibly close this gap?

We note that, currently, all known translation invariant codes have logarithmic energy barriers.
This includes the classical code constructed by Yoshida \cite{yoshida2013information} and the quantum code constructed by Haah \cite{haah2011local}.
The origin of the logarithm is that the logical operator of these codes are Sierpiński gaskets, which have bad isoperimetric properties, resulting in a logarithmic energy barrier.
The Sierpiński gasket appears in both settings
  because of its nice property that it can be induced with translation invariant Hamiltonian
  as observed in glassy spin models \cite{newman1999glassy}.
Overcoming this logarithmic barrier represents an interesting milestone in advancing our understanding of translation invariant code.
One potential avenue is to have translation invariant codes whose codewords resemble Sierpiński carpets, which have good isoperimetric properties.

\paragraph*{Geometric Embedding in Fractal Geometries?}
This paper achieves the optimal code dimension and distance in $\ZZ^D$.
What about the optimal code in fractal geometries?
Similar questions have been explored previously in \cite{zhu2022topological,dua2023quantum}.

\paragraph*{Self-Correcting Quantum Memory?}
Self-correcting quantum memories \cite{dennis2002topological,alicki2010thermal,eczoo_self_correct} are the quantum analog of magnetic tapes.
Magnetic tapes are passive, non-volatile memories that can retain information for a long time without active error correction.
A natural question is: Can we find geometrically local quantum codes in 3D with similar properties?
It is known that 4D toric codes have this property \cite{dennis2002topological,alicki2010thermal}
  and 2D stabilizer codes do not have this property \cite{bravyi2009no}.
However, it remains open whether 3D self-correcting quantum memories exist.

Note that even though our code has large energy barrier,
  we do not think our code is self-correcting.
This is because our code utilizes the 2D toric code and probably shares similar thermal properties.

\section{Preliminary}

\subsection{Chain Complexes}
\label{sec:prelim-chain-complexes}

Chain complexes provide a natural framework for describing both quantum CSS codes and certain constructions of classical locally testable codes, which are the focus of this work.
Later, we will see that many properties of the codes, including distance, energy barrier for quantum codes, and local testability for classical codes, can be interpreted in terms of expansion properties of the chain complex defined in \Cref{sec:prelim-chain-complex-expansion}.

\begin{definition}[Chain complex]
  A chain complex $X$ consists of a sequence of vector spaces $\F2^{X(i)}$ generated by sets $X(i)$, along with linear maps $\delta_i: \F2^{X(i)} \rightarrow \F2^{X(i+1)}$ known as coboundary operators, where the coboundary operators satisfy
  \begin{equation*}
    \delta_{i+1} \delta_i = 0.
  \end{equation*}
\end{definition}

Given a chain complex, one can change the direction of the linear maps and form the dual chain complex consists of the boundary operators.
In our context, there is a canonical basis of $\F2^{X(i)}$ labeled by the elements of $X(i)$.
Using this basis, we can define the associate boundary operators $\partial_i: \F2^{X(i)} \rightarrow \F2^{X(i-1)}$ as the matrix transpose of $\delta_{i-1}$, $\partial_i \coloneqq \delta_{i-1}^T$.
The boundary operators automatically satisfy
\begin{equation*}
  \partial_{i-1} \partial_i  = 0
\end{equation*}
which is again a chain complex.

We introduce some standard terminologies.
Elements of the kernel of the (co)boundary operators are called (co)cycles
\begin{equation*}
  Z_i \coloneqq \Ker \partial_i = \{c_i \in \F2^{X(i)} : \partial_i c_i = 0\}, \qquad
  Z^i \coloneqq \Ker \delta_i = \{c_i \in \F2^{X(i)} : \delta_i c_i = 0\}.
\end{equation*}
Elements of the image of the (co)boundary operators are called (co)boundaries
\begin{equation*}
  B_i \coloneqq \Ima \partial_{i+1} = \{\partial_{i+1} c_{i+1} : c_{i+1} \in \F2^{X(i+1)}\}, \qquad
  B^i \coloneqq \Ima \delta_{i-1} = \{\delta_{i-1} c_{i-1} : c_{i-1} \in \F2^{X(i-1)}\}.
\end{equation*}
Because $\delta_i \delta_{i-1} = 0$, it follows that $B^i \subset Z^i$.
When $B^i = Z^i$ the chain complex is said to be \emph{exact} at $i$.

\subsection{Classical and Quantum Error Correcting Codes}
\label{sec:prelim-codes}

\subsubsection{Classical Error Correcting Codes}

A classical code is a $k$-dimensional linear subspace $C \subset \F2^n$
  which is specified by a parity-check matrix $H: \F2^n \to \F2^m$
  where $C = \Ker H$.
$n$ is called the size and $k$ is called the dimension.
The distance $d$ is the minimum Hamming weight of a nontrivial codeword
\begin{equation}
  d = \min_{c \in C - \{0\}} |c|.
\end{equation}

The energy barrier $E$ is the minimum energy required to generate a nontrivial codeword by flipping the bits one at a time.
More precisely, given a vector $c \in \F2^n$,
  $|H c|$ is the number of violated checks.
In the physical context, the energy of a state is proportional to the number of violated checks,
  so we will refer to $\epsilon(c) = |H c|$ as the energy of vector $c$.
We say a sequence of vectors $\gamma_{a\to b} = (c_0=a, c_1, ..., c_t=b)$ is a walk from $a$ to $b$
  if $c_i, c_{i+1}$ differ by exactly one bit $|c_i - c_{i+1}| = 1$.
The energy of a walk $\epsilon(\gamma) = \max_{c_i \in \gamma} \epsilon(c_i)$ is defined as the maximum energy reached among the vectors in $\gamma$.
Finally, the energy barrier $\cE$ is defined as the minimum energy among all walks $\gamma$ from $0$ to a nontrivial codeword
\begin{equation}
  \cE = \min_{\gamma_{0\to c}, c \in C - \{0\}} \epsilon(\gamma_{0\to c}).
\end{equation}
Intuitively, energy barrier is another way to characterize the difficulty for a logical error to occur, other than distance.

We say the classical code is a low-density parity-check (LDPC) code if each check interacts with a bounded number of bits, i.e. $H$ has a bounded number of nonzero entries in each row.

We say the classical code is (strongly) locally testable with soundness $s$ if it satisfies
\begin{equation}
  \forall x \in \F2^n : \frac{1}{m} |Hx| \ge \frac{s}{n} \min_{c \in C} |x-c|.
\end{equation}

Notice that if the code has large distance and soundness
  then the code has large energy barrier
  where $\cE \ge s \frac{m}{n} \frac{d}{2} = \Theta(sd)$.
The reason is that if $\epsilon(x) = |Hx| < s \frac{m}{n} \frac{d}{2}$,
  by soundness, $\min_{c \in C} |x-c| < d/2$.
Therefore, a walk $\gamma$ starting from $0$ with energy $< s \frac{m}{n} \frac{d}{2}$ cannot reach other codewords.
Hence, $\cE \ge s \frac{m}{n} \frac{d}{2}$.

\subsubsection{Quantum CSS Codes}

A quantum CSS code is specified by two classical codes $C_x, C_z$
  represented by their parity-check matrices $H_x: \F2^n \to \F2^{m_z}, H_z: \F2^n \to \F2^{m_x}$
  which satisfy $H_x H_z^T = 0$.
$n$, $m_x$, and $m_z$ are the number of qubits, the number of $X$ and $Z$ checks (i.e. stabilizer generators), respectively.
The code consists of $X$ and $Z$ logical operators represented by $C_x$ and $C_z$,
  $X$ and $Z$ stabilizers represented by $C_z^\perp$ and $C_x^\perp$,
  and nontrivial $X$ and $Z$ logical operators represented by $C_x - C_z^\perp$ and $C_z - C_x^\perp$.
The dimension $k = \dim C_x - \dim C_z^\perp$ is the number of logical qubits.
The distance is $d = \min(d_x, d_z)$ where
\begin{equation}
  d_x = \min_{c_x \in C_x - C_z^\perp} |c_x|,\qquad
  d_z = \min_{c_z \in C_z - C_x^\perp} |c_z|
\end{equation}
  are called the $X$ and $Z$ distances
  which are the minimal weights of the nontrivial $X$ and $Z$ logical operators.
The energy barrier is $\cE = \min(\cE_x, \cE_z)$
  where $\cE_x$ and $\cE_z$ are the minimum energies required to create a nontrivial $X$ and $Z$ logical operator
\begin{equation}
  \cE_x = \min_{\gamma_{0\to c_x}, c_x \in C_x - C_z^\perp} \epsilon_x(\gamma_{0\to c_x}),\qquad
  \cE_z = \min_{\gamma_{0\to c_z}, c_z \in C_z - C_x^\perp} \epsilon_z(\gamma_{0\to c_z}),
\end{equation}
  where $\epsilon_x(c_x) = |H_x c_x|$ is the number of violated $Z$-checks of the $X$ Pauli operator $c_x$
  and $\epsilon_z(c_z) = |H_z c_z|$ is the number of violated $X$-checks of the $Z$ Pauli operator $c_z$.

We say the quantum code is a low-density parity-check (LDPC) code if each check interacts with a bounded number of qubits and each qubit interacts with bounded number of checks, i.e. $H_x$ and $H_z$ have bounded number of nonzero entries in each column and row.

We remark that a quantum CSS code naturally corresponds to a chain complex.
In particular, $H_x$ and $H_z$ induce the chain complex
\begin{equation}
  X: \F2^{m_x} \xrightarrow{\delta_0 = H_z^T} \F2^n \xrightarrow{\delta_1 = H_x} \F2^{m_z}.
\end{equation}
In reverse, a chain complex also defines a quantum CSS code.
We denote the corresponding quantum code as $Q(X)$.
Later on, $X$ will mainly stands for the chain complex and not the $X$ in $X$-check.

\subsubsection{Geometric Properties: Geometrically Local and Bounded Density}

In addition to the linear algebraic structure from the parity-check matrices,
  the codes studied in this paper additionally have geometric structures.
In particular, the code is embedded in $\ZZ^D$\footnote{Instead of studying the embedding in $\ZZ^D$, one can study the embedding in $\RR^D$ which could be the more natural choice. We choose to work with $\ZZ^D$ for the ease of defining the notion of density. Morally, there is no difference between the two settings.}
  where each bit/qubit and check has a corresponding location in $\ZZ^D$
\begin{equation}
  I_{\code\to\euclid}: \Set(C) \to \ZZ^D \quad\text{or}\quad \Set(Q) \to \ZZ^D,
\end{equation}
  where $\Set(C)$ and $\Set(Q)$ are the sets that label the bits/qubits and the checks.
For classical codes $\Set(C) = [n] \sqcup [m]$
  and for quantum codes $\Set(Q) = [m_x] \sqcup [n] \sqcup [m_z]$,
  where $[n]$ is a set of size $n$ which labels the basis vectors of $\F2^n$
  and $\sqcup$ is the disjoint union.

We say the embedding is $a$-geometrically-local\footnote{We use the adverb ``geometrically'' to distinguish ``$a$-geometrically-local'' from ``$a'$-local'', which sometimes denotes the maximum number of qubits each check interacts with. Throughout this paper, local always means geometrically local.} if the Euclidean distance between each check and the qubit it interacts with is $\le a$.
That is, if $H_{ij}$ is nonzero, then $|I_{\code\to\euclid}(i)-I_{\code\to\euclid}(j)| \le a$ where $|\cdot|$ is the Euclidean distance.
We say the embedding has density $b$ if the number of qubits and checks located at each lattice point in $\ZZ^D$ is $\le b$.
The goal is to have constant parameters $a = \Theta(1)$, $b = \Theta(1)$.

\subsection{Good Quantum LDPC Code and Good Classical LTC from Balanced Product}
\label{sec:prelim-good-qldpc}

We now extract and review the key features of the good quantum LDPC codes constructed in \cite{panteleev2021asymptotically,lin2022good,dinur2022good}\footnote{Note that \cite{lin2022good} is not a full construction and is based on a conjecture on the existence of lossless expanders.}
  that allow us to transform them into geometrically local codes.
While each paper has its preferred way of describing their construction,
  here, we will put all of them under the umbrella of the balanced product \cite{breuckmann2021balanced} (also known as $G$-lifted product in \cite{panteleev2021asymptotically}),
  since it is the structure that allows us to put the code on a 2D square complex in a simple way.\footnote{What is actually needed is something weaker than balanced product. All we need is that the local view, i.e. the neighbors of an element, has a product structure. An example that has this structure but does not come from balanced product is the cubical complex \cite{jordan2000ramanujan}. Still we will stick with balanced product in this paper for simplicity.}
We will start by reviewing the balanced product construction,
  then discuss how the above good codes can be viewed as a balanced product code.

\subsubsection{Balanced Product of Codes}
Given two codes with invariant group actions $G$,
  the associated balanced product code is the tensor product of the two codes quotiented by the diagonal group action.
More precisely, let $H_A: \F2^{X_A(0)} \to \F2^{X_A(1)}$ and $H_B: \F2^{X_B(0)} \to \F2^{X_B(1)}$ be the parity-check matrices of the two codes.
The group $G$ act on sets $X_A(0), X_A(1), X_B(0), X_B(1)$
  such that $H_A$ and $H_B$ are invariant,
  i.e. $\<\wh a_1, H_A \wh a_0\> = \<\wh{g a_1}, H_A \wh{g a_0}\>$ where $\wh a_0$ ($\wh a_1$) is a basis vector labeled by an element $a_0 \in X_A(0)$ ($a_1 \in X_A(1)$) and $\<\cdot, \cdot\>$ is the inner product.
This implies an invariant group action of $G \times G$ on $H_A \otimes H_B: \F2^{X_A(0) \times X_B(0)} \xrightarrow{\delta_0} \F2^{X_A(1) \times X_B(0) \cup X_A(0) \times X_B(1)} \xrightarrow{\delta_1} \F2^{X_A(1) \times X_B(1)}$.

We then quotient $H_A \otimes H_B$ by the diagonal group action by identifying the coordinate $(a_i,b_j)$ and $(ga_i,gb_j)$ for $a_i \in X_A(i), b_j \in X_B(j), g \in G$.
More explicitly, we construct a new chain complex $H_A \otimes_G H_B: \F2^{X_A(0) \times_G X_B(0)} \xrightarrow{\delta_0'} \F2^{X_A(1) \times_G X_B(0) \cup X_A(0) \times_G X_B(1)} \xrightarrow{\delta_1'} \F2^{X_A(1) \times_G X_B(1)}$
  where $X_A(i) \times_G X_B(j) = (X_A(i) \times X_B(j)) / G$
  and each element is labeled by the equivalence class $\<[(a_i, b_j)]\> = \set{(g a_i, g b_j) : g \in G}$.
The new maps then inherit the entry values from the old map where
  $\<\wh{[(a_k, b_l)]}, \delta_h' \wh{[(a_i, b_j)]}\> = \<\wh{(a_k, b_l)}, \delta_h \wh{(a_i, b_j)}\>$.
  This map $\delta_h'$ is well-defined because $\delta_h$ is invariant under $G \times G$.\footnote{Note that the balanced product code construction is a special instance of the tensor product of modules where the underlying ring is the group ring $\F2[G]$. Such an operation is known as the balanced product for modules, which explains the origin of the name balanced product code.}

\subsubsection{Balanced Product of Graphs}
\label{sec:prelim-balanced-product-graph}
A related construction is the balanced product of graphs.
Given two graphs with invariant group actions $G$,
  the associated balanced product square complex is the Cartesian product of the two graphs quotiented by the diagonal group action.
More precisely, let $\cG_A = (V_A(0), V_A(1), E_A)$ and $\cG_B = (V_B(0), V_B(1), E_B)$ be the bipartite graphs where $V_A(0), V_A(1)$ are the vertex sets and $E_A$ is the edge set of $\cG_A$, and similarly for $\cG_B$.
The group $G$ acts on sets $V_A(0), V_A(1), V_B(0), V_B(1)$,
  such that if $(a_0, a_1) \in E_A$ then $(g a_0, g a_1) \in E_A$
    and if $(b_0, b_1) \in E_B$ then $(g b_0, g b_1) \in E_B$.
This implies an invariant group action $G \times G$ on $\cG_A \times \cG_B$ which is a square complex with
  vertices $V_A \times V_B$,
  edges $V_A \times E_B \cup E_A \times V_B$,
  and faces $E_A \times E_B$,
  where $V_A = V_A(0) \cup V_A(1)$ and $V_B = V_B(0) \cup V_B(1)$.

We then quotient $\cG_A \times \cG_B$ by the diagonal group action by identifying the vertices $(a_i,b_j)$ and $(ga_i,gb_j)$ for $a_i \in V_A(i), b_j \in V_B(j), g \in G$.
One can check that the resulting structure behaves nicely under this quotient and remains a square complex.

Readers may have noticed the similarities between the two constructions above.
In particular, each balanced product code has a natural embedding into an associated square complex.
Let $\cG_A$ be the graph where $V_A(0) = X_A(0), V_A(1) = X_A(1)$
  and $(a_0, a_1) \in E_A$ if the corresponding entry in $H_A$ is nonzero, $\<\wh a_1, H_A \wh a_0\> \ne 0$.
Define $\cG_B$ similarly.
Then the set of basis labels of $H_A \otimes_G H_B$, $\Set(H_A \otimes_G H_B) = \cup_{i=\set{0,1}, j=\set{0,1}} X_A(i) \times_G X_B(j)$, is precisely the vertex set of the square complex $\cG_A \times_G \cG_B$, $V_A \times_G V_B$.
In short, $\Set(X) = V(\cG)$ where $X \coloneqq H_A \otimes_G H_B$ and $\cG \coloneqq \cG_A \times_G \cG_B$.
This natural embedding is precisely what allows us to bypass the manifold and get a direct embedding into a 2D square complex (which is also a simplicial complex by cutting each square into two triangles).

\subsubsection{Good Codes as Balanced Product}

We now interpret the constructions in \cite{panteleev2021asymptotically,lin2022good,dinur2022good} as balanced product codes.
The construction in \cite{lin2022good} was described in terms of balanced product,
  so we will focus our attention on \cite{panteleev2021asymptotically,dinur2022good}
  which are based on Tanner codes of the left-right Cayley complex \cite{dinur2021locally}.
The way to view them in terms of balanced product has been discussed in \cite{panteleev2021asymptotically}.
Here, we only provide a short summary.

The code construction in \cite{panteleev2021asymptotically,dinur2022good} is to take a left-right Cayley complex $Cay^2(A, G, B)$, and then extend it using two classical codes $C_A, C_B$.
The (4-fold) left-right Cayley complex $Cay^2(A, G, B)$ is specified by a finite group $G$ and two sets of generators $A$ and $B$.
Structurally speaking, the left-right Cayley complex is a square complex, which is similar to a simplicial complex, but instead consists of vertices, edges, and squares:
\begin{itemize}
  \item The vertices are $V = V_{00} \cup V_{10} \cup V_{01} \cup V_{11}$ where $V_{00} \cong V_{10} \cong V_{01} \cong V_{11} \cong G$.
  \item The edges consists of vertical and horizontal edges $E = E^- \cup E^| = (E_{*0} \cup E_{*1}) \cup (E_{0*} \cup E_{1*})$ where
    \begin{align*}
      E_{*0} &= \{(g, ag) : g \in G, a \in A\} \subset V_{00} \times V_{10}\;, \\
      E_{*1} &= \{(gb, agb) : gb \in G, a \in A\} \subset V_{01} \times V_{11}\;, \\
      E_{0*} &= \{(g, gb) : g \in G, b \in B\} \subset V_{00} \times V_{01}\;, \\
      E_{1*} &= \{(ag, agb) : ag \in G, b \in B\} \subset V_{10} \times V_{11}\;.
    \end{align*}
  \item The faces are $F = \{(g, ag, gb, agb): g \in G, a \in A, b \in B\} \subset V_{00} \times V_{10} \times V_{01} \times V_{11}$\;.
\end{itemize}

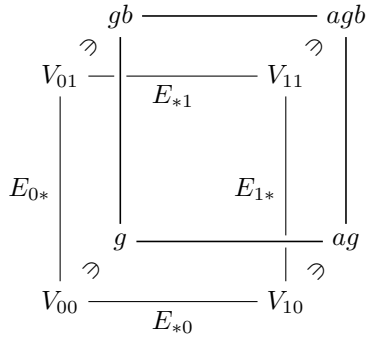
\begin{figure}[H]
  \centering
  \begin{tikzpicture}
    \node (V00) at (0,0) {$V_{00}$};
    \node (V10) at (3,0) {$V_{10}$};
    \node (V01) at (0,3) {$V_{01}$};
    \node (V11) at (3,3) {$V_{11}$};

    \draw (V00) to node[auto, swap] {$E_{*0}$} (V10);
    \draw (V00) to node[auto] {$E_{0*}$} (V01);
    \draw (V01) to node[auto, swap] {$E_{*1}$} (V11);
    \draw (V10) to node[auto] {$E_{1*}$} (V11);

    \draw (0,0)+(0.8,0.8) node (g) {$g$};
    \draw (3,0)+(0.8,0.8) node (ag) {$ag$};
    \draw (0,3)+(0.8,0.8) node (gb) {$gb$};
    \draw (3,3)+(0.8,0.8) node (agb) {$agb$};

    \path (g) -- node [sloped] {$\ni$} (V00);
    \path (ag) -- node [sloped] {$\ni$} (V10);
    \path (gb) -- node [sloped] {$\ni$} (V01);
    \path (agb) -- node [sloped] {$\ni$} (V11);

    \draw [white, double=black, line width=2pt] (g) to (ag);
    \draw [white, double=black, line width=2pt] (g) to (gb);
    \draw [white, double=black, line width=2pt] (gb) to (agb);
    \draw [white, double=black, line width=2pt] (ag) to (agb);
  \end{tikzpicture}
  \caption{4-fold left-right Cayley complex.}
  \label{fig:left-right-cayley-complex}
\end{figure}

To interpret the code as a balanced product, the key observation is that $Cay^2(A, G, B) \cong Cay(A, G) \times_G Cay(G, B)$
  where $Cay(A, G)$ ($Cay(G, B)$) is the Cayley graph with generators $A$ ($B$) acting through left (right) multiplication, i.e. the vertices are labeled by $g \in G$ where $g$ and $ag$ ($gb$) are adjacent.
We set $Cay(A, G)$ to have group acting from the right with $g \cdot x = xg^{-1}$ for $x \in V(Cay(A, G))$,
  and $Cay(G, B)$ to have group acting from the left with $g \cdot y = gy$ for $y \in V(Cay(G, B))$.
The reason for this peculiar group action is that
  this choice gives the bijection between the vertices of $Cay(A, G) \times_G Cay(G, B)$ and the vertices $Cay^2(A, G, B)$ through
  $[(x, y)] \mapsto xy$
  where $[(x, y)]$ is the equivalence class $\set{(g \cdot x,g \cdot y): g \in G}$.
One can check that this is well-defined
  because $[(x, y)] = [(g \cdot x, g \cdot y)] \mapsto x g^{-1} g y = xy$.
It is also straightforward to check that this map between the vertices also maps correctly between the edges and the faces.

From this observation, we claim that the qLDPC code in \cite{panteleev2021asymptotically,dinur2022good} is a balanced product code of the form $\cT(Cay(A, G), C_A) \otimes_G \cT(Cay(G, B), C_B)$,
  where $\cT(Cay(A, G), C_A)$ ($\cT(Cay(G, B), C_B)$) is the Tanner code with base graph $Cay(A, G)$ ($Cay(G, B)$) extended using the local code $C_A$ ($C_B$).
The main idea is that the step of Tannerization is independent of the group structure.
  So the property of being a balanced product code
  boils down to the graph $Cay^2(A, G, B)$ having a balanced product structure $Cay(A, G) \times_G Cay(G, B)$,
  which we have discussed above.
We will leave other details to \cite{panteleev2021asymptotically}.

\subsubsection{Good Codes}
From \cite{panteleev2021asymptotically,dinur2022good}
  it is known that there exist good codes from balanced products.
Let $X$ be the chain complex constructed in the papers above
\begin{equation}
  X: \F2^{X(0)} \to \F2^{X(1)} \to \F2^{X(2)}.
\end{equation}
It has been shown that the corresponding quantum codes have linear dimension, linear distance, linear energy barrier.\footnote{The property of having linear energy barrier is not stated explicitly in the papers. But as we will see in \Cref{sec:prelim-chain-complex-expansion}, energy barrier can be inferred from the small-set boundary expansion studied in \cite{dinur2022good}.}
\begin{theorem} [\cite{panteleev2021asymptotically,dinur2022good}]
  \label{thm:qLDPC}
  $Q(X)$ provide families of quantum LDPC codes with linear dimension, linear distance, and linear energy barrier, i.e. $k = \Theta(n)$, $d = \Theta(n)$, and $\cE = \Theta(n)$.
\end{theorem}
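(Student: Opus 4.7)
The plan is to treat this theorem as a packaging result: the linear dimension and linear distance halves are already established in the cited papers, so most of the work is just matching notation, while the linear energy barrier claim needs a short bridge from the small-set boundary expansion statements already present in \cite{dinur2022good} (and described in \Cref{sec:prelim-chain-complex-expansion}) to the combinatorial definition of $\cE$ given in \Cref{sec:prelim-codes}.

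First I would identify the chain complex $X$ constructed in \cite{panteleev2021asymptotically,dinur2022good} with the balanced product $\cT(Cay(A,G),C_A)\otimes_G \cT(Cay(G,B),C_B)$ as recalled above, so that the boundary maps $\delta_0,\delta_1$ are exactly the $H_z^T,H_x$ of the associated quantum CSS code $Q(X)$. The LDPC property is then immediate from the bounded degree of the Cayley graph together with the constant size of the inner codes $C_A,C_B$. Next I would quote \cite{panteleev2021asymptotically,dinur2022good} directly for $k=\Theta(n)$ and $d=\Theta(n)$: dimension is computed from the Euler-characteristic style count $\dim\ker\delta_1/\im\delta_0$ using the expansion of the base Cayley graph, and distance is the main technical theorem of those papers.

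The only part that is not purely citational is the energy barrier. The plan is to use the small-set coboundary expansion of $X$ proved in \cite{dinur2022good}: for every $c\in\F2^{X(1)}$ with $|c|\le \eta n$ satisfying $\min_{z\in Z^1}|c-z|=|c|$ (i.e.\ $c$ is locally minimal among its coset in $Z^1/\cdot$), one has $|\delta_1 c|\ge \alpha |c|$ for constants $\alpha,\eta>0$. Combined with small-set boundary expansion on the dual side, one gets the analogous bound $|\partial_1 c|\ge \alpha|c|$ for $|c|\le \eta n$ outside of small-weight stabilizers. Now consider any walk $\gamma_{0\to c_x}$ ending at a nontrivial $X$-logical $c_x$ with $|c_x|\ge d=\Theta(n)$. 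Along the walk the Hamming weight changes by exactly one per step, so there is a first time $t^*$ at which the weight of the reduced representative (modulo stabilizers) hits $\eta n/2$. At that step, the small-set expansion bound applies, giving $\epsilon_x(c_{t^*})=|H_x c_{t^*}|\ge \alpha\cdot \eta n/2 = \Omega(n)$. A symmetric argument gives $\cE_z=\Omega(n)$, so $\cE = \Omega(n)$; the matching upper bound $\cE=O(n)$ is trivial from $\cE\le d$ and from LDPC.

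The main obstacle is cleanly invoking the right form of small-set (co)boundary expansion: the papers state several closely related but not identical expansion properties (local minimality versus robustness of the syndrome, one-sided versus two-sided), and one must check that the version quoted is enough to conclude the energy barrier bound for both $\cE_x$ and $\cE_z$. Once that is pinned down, the reduction from small-set expansion to energy barrier is the standard walk-to-logical argument sketched above and can be made rigorous in a few lines; the argument is precisely the one alluded to in the footnote of the theorem statement.
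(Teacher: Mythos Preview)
Your proposal is correct and follows essentially the same route as the paper: the dimension and distance are quoted directly from \cite{panteleev2021asymptotically,dinur2022good}, and the energy barrier is deduced from small-set (co)boundary expansion by tracking the minimal-weight representative of the current vector along a walk until it reaches the threshold $\Theta(n)$, then applying expansion to lower-bound the syndrome weight --- exactly the sketch in \Cref{sec:prelim-chain-complex-expansion}. One small slip: $\cE\le d$ is not literally true without the LDPC degree factor, but $\cE=O(n)$ follows trivially since the total number of checks is $O(n)$.
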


Furthermore, as discussed above, the elements of the code can be identified with the vertices of the square complex.
This maps is denoted as
\begin{equation} \label{eq:code-to-square}
  I_{\code\to\sq}: \Set(X) \to V(\cG)
\end{equation}
where $\Set(X) = X(0) \cup X(1) \cup X(2)$ is the set of labels for the basis vectors in the three vector spaces
  and $\cG$ is the square complex.
See \Cref{fig:graph-and-code} for an example of the square complex and the balanced product code.

\begin{figure}[H]
  \centering
  \includegraphics[width=0.9\textwidth]{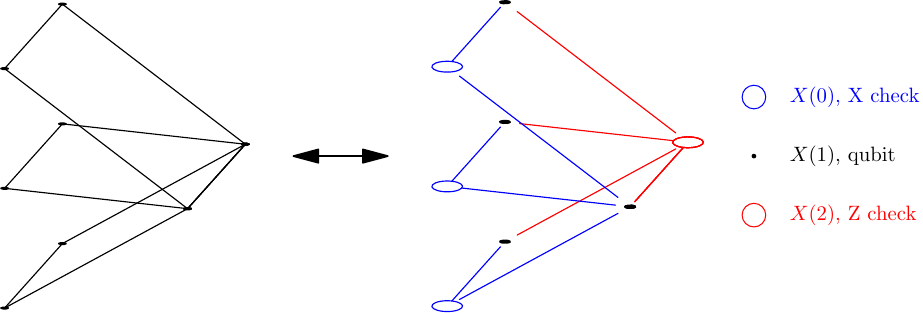}
  \caption{An example of the square complex $\cG$ and the balanced product code $X$.}
  \label{fig:graph-and-code}
\end{figure}

Therefore, to obtain an embedding of the code in $\ZZ^D$,
  it suffices to obtain an embedding of the square complex in $\ZZ^D$.
Unfortunately, $\cG$ is an expander and expanders cannot be embedded locally in $\ZZ^D$.
Nevertheless, the subdivided square complex $\cG_L$ introduced later has a local embedding
  which is how we embed the subdivided code $X_L$ in $\ZZ^D$.
This will be discussed later in \Cref{sec:prelim-random-embedding}.

\subsection{Expansion Properties of Chain Complexes}
\label{sec:prelim-chain-complex-expansion}

In the previous subsection, we claimed that the quantum code $Q(X)$ has linear distance and linear energy barrier.
To discuss these concepts at once, it is helpful to introduce the notion of small-set (co)boundary expansion initially discussed in \cite{hopkins2022explicit} and further enhanced in \cite{dinur2022good}. These expansion properties are natural generalizations of the (co)boundary expansion \cite{linial2006homological}, which will also be used later in the paper.

\begin{definition} [Small-Set (Co)Boundary Expansion]
  We say that $X: \F2^{X(0)} \xrightarrow{\delta_0} \F2^{X(1)} \xrightarrow{\delta_1} \F2^{X(2)}$ is a $(\alpha,\beta,\gamma)$-small-set boundary expander if
  \begin{equation*}
    \forall c_1 \in \F2^{X(1)}, \abs{c_1} \le \alpha |X(1)|:
    \exists c_2 \in \F2^{X(2)}, \beta \abs{c_1 + \partial_2 c_2} \le \abs{\partial_1 c_1}, \gamma \abs{c_2} \le \abs{c_1}.
  \end{equation*}

  Similarly, $X$ is a $(\alpha,\beta,\gamma)$-small-set coboundary expander if
  \begin{equation*}
    \forall c_1 \in \F2^{X(1)}, \abs{c_1} \le \alpha |X(1)|:
    \exists c_0 \in \F2^{X(0)}, \beta \abs{c_1 + \delta_0 c_0} \le \abs{\delta_1 c_1}, \gamma \abs{c_0} \le \abs{c_1}.\footnote{In the standard literature on HDX, it is customary to normalize the vector by the size of the set, i.e. $\norm{c_i} = \abs{c_i} / |X(i)|$.
    The advantage of this convention is that it makes $\beta = 1$ a natural unit. See \cite[App B]{dikstein2023coboundary}.
    While we would like to follow this convention, doing so would introduce additional scaling factors that complicate the equations later on.
    Therefore, we will adhere to this unconventional normalization.} \footnote{We have slightly modified the definition by replacing $\gamma$ with $1/\gamma$ compared to the earlier version of \cite{dinur2022good}. The motivation behind this modification is that in the usual convention, a larger parameter is harder to achieve, whereas in the original definition larger $\gamma$ is easier to achieve. Our new definition aligns with the convention that larger parameters are more challenging to obtain.}
  \end{equation*}
\end{definition}

Roughly speaking, $\alpha$ and $\beta$ correspond to distance and energy barrier, respectively.
More precisely, if the chain complex $X$ has $(\alpha,\beta,\gamma)$-small-set boundary and coboundary expansion, then the corresponding quantum code $Q(X)$ has distance $> \alpha |X(1)| = \alpha n$ and energy barrier $\ge \alpha \beta |X(1)| = \alpha \beta n$.\footnote{Another place where $\beta$ shows up is in the proof of the NLTS conjecture \cite[Property 1]{anshu2022nlts} with $c_1 = \frac{\max(m_x, m_z)}{\beta n}$ and $c_2 = \alpha$. The relation is that, when $|y|_{C_x^\perp} < \alpha n$, small-set (co)boundary expansion applies and one gets $\beta |y|_{C_x^\perp} \le \text{number of violated checks}$. Therefore, when the number of violated checks is small, $\le \delta m_x$, the vector has a small weight proportional to the number of violated checks, $|y|_{C_x^\perp} \le c_1 \delta n$, or has a large weight, $|y|_{C_x^\perp} \ge c_2 n$. This property is also recently used in \cite{gu2024single}.}\footnote{Readers might wonder if $\gamma$ is related to a code property. Indeed, it is related to the soundness of the classical code with parity check matrix $\delta_0$. We will not go into the details, but one can show that if the chain complex has coboundary expansion, then the classical code has soundness $\ge \min(\alpha, \frac{|X(0)|}{|X(1)|}\gamma) = \min(\alpha, \frac{n}{m}\gamma)$ where $m$ is the number of checks.}

Here is a sketch of the bound on the energy barrier.
\begin{proof} [Proof of the bound on energy barrier from small-set coboundary expansion]
  Recall the energy barrier is the minimal number of violated checks needed to go from $0$ to a nonzero codeword.
  Let $\norm{c_1}$ be the minimal weight among the vectors in the same homology as $c_1$, $\norm{c_1} = \min_{c_0 \in \F2^{X(0)}} |c_1 + \delta_0 c_0|$.
  Since a nonzero codeword has weight $\norm{c_1} > \alpha|X(1)|$,
    along the path from $0$ to a nonzero codeword, there has to be a vector of weight $\norm{c_1'} = \alpha|X(1)|$.
  Let $c_1''$ to be the vector of minimal weight in the same homology as $c_1'$.
    Since they are in the same homology, they have the same number of violated checks,
    $|\delta_1 c_1''| = |\delta_1 c_1'|$.
  And because $|c_1''| = \norm{c_1'} = \alpha|X(1)|$,
    small-set coboundary expansion applies
    and we have $\beta |c_1''| \le |\delta_1 c_1''|$.
  Therefore, the energy barrier is at least $\alpha \beta |X(1)|$,
    since we have identified a vector $c_1'$ on the path with a large number of violated checks $|\delta_1 c_1'| = |\delta_1 c_1''| \ge \beta |c_1''| = \alpha \beta |X(1)|$.
\end{proof}

Therefore, \Cref{thm:qLDPC} about distance and energy barrier for $Q(X)$ can be obtained as a corollary of the following theorem, the way it was done in \cite{dinur2022good}.
\begin{theorem}
  $X$ provides a family of bounded degree 3-term chain complexes with $(\alpha,\beta,\gamma)$-small-set boundary expansion and coboundary expansion for certain constants $\alpha,\beta,\gamma$.
\end{theorem}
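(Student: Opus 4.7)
The plan is to mirror the strategy used in \cite{panteleev2021asymptotically,dinur2022good}, adapted to capture the full three-parameter small-set (co)boundary expansion. The key observation is that the balanced product $\cT(Cay(A,G), C_A) \otimes_G \cT(Cay(G,B), C_B)$ has an extremely regular local structure: the link of every vertex in the underlying square complex $Cay^2(A, G, B)$ looks like the complete bipartite graph on $A \sqcup B$, and the restriction of $\delta_0, \delta_1$ to such a link is governed by a tensor code built from $C_A$ and $C_B$ (or their duals). First I would make this local structure explicit by writing $\delta_0, \delta_1$ in terms of local views around each $g \in G$ and identifying the local constraints with a tensor/dual-tensor of $C_A, C_B$.

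Next I would import two standard ingredients. Ingredient one is \emph{robust testability of the tensor code}: if $C_A, C_B$ are chosen to have sufficiently large distance and rate (so that both the codes and their duals have linear distance), then the tensor codes $C_A \otimes C_B$, $C_A \otimes C_B^\perp$, $C_A^\perp \otimes C_B$ are robust in the Ben-Sasson--Sudan sense, so any local vector is close to the tensor code in weight proportional to the number of violated local constraints. Ingredient two is \emph{spectral expansion of the base graphs}: by choosing $G, A, B$ as in \cite{panteleev2021asymptotically,dinur2022good} (e.g., LPS-type Ramanujan generators), both Cayley graphs $Cay(A, G)$ and $Cay(G, B)$ are two-sided spectral expanders with a spectral gap bounded away from $0$, which makes $Cay^2(A,G,B)$ a high-dimensional expander with strong link expansion.

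With these in hand, I would invoke the local-to-global machinery. Given $c_1 \in \F2^{X(1)}$ with $|c_1| \le \alpha |X(1)|$, partition $c_1$ into its contributions at each vertex-link, apply local robustness to obtain a local correction $c_0^{(g)}$ at each $g \in G$ whose weight is controlled by $|c_1|$ and whose residual is controlled by the number of local violations, and glue these into a global $c_0$. The expander mixing lemma (applied via the spectral gap on the base Cayley graphs) is what controls the overlap between local corrections, so that the global residual $|c_1 + \delta_0 c_0|$ is bounded by a constant multiple of $|\delta_1 c_1|$ (giving $\beta$) and the global weight $|c_0|$ is bounded by a constant multiple of $|c_1|$ (giving $\gamma$). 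This establishes small-set coboundary expansion; the small-set boundary expansion statement follows by the same argument applied to the dual chain complex, using the dual tensor codes in place of the primal ones.

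The main obstacle I would expect is the final overlap-control step. Getting cosystolic/coboundary expansion with only a constant loss in weight requires that, when the local corrections $c_0^{(g)}$ are summed, the cancellations and double counting are small; this is exactly where spectral expansion of the Cayley graphs plays the decisive role, and it is quantitatively delicate. In particular, the extra constraint $\gamma |c_0| \le |c_1|$ (beyond the cosystolic expansion of \cite{dinur2022good}) demands that the patching procedure not only produce \emph{some} correction but a small-weight one; verifying this typically needs an iterative decoding analysis that alternates between applying local decoders in the two directions of the product and appealing to the expander mixing lemma to show geometric decrease. The bookkeeping of the three constants $(\alpha,\beta,\gamma)$ through this iteration is the technically heaviest part.
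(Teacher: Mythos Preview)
The paper does not prove this theorem at all: it is stated as a preliminary fact imported from \cite{panteleev2021asymptotically,dinur2022good}, with the surrounding text (``the way it was done in \cite{dinur2022good}'') making clear that the proof lives in those references. So there is no ``paper's own proof'' to compare against; the theorem is a black box here.

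That said, your sketch is a fair high-level summary of how the cited works actually establish the result: robust testability of the local (dual-)tensor code at each vertex-link, spectral expansion of the underlying Cayley graphs, and a local-to-global/iterative decoding argument that uses expander mixing to control overcounting when the local corrections are glued. Two minor corrections. First, the link of a vertex in $Cay^2(A,G,B)$ is not literally the complete bipartite graph on $A \sqcup B$; what matters is that the local view around each $g$ carries the product structure $A \times B$, so the local constraints are governed by a (dual) tensor code. Second, the $\gamma$-control ($\gamma|c_0| \le |c_1|$) is not really an additional obstacle beyond what \cite{dinur2022good} already do: their decoding argument produces $c_0$ as a sum of local corrections each of weight bounded by the local piece of $c_1$, so the global weight bound falls out of the same bookkeeping that gives $\beta$. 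The genuinely delicate part is, as you say, showing that the residual shrinks geometrically under iteration, which is where the robustness constant and the spectral gap must be balanced against one another.
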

Here, bounded degree means LDPC, i.e. $\delta_0$ and $\delta_1$ have bounded number of nonzero entries in each column and row.

In the special case when $\alpha = 1$, $|c_1| \le \alpha |X(1)| = |X(1)|$ imposes no constraints on $c_1$ and this is known as the boundary expansion.
\begin{definition} [(Co)Boundary Expansion] \label{def:standard-coboundary-expansion}
  We say that $X: \F2^{X(0)} \xrightarrow{\delta_0} \F2^{X(1)} \xrightarrow{\delta_1} \F2^{X(2)}$ is a $\beta$-boundary expander if
  \begin{equation*}
    \forall c_1 \in \F2^{X(1)}:
    \exists c_2 \in \F2^{X(2)}, \beta \abs{c_1 + \partial_2 c_2} \le \abs{\partial_1 c_1}.
  \end{equation*}

  Similarly, $X$ is a $\beta$-coboundary expander if
  \begin{equation*}
    \forall c_1 \in \F2^{X(1)}:
    \exists c_0 \in \F2^{X(0)}, \beta \abs{c_1 + \delta_0 c_0} \le \abs{\delta_1 c_1}.
  \end{equation*}
\end{definition}

Note that when $X$ is a boundary expander $X$ has to be exact.
Suppose $c_1 \in Z_1 = \Ker \partial_1$, i.e. $\partial_1 c_1 = 0$, the inequality $0 = \abs{\partial_1 c_1} \ge \beta \abs{c_1 + \partial_2 c_2}$ implies $c_1 + \partial_2 c_2 = 0$, i.e. $c_1 = \partial_2 c_2 \in B_1$.
This means $B_1 = Z_1$, i.e. exact.

To fit our proof method, we will later extend the definition of coboundary expansion in \Cref{def:coboundary-expansion}.

\subsection{Subdivision of a Square Complex and its Random Embedding in $\ZZ^D$}
\label{sec:prelim-random-embedding}

The overall strategy to construct our geometrically local codes is by composing two maps, one from the code to the square complex, $I_{\code\to\sq}$, the other from the square complex to the Euclidean space $\ZZ^D$, $I_{\sq\to\euclid}$.
We discussed aspects of $I_{\code\to\sq}$ in \Cref{sec:prelim-good-qldpc}.
This section focuses on $I_{\sq\to\euclid}$.
The strategy of the embedding follows from \cite{gromov2012generalizations,portnoy2023local}.

Because the square complex is an expander graph which does not have a geometrically local embedding,
  we have to subdivide it which then has a local embedding.

\begin{definition} [Subdivide a Square Complex]
  Given a square complex $\cG = (V, E, F)$,
  the $L$-subdivision $\cG_L = (V_L, E_L, F_L)$ is the square complex where
  each face is subdivided into $L \times L$ squares
  and the edges and vertices are included accordingly.
\end{definition}
Two examples are illustrated in \Cref{fig:subdivide-square}.

\begin{figure}[H]
  \centering
  \includegraphics[width=0.65\textwidth]{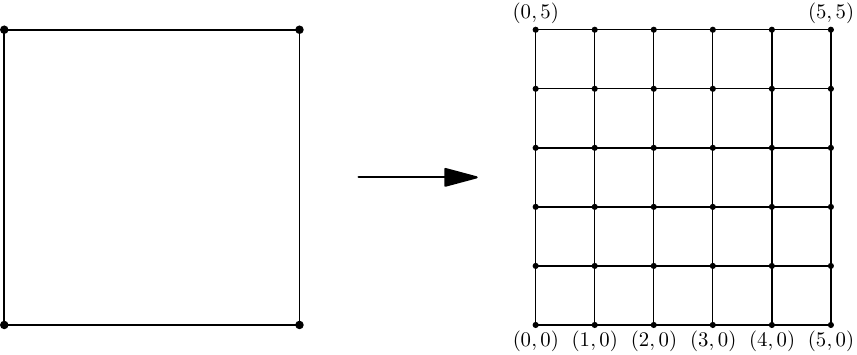} \\
  \vspace{3em}
  \includegraphics[width=0.8\textwidth]{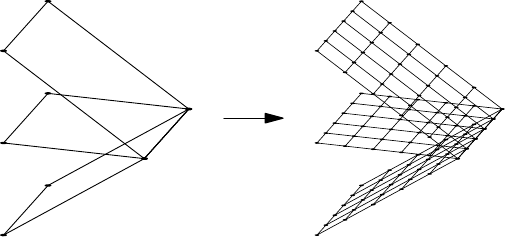}
  \caption{The subdivision of square complexes.}
  \label{fig:subdivide-square}
\end{figure}

For the ease of the subdivided code discussed later in \Cref{sec:construct-subdivided-code},
  we assign a coordinate to each vertex in $\cG_L$.
Each vertex has a coordinate $(i,j)$ for some $0 \le i, j \le L$,
  depending on its location in the square.
Recall that because $\cG$ is the balanced product of two bipartite graphs,
  the vertices of $\cG$ can be split into 4 groups,
  $V(\cG) = V_{00} \cup V_{10} \cup V_{01} \cup V_{11}$.
We assign the coordinate of the vertices in $V_{00}, V_{10}, V_{01}, V_{11}$ as $(0,0), (L,0), (0,L), (L,L)$, respectively.
Other vertices in $\cG_L$ belong to some face $(v_{00}, v_{10}, v_{01}, v_{11}) \in F$ and have a corresponding coordinate $(i,j)$.

It is known that for large enough $L$, the subdivided complex $\cG_L$ has a local embedding with bounded density.
\begin{theorem} [Modified {\cite[Theorem 7]{portnoy2023local}}]
  \label{thm:embedding-old}
  Given a bounded degree square complex $\cG$,
  the $L$-subdivision $\cG_L$ has an embedding $f: V(\cG_L) \to \ZZ^D$ with constant $a, b = \Theta(1)$ such that
  \begin{enumerate}
    \item Geometrically local: For all adjacent vertices on the complex $(v_0, v_1) \in E(\cG_L)$,
      the distance of corresponding points in $\ZZ^D$ is bounded $|f(v_0) - f(v_1)| \le a$.
    \item Bounded density: The number of vertices at each point is bounded,
      i.e. for all $x \in \ZZ^D$, $|f^{-1}(x)| \le b$.
  \end{enumerate}
  for $L = |V(\cG)|^{\frac{1}{D-2}} \log^{D+1} |V(\cG)|$.
\end{theorem}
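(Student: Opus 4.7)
The plan is to establish the embedding by a randomized three-stage construction adapted from Gromov--Guth \cite{gromov2012generalizations} and Portnoy \cite{portnoy2023local}. The key point is that the $L$-subdivision supplies $L-1$ extra vertices along each edge of $\cG$ and an $(L-1)\times(L-1)$ grid of interior vertices in each face. This geometric slack is exactly what is needed to absorb a random placement of the $|V(\cG)|$ original vertices inside a cube of side length $L$: long-range edges of an expander can be realized by local lattice paths of length $L$, and each face can be drawn as a two-dimensional lattice surface of area $\Theta(L^2)$.

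In Stage 1 I would embed each vertex $v \in V(\cG)$ at a point $f(v)$ chosen independently and uniformly at random in $[0,L]^D \cap \ZZ^D$, so that all pairs of images are within Euclidean distance $\sqrt{D}\,L$. In Stage 2, for each edge $(u,v)$ of $\cG$, I would route a lattice path of $L+1$ points from $f(u)$ to $f(v)$ with all consecutive distances bounded by some constant $a$; this is possible because the number of subdivision vertices along the edge exceeds the Euclidean distance between its endpoints, leaving room to interpolate. In Stage 3, for each face of $\cG$ with corners $u_{00},u_{10},u_{01},u_{11}$, once the four boundary paths have been placed, I would fill in the $(L-1)^2$ interior subdivision vertices by bilinear interpolation of the four boundary paths, rounded to the lattice. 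This yields a two-dimensional lattice surface of $\Theta(L^2)$ points inside a ball of radius $O(L)$ around the corners, with adjacent subdivision vertices at distance $O(a)$ by construction.

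It then remains to verify bounded density. The total number of subdivision vertices over all edges and faces of $\cG$ is $O(|V(\cG)|\,L^2)$ by bounded degree, distributed across a box of volume $L^D$. The expected number of subdivision vertices at any fixed lattice site is therefore $O(|V(\cG)|/L^{D-2})$, which is $O(\log^{-(D+1)(D-2)}|V(\cG)|)$ for the chosen $L$. A Chernoff concentration argument combined with a union bound over the $O(L^D)$ lattice sites should upgrade this average bound to a uniform pointwise bound of $O(1)$ with high probability, at which point a successful sample of the randomness provides the desired embedding with $b = \Theta(1)$.

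The main obstacle is precisely this concentration step, because the subdivision vertices generated from faces sharing a common vertex of $\cG$ are strongly correlated: every face incident to a vertex $v$ anchors a $\Theta(L^2)$-area surface near the single random point $f(v)$, so the naive independence across subdivision vertices fails. The strategy to handle this is to reorganize the contribution at each lattice site $x$ as a sum indexed by the $O(1)$ nearest vertices of $\cG$ whose random images could possibly place a subdivision vertex at $x$, and then to use the genuine independence of the random placements of distinct vertices of $\cG$ together with bounded degree to obtain the required Chernoff tail. The polylog factor $\log^{D+1}|V(\cG)|$ in $L$ is chosen precisely so that this Chernoff tail, after amplifying the mean by a $\polylog$ factor, comfortably beats the $L^D$ union bound, which is where the specific $(D+1)$-th power of the logarithm enters.
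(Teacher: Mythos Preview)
Your overall architecture (random placement of $V(\cG)$, interpolate edges and faces, then argue density via Chernoff plus union bound) matches the paper's summary in spirit, but the concentration step contains a genuine gap, and it is exactly the step where the $\log^{D+1}$ factor enters.

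You compute the expected load at a fixed lattice site as $O(V/L^{D-2}) = O(\log^{-(D+1)(D-2)} V)$ and then assert that Chernoff plus a union bound over $L^D$ sites yields a pointwise $O(1)$ bound. This does not follow. For a sum of many independent indicators with total mean $\mu$, the upper tail at a fixed level $t$ satisfies $\Pr[X\ge t]\le (e\mu/t)^t$. With $\mu=1/\polylog V$ and $t$ a constant, this is only $(\log V)^{-\Theta(1)}$, which cannot survive a union bound over $\mathrm{poly}(V)$ sites. To beat such a union bound you still need $t=\Omega(\log V/\log\log V)$, regardless of how small $\mu$ is. In other words, shrinking the mean by inflating $L$ does not buy you constant density; Chernoff on its own only gets you density $O(\log V)$. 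Your ``reorganization'' fix is also not well-posed: since every vertex of $\cG$ is placed uniformly in the whole cube, there is no a priori $O(1)$ set of ``nearest vertices of $\cG$'' that could contribute at a given site $x$; all of them can.

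The paper's route is different in precisely this place. One first takes $L'=V^{1/(D-2)}$ with no polylog, places the vertices randomly on a $(D-1)$-sphere of radius $L'$ (not a cube), interpolates, and uses Chernoff plus union bound to get density $O(\log V)$ at scale $L'$. Only then does the polylog enter: one performs an additional deterministic perturbation/dilation step, further subdividing by a $\polylog V$ factor so that the $O(\log V)$ vertices crowding each unit cell can be spread out to genuinely bounded density. The $\log^{D+1}V$ in $L$ records the cost of this second step, not an attempt to drive the Chernoff mean to zero. If you want to repair your outline, replace the final paragraph by this two-stage argument.
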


We summarize the construction and the proof of the theorem.
The embedding is obtained by mapping the vertices in $V(\cG)$ randomly to the sphere of dimension $D-1$ with radius $L' = V^{\frac{1}{D-2}}$ where $V = |V(\cG)|$ is the number of vertices.
We then interpolate the points to obtain an embedding for the subdivided square complex with subdivision $L'$.
The image is currently in $\RR^D$, so we nudge the vertices slightly so that the image becomes $\ZZ^D$.
(We actually need an extra perturbation of the points which will be discussed shortly. This is where the factor $\log^{D+1} V$ appears in $L$.)

The construction automatically satisfies locality,
  so it suffices to check that the embedding has bounded density.
The idea is to take the union bound over the bad events
  where some unit ball contains too many vertices.
The expected number of vertices in a ball is roughly the volume of the square complex with thickness $1$ divided by the volume of the ball of radius $L'$.
  The volume of the square complex is $\Theta({L'}^2 V)$
  and the volume of the ball of radius $L'$ is $\Theta({L'}^D)$.
Therefore, with the choice $L' = V^{\frac{1}{D-2}}$, the expected number of vertices in a ball is constant.
Using Chernoff bound, the probability that the ball has $\Omega(\log V)$ vertices is then small enough to apply the union bound.
However, $\Omega(\log V)$ is not bounded density, so we have to do an extra perturbation and this is why the polylog factor appears in $L$.
The full detail can be found in \cite[App 1]{portnoy2023local}.

We will improve this embedding result in an upcoming work which is optimal, removing the polylog factor, and providing an explicit embedding without randomness.
\begin{theorem} [Upcoming]
  \label{thm:embedding}
  Given a bounded degree square complex $\cG$,
  the $L$-subdivision $\cG_L$ has an embedding $f: V(\cG_L) \to \ZZ^D$ with constant $a, b = \Theta(1)$ such that
  \begin{enumerate}
    \item Geometrically local: For all adjacent vertices on the complex $(v_0, v_1) \in E(\cG_L)$,
      the distance of corresponding points in $\ZZ^D$ is bounded $|f(v_0) - f(v_1)| \le a$.
    \item Bounded density: The number of vertices at each point is bounded,
      i.e. for all $x \in \ZZ^D$, $|f^{-1}(x)| \le b$.
  \end{enumerate}
  for $L = \Theta(|V(\cG)|^{\frac{1}{D-2}})$
    where the constant depends only on the degree of the complex and the dimension $D$.
\end{theorem}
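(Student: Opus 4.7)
The plan is to construct an explicit, deterministic embedding of $\cG_L$ into $\ZZ^D$ with subdivision parameter $L = \Theta(V^{1/(D-2)})$, where $V = |V(\cG)|$, thereby removing the $\log^{D+1} V$ slack of \Cref{thm:embedding-old}. I would follow Portnoy's three-stage template: place the vertices of $\cG$ on a $(D-1)$-dimensional surface of diameter $\Theta(L)$ inside $\ZZ^D$, route each edge of $\cG$ as a lattice path of length $O(L)$, extend each face of $\cG$ to its $L \times L$ subdivision spanned by the four routed edges, and verify both geometric locality and bounded density. The two sources of improvement are (i) a structured vertex placement with constant, rather than Chernoff-bounded, discrepancy, and (ii) a canonical deterministic routing of edges whose load at any lattice point can be bounded combinatorially.

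For the placement, I would embed $V(\cG)$ as an explicit well-distributed point set on the boundary of a $D$-dimensional box of side $\Theta(L)$ — for example, via a grid quotient, a Hammersley-type low-discrepancy sequence, or a balanced hash — chosen so that every ball of constant radius contains at most a constant number of images, with the constant depending only on $D$ and the degree of $\cG$. For the routing, each pair of adjacent vertices $f(u), f(v)$ is connected by a canonical monotone lattice path (resolving coordinate differences in a fixed order), and each face of $\cG$ is tiled by interpolating its four routed edges with an $L \times L$ product grid. Because $\cG$ has bounded degree, only $O(1)$ such paths emanate from any $f(v)$; because each canonical path has length $O(L)$, the total combined length is $O(VL) = O(L^{D-1})$, which matches the volume of the surface they occupy.

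The main obstacle is the density bound: showing that only $O(1)$ subdivided vertices, edges, and faces land on any single lattice point in $\ZZ^D$. In the random construction this follows from a Chernoff-plus-union-bound argument that inflates the count by $O(\log V)$, and replacing this with a constant requires a direct counting argument tailored to the chosen point set and routing. Concretely, for each lattice point $x$ one must bound the number of endpoint pairs in the point set whose canonical monotone path threads $x$, which is a purely combinatorial question about axis-parallel lines through $x$ meeting the embedded point set. Extending this bound from edges to the $L \times L$ face subdivisions, and separately handling the high-load regions near each $f(v)$ where bounded-degree many paths converge, constitutes the core technical content that the forthcoming paper must supply. Once in place, the remaining book-keeping — indexing the vertices, edges, and faces of $\cG_L$ and verifying geometric locality $a = \Theta(1)$ — proceeds as in \cite[App 1]{portnoy2023local}.
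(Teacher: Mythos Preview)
The paper does not contain a proof of this theorem. It is explicitly labeled ``[Upcoming]'' and the surrounding text states that ``The detailed construction of this embedding will be presented in a forthcoming paper'' (see also \Cref{remark:no-polylog}). The only information the paper supplies beyond the statement itself is that the intended construction is explicit and does not rely on randomness, in contrast to the random sphere-embedding of \Cref{thm:embedding-old}. So there is no paper proof to compare your proposal against.

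That said, your outline is consistent with the few hints the paper gives: you propose a deterministic placement and deterministic routing, which matches ``explicit embedding without randomness,'' and you correctly identify that the crux is replacing the Chernoff/union-bound density argument (which costs the polylog) by a direct combinatorial counting argument. Your proposal is honest about where the real work lies --- bounding the load at a lattice point from the canonical routes of edges and the $L\times L$ face interpolations --- and correctly flags this as the content the forthcoming paper must supply. Whether a Hammersley-type placement on a box boundary combined with coordinate-monotone routing actually yields $O(1)$ load is not something the present paper lets us verify; it is plausible but nontrivial, particularly the extension from edge-routing to face-tiling, and your sketch does not yet contain an argument for it. In short: your plan is a reasonable guess at the shape of the missing proof, but since the paper defers the proof entirely, there is nothing here to confirm or contradict it.
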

We denote this embedding $f$ as $I_{\sq_L \to \euclid}$ which will be used to embed the subdivided code into $\ZZ^D$.

\subsection{Surface Codes}
\label{sec:prelim-surface-codes}

The surface code \cite{kitaev2003fault,eczoo_surface}, also known as the toric code, is a CSS code which is geometrically local and has a planar layout, making it highly suitable for practical quantum computing.
In our context, the surface code is the simplest example of the subdivided code that we will define later.
Furthermore, many properties of the subdivided code boils down to the properties of the surface code (or more precisely, the generalized surface code), so we give a brief review of the surface code.

Given a subdivided square of size $L \times L$,
  we can arrange the qubits and the checks based on whether the coordinate $(i,j)$ is even or odd as illustrated in \Cref{fig:surface-code-intro}.
It is known that the surface code has parameters $[[n=(\frac{L}{2}+1)^2+(\frac{L}{2})^2, k=1, d=\frac{L}{2}+1]]$.\footnote{Note that the layout of the qubits and the checks for the subdivided code will look slightly different from the layout for the surface code. In particular, $L$ will be odd for the subdivided code, in contrast to the case here where $L$ is even.}

However, knowing the distance of the surface code is not enough for the proof we will present later.
What we will need is the coboundary expansion properties similar to those defined in \Cref{sec:prelim-chain-complex-expansion}.
Furthermore, we will have to work with codes that are more general than the surface codes which we call the generalized surface codes.
Roughly speaking, these are codes with 2D structures, just like the surface codes, but with extra branching points, similar to the square complexes.
The generalized surface codes and their expansion properties will be discussed in \Cref{sec:prop-surface-code}.
\begin{figure}[H]
  \centering
  \includegraphics[width=0.9\textwidth]{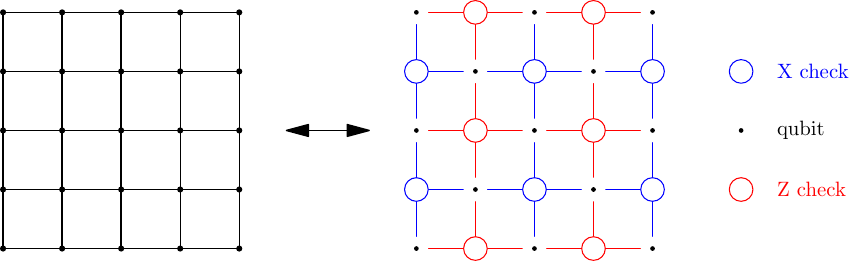}
  \caption{Surface code of size $L=4$.}
  \label{fig:surface-code-intro}
\end{figure}

\section{Construct Geometrically Local Codes through Subdivision}
\label{sec:construct-subdivided-code}

Equipped with the knowledge that the balanced product code $X$ has a square complex structure $\cG$ as discussed in \Cref{sec:prelim-good-qldpc},
  we now use this structure to construct the subdivided code $X_L$
  which has a natural embedding $I_{\code_L\to\sq_L}: \Set(X_L) \to V(\cG_L)$.
After composing with the map $I_{\sq_L\to\euclid}$ in \Cref{sec:prelim-random-embedding},
  this gives the desired local embedding of the subdivided code $I_{\code_L\to\euclid}: \Set(X_L) \to \ZZ^D$.

Take an odd number $L$.
We split the vertices of $V(\cG_L)$ into three groups $X_L(0), X_L(1), X_L(2)$ which will be the basis for $X_L$.
Recall that each vertex in $V(\cG_L)$ has a corresponding coordinate $(i,j)$.
We now define
\begin{itemize}
  \item $X_L(0)$ to be the set of vertices with $i,j$ both being even.
  \item $X_L(1)$ to be the set of vertices with $i,j$ one being even, one being odd.
  \item $X_L(2)$ to be the set of vertices with $i,j$ both being odd.
\end{itemize}

The linear maps are defined by the adjacency matrices,
  where $X_L(0) \xrightarrow{\delta_0} X_L(1)$ is the adjacency matrix between $X_L(0)$ and $X_L(1)$
  and $X_L(1) \xrightarrow{\delta_1} X_L(2)$ is the adjacency matrix between $X_L(1)$ and $X_L(2)$.
One can check that this defines a chain complex.
Two examples are illustrated in \Cref{fig:subdivide-1,fig:subdivide-2}.

\begin{remark}
  In the main text, we describe the construction as a subdivision process.
  An alternative way to understand the construction is through welding \cite{michnicki20143d}.
  Welding is a process that glues multiple surface codes into a larger code.
  Our construction is essentially multiple copies of the surface codes on the faces of the balanced product graph
    welded along the edges.
  The extra ingredient compared to \cite{michnicki20143d}
    is that our welding is built on top of the good qLDPC codes
    which have better code properties.
\end{remark}

\begin{figure}[H]
  \centering
  \includegraphics[width=0.65\textwidth]{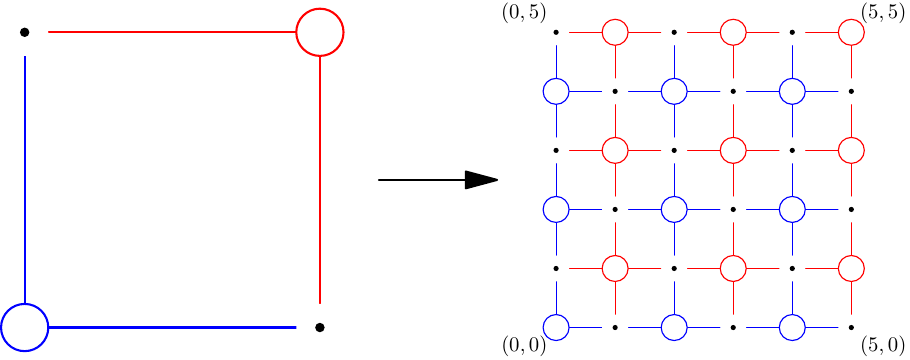}
  \caption{The subdivision of a balanced product code.}
  \label{fig:subdivide-1}
\end{figure}

\begin{figure}[H]
  \centering
  \includegraphics[width=0.65\textwidth]{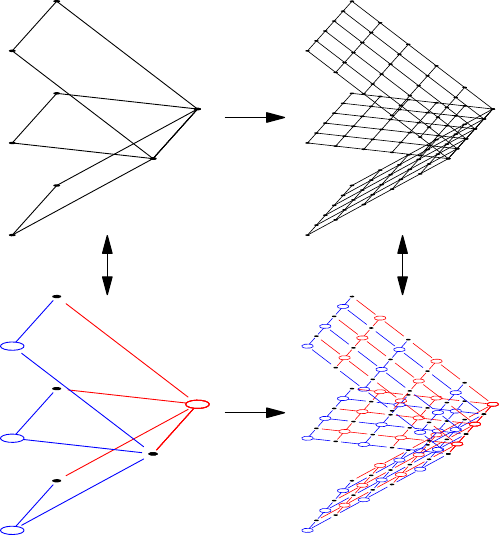}
  \caption{The subdivision of a balanced product code together with the subdivision of the associated square complex.}
  \label{fig:subdivide-2}
\end{figure}

\subsection{Full Construction of the Geometrically Local Quantum Codes}
\label{sec:construct-full-code}

We now have all the ingredients needed to describe the full construction of the geometrically local codes.
Recall from \Cref{sec:prelim-codes} that to describe a geometrically local code in $D$ dimension, we need to specify the parity-check matrices of the code together with an embedding of the qubits and the checks in $\ZZ^D$.

We first describe the parity-check matrices.
Let $X$ be the chain complex from \cite{panteleev2021asymptotically,dinur2022good} that has small-set boundary and coboundary expansion.
Let $X_L$ be the chain complex after $L$-subdivision defined in the previous subsection.
We will set the value of $L$ in a moment.
The quantum code $Q(X_L)$ is defined to have parity-check matrices $H_z = \delta_0^T$ and $H_x = \delta_1$.

For the embedding, with the appropriate choice of $L = \Theta(|V(\cG)|^{\frac{1}{D-2}})$ from \Cref{thm:embedding}, we have the embedding map $I_{\sq_L\to\euclid}: V(\cG_L) \to \ZZ^D$ that is geometrically local and has bounded density.
Here, $\cG$ is the square complex associated to $X$
  and $\cG_L$ is the subdivided square complex.
By composing the isomorphism $I_{\code_L\to\sq_L}: \Set(X_L) \xrightarrow{\sim} V(\cG_L)$ and $I_{\sq_L\to\euclid}$
  we have the embedding
\begin{equation}
  I_{\code_L\to\euclid} = I_{\sq_L\to\euclid} \,\circ\, I_{\code_L\to\sq_L}: \Set(X_L) \to \ZZ^D.
\end{equation}
The embedding $I_{\code_L\to\euclid}$ induces the embedding of the codes simply because $\Set(Q(X_L)) = \Set(X_L)$.
This completes the full construction of the geometrically local codes.

We discuss some technical aspect of the code.
In our discussion, we will represent the max degree of the bipartite graphs used for the balanced product as $\Delta_{\max}$
  and the min degree as $\Delta_{\min}$.
(These bipartite graphs $G_1, G_2$ are discussed in \Cref{sec:prelim-balanced-product-graph}.)
For the sake of making the parameters look simpler, we will use the fact that $\Delta_{\min} \ge 2$, which can be chosen for all the good qLDPC codes above.

\subsection{Subcomplexes of the Subdivided Code}
\label{sec:subcomplex}
The analysis of the subdivided code generally contains two parts.
One is about the original code which has been worked out in earlier works.
The other is about the subcomplexes that will be discussed in this subsection.

Consider the connected components of $\cG_L$ when removing all edges of the form $\set{(i,L-1), (i,L)}$ and $\set{(L-1,j), (L,j)}$.
We observe that each component contains exactly one of the corner vertices.
Additionally, these corner vertices correspond to the vertices in $\cG$.
This induces a bijection between the vertices in $\cG$ and the connected components in $\cG_L$.

To be formal, we split the vertices $V(\cG_L)$ into 3 groups $S, T, U$ where
\begin{itemize}
  \item $S$ contains vertices with coordinates $0 \le i, j \le L-1$.
  \item $T$ contains vertices with coordinates $i = L, 0 \le j \le L-1$ or $0 \le i \le L-1, j = L$.
  \item $U$ contains vertices with coordinates $i = j = L$.
\end{itemize}
We see that $S, T, U$ are not fully connected
  and let $\cS, \cT, \cU$ be the set of connected components of $S, T, U$ respectively.
Since each connected components contains exactly one of the corner vertices, we have the bijections $X(0) \cong \cS$, $X(1) \cong \cT$, $X(2) \cong \cU$.
See \Cref{fig:proof-regions}.

\begin{figure}[H]
  \centering
  \includegraphics[width=0.99\linewidth]{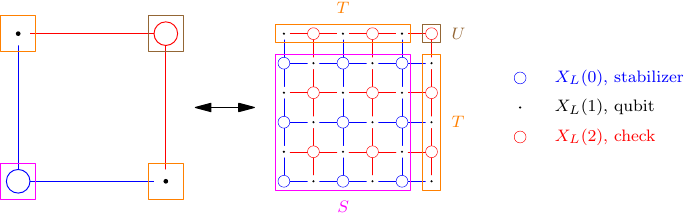}
  \vspace{1em}
  \caption{Regions $S$, $T$, $U$ and the bijections $X(0) \cong \cS$, $X(1) \cong \cT$, $X(2) \cong \cU$.}
  \label{fig:proof-regions}
\end{figure}

Notice that each component forms a chain complex.
Let $S_i \in \cS$ be one of the components.
Then the restriction $X_L|_{S_i}: \F2^{S_i \cap X_L(0)} \to \F2^{S_i \cap X_L(1)} \to \F2^{S_i \cap X_L(2)}$ is a chain complex. (See \Cref{fig:proof-regions}.)
One can similarly study the linear map (technically also a chain complex) $X_L|_{T_i}: \F2^{T_i \cap X_L(1)} \to \F2^{T_i \cap X_L(2)}$ for $T_i \in \cT$.
(Note that $T_i \cap X_L(0) = 0$ and $X_L|_{U_i}$ is just a vertex.)
These chain complexes (subcomplexes) will be essential for the analysis of the subdivided code.
Since $X_L|_{S_i}$ and $X_L|_{T_i}$ share similarities to the surface code and the repetition code,
  we call them the generalized surface code and the generalized repetition code
  which are depicted in \Cref{fig:generalized-code}.

We discuss a few features of these subcomplexes.
More of their properties will be studied in \Cref{sec:prop-surface-code}.
Notice that the generalized surface code has a product structure.
It is the tensor product of two generalized repetition codes.
This property comes from the fact that the balanced product code is locally a product.
The generalized repetition code is like a tree
  with a unique branching point.
Notice that the number of branches
  corresponds to the degree of some vertex in the bipartite graphs $G_1, G_2$.
As reminder, this number is between $\Delta_{\min} = 2$ and $\Delta_{\max} = \Theta(1)$.

\begin{figure}[H]
  \centering
  \includegraphics[width=0.8\linewidth]{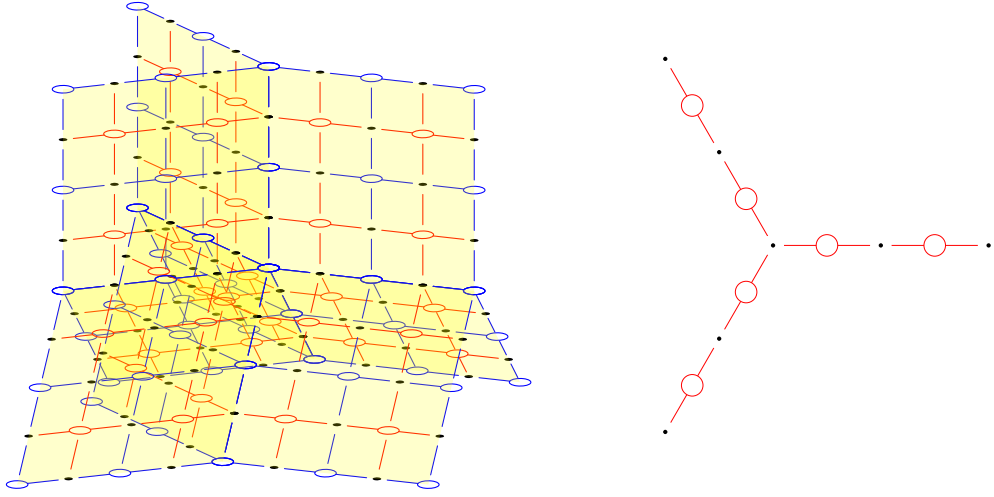}
  \vspace{1em}
  \caption{Generalized surface code and generalized repetition code.}
  \label{fig:generalized-code}
\end{figure}

We now define the chain maps $\cF_i: \F2^{X(i)} \to \F2^{X_L(i)}$ for $i=0,1,2$ that will be used repeatedly in the analysis later.
\begin{itemize}
  \item Given $\wt c_0 \in \F2^{X(0)}$, we define $\cF_0(\wt c_0) \in \F2^{X_L(0)}$
    by repeating the value $\wt c_0(v)$ at the corresponding component $S_i$ of $v$ for each $v \in X(0)$.
  \item Given $\wt c_1 \subset \F2^{X(1)}$, we define $\cF_1(\wt c_1) \in \F2^{X_L(1)}$
    by repeating the value $\wt c_1(v)$ at the corresponding component $T_i$ of $v$ for each $v \in X(1)$
    and set other values in $S$ to be $0$.
  \item Given $\wt c_2 \subset \F2^{X(2)}$, we define $\cF_2(\wt c_2) \in \F2^{X_L(2)}$
    by setting the value $\wt c_2(v)$ at the corresponding vertex $U_i$ of $v$ for each $v \in X(2)$
    and set other values in $S$ and $T$ to be $0$.
\end{itemize}
Notice that these maps commutes with $\delta$.
In particular,
\begin{equation}\label{eq:cF}
  \begin{tikzcd}
    \F2^{X(0)} \arrow[r, "\delta_0"] \arrow[d, "\cF_0"] &
    \F2^{X(1)} \arrow[r, "\delta_1"] \arrow[d, "\cF_1"] &
    \F2^{X(2)} \arrow[d, "\cF_2"] \\
    \F2^{X_L(0)} \arrow[r, "\delta_0"] & \F2^{X_L(1)} \arrow[r, "\delta_1"] & \F2^{X_L(2)}
  \end{tikzcd}
\end{equation}
Many of the later proofs will be moving back and forth between the two chain complexes through $\cF$.
Notice that $\cF_i$ are injective because by construction they are repeating values.

Applying the discussions above, we can show the following bounds on $X_L(i)$.
\begin{claim} \label{claim:X-bound}
  \begin{equation}
    L^2 |X(i)| \le |X_L(i)| \le \frac{\Delta_{\max}^2}{4} L^2 |X(i)|
  \end{equation}
  for all $i=0,1,2$.
  In particular, $|X_L(i)| = \Theta(L^2 |X(i)|)$.
\end{claim}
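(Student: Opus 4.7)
The plan is to compute each $|X_L(i)|$ directly by partitioning $V(\cG_L)$ into three buckets inherited from $\cG$, doing a parity count inside each bucket, and then translating to code data via the balanced product identifications $|V_{00}| = |X(0)|$, $|V_{10}| + |V_{01}| = |X(1)|$, $|V_{11}| = |X(2)|$ together with the degree bounds of the underlying factor graphs.

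Concretely, I would assign every grid point of $\cG_L$ to exactly one of: a corner of $\cG$ (one point each), the interior of a subdivided edge of $\cG$ ($L-1$ points each), or the interior of a subdivided face of $\cG$ ($(L-1)^2$ points each). Since $L$ is odd, a direct parity count shows that each face interior contributes $((L-1)/2)^2$ both-even, $((L-1)/2)^2$ both-odd, and $(L-1)^2/2$ mixed grid points; a subdivided edge of type $(V_{00},V_{10})$ or $(V_{00},V_{01})$ contributes $(L-1)/2$ both-even and $(L-1)/2$ mixed points; edges of type $(V_{10},V_{11})$ or $(V_{01},V_{11})$ contribute $(L-1)/2$ both-odd and $(L-1)/2$ mixed; and each corner lies in $X_L(0)$, $X_L(1)$, or $X_L(2)$ according as it belongs to $V_{00}$, $V_{10} \cup V_{01}$, or $V_{11}$. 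Summing these contributions produces, for instance,
\begin{equation*}
  |X_L(0)| = |V_{00}| + \tfrac{L-1}{2}\bigl(|E_{*0}| + |E_{0*}|\bigr) + \bigl(\tfrac{L-1}{2}\bigr)^2 |F|,
\end{equation*}
and analogous closed-form expressions for $|X_L(1)|$ and $|X_L(2)|$.

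To conclude, I would use that in the balanced product each $V_{00}$ corner is incident to between $\Delta_{\min}$ and $\Delta_{\max}$ edges of each type and between $\Delta_{\min}^2$ and $\Delta_{\max}^2$ faces, which sandwiches $|X_L(0)|$ between $|V_{00}|\bigl(1 + \tfrac{L-1}{2}\Delta_{\min}\bigr)^2$ and $|V_{00}|\bigl(1 + \tfrac{L-1}{2}\Delta_{\max}\bigr)^2$. The lower bound $L^2 |X(0)|$ then follows from $\Delta_{\min} \ge 2$, while the upper bound $(\Delta_{\max}^2/4) L^2 |X(0)|$ follows from the elementary inequality $1 + \tfrac{L-1}{2}\Delta \le \tfrac{L}{2}\Delta$ valid whenever $\Delta \ge 2$, applied with $\Delta = \Delta_{\max}$. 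The cases $i=1,2$ are handled by the same algebra, noting that the extra factor of two coming from $|X(1)| = 2|V_{00}|$ cancels consistently on both sides of the inequality. The only real source of difficulty is the combinatorial bookkeeping in the three-bucket decomposition — one must take care to attach each shared boundary vertex to exactly one bucket so that every grid point is counted exactly once — after which the bounds reduce to the algebraic check above that crucially exploits $\Delta_{\min} \ge 2$.
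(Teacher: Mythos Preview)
Your proposal is correct and essentially the same as the paper's. The paper groups $X_L(0)$ directly into the connected components $S_i \in \cS$ (one per $v \in X(0)$) and observes that each contains exactly $(\Delta_1\tfrac{L-1}{2}+1)(\Delta_2\tfrac{L-1}{2}+1)$ points of $X_L(0)$, then invokes the $4$-partite symmetry for $i=1,2$; your corner/edge-interior/face-interior bucketing followed by regrouping at each $V_{00}$ vertex arrives at the identical factored count and the same numerical bounds via $\Delta_{\min}\ge 2$ and $1+\tfrac{L-1}{2}\Delta_{\max}\le \tfrac{L}{2}\Delta_{\max}$.
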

\begin{proof}
  We show the case for $i=0$.
  Other cases follow similarly by noticing the symmetry that
  $\cG$ is a 4-partite graph
    and $X(0), X(1), X(2)$ contain $1, 2, 1$ parts, respectively.

  To proof utilize the isomorphism $X(0) \cong \cS$.
  Notice that each connected component $S_i \in \cS$ contains $(\Delta_1 \frac{L-1}{2} + 1)(\Delta_2 \frac{L-1}{2} + 1)$ vertices in $X_L(0)$
    where $\Delta_1$ and $\Delta_2$ are the degrees of branchings.
    (One way to see this is that $S_i$ has a product structure and the each piece contains $\Delta_j \frac{L-1}{2} + 1$ vertices.)
  Since $2 \le \Delta_1, \Delta_2 \le \Delta_{\max}$,
    the number of vertices in each component is between $L^2$ and $(\Delta_{\max} \frac{L-1}{2} + 1)^2 \le \frac{\Delta_{\max}^2}{4} L^2$.
  Hence, we have the result.
\end{proof}

It is helpful to express the size of various sets in terms of $L$.
Let $V = |V(\cG)| = |X(0)| + |X(1)| + |X(2)|$.
Recall $L = \Theta(V^{\frac{1}{D-2}})$, i.e. $V = \Theta(L^{D-2})$.
It is known that $|X(0)|, |X(1)|, |X(2)| = \Theta(V) = \Theta(L^{D-2})$.
Combining with the claim above, we have
\begin{align}
  |X(0)|, |X(1)|, |X(2)| &= \Theta(L^{D-2}), \label{eq:X-size} \\
  |X_L(0)|, |X_L(1)|, |X_L(2)| &= \Theta(L^D). \label{eq:XL-size}
\end{align}

\section{Properties of the Subdivided Code}
\label{sec:prop-subdivided-code}

Having constructed the codes in the previous section, we now study their properties and show \Cref{thm:quantum-LDPC}.
We will use the common names stabilizers, qubits, checks to refer to elements in $X_L(0), X_L(1), X_L(2)$, respectively.

We will adopt the following notations.
$c_1|_S$ denotes the restriction of the vector $c_1$ to the support on $S$ (equivalently $S \cap X_L(1)$).
$|c_1|_S$ denotes the Hamming weight of the restricted vector $c_1|_S$.

\subsection{Geometric Locality and Bounded Density}

We first show that the codes have the desired geometric properties.
\begin{lemma}
  The codes $Q(X_L)$ are geometrically local and have bounded density.
\end{lemma}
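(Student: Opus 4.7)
The plan is to decompose the claim into its two parts and reduce each directly to the corresponding property of the embedding $I_{\sq_L\to\euclid}$ guaranteed by \Cref{thm:embedding}. The composition $I_{\code_L\to\euclid} = I_{\sq_L\to\euclid} \circ I_{\code_L\to\sq_L}$ was set up precisely so that this reduction is essentially bookkeeping: the first factor $I_{\code_L\to\sq_L}:\Set(X_L) \to V(\cG_L)$ is a bijection by construction, since $X_L(0), X_L(1), X_L(2)$ were defined as the partition of $V(\cG_L)$ by coordinate parity, and $\Set(X_L) = X_L(0) \cup X_L(1) \cup X_L(2)$.

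First I would verify geometric locality. By the construction in \Cref{sec:construct-subdivided-code}, the parity-check matrices $H_z = \delta_0^T$ and $H_x = \delta_1$ are the adjacency matrices between the partition classes $X_L(0)$--$X_L(1)$ and $X_L(1)$--$X_L(2)$ respectively inside $\cG_L$. Hence every nonzero entry of $H_x$ or $H_z$ corresponds to an edge of $\cG_L$ under $I_{\code_L\to\sq_L}$. Applying the locality bound from \Cref{thm:embedding} to that edge yields
\begin{equation*}
  |I_{\code_L\to\euclid}(i) - I_{\code_L\to\euclid}(j)| \le a
\end{equation*}
whenever $H_{ij}$ is nonzero, giving geometric locality with the same constant $a$.

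Next I would handle bounded density. Since $I_{\code_L\to\sq_L}$ is a bijection, for every $x \in \ZZ^D$,
\begin{equation*}
  |I_{\code_L\to\euclid}^{-1}(x)| = |I_{\code_L\to\sq_L}^{-1}(I_{\sq_L\to\euclid}^{-1}(x))| = |I_{\sq_L\to\euclid}^{-1}(x)| \le b,
\end{equation*}
where the final inequality is the density bound in \Cref{thm:embedding}. This transfers the density constant $b$ to the code embedding.

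There is no real obstacle here; the only thing to double-check is that $I_{\code_L\to\sq_L}$ is genuinely a bijection, which amounts to noting that every vertex of $\cG_L$ has coordinates $(i,j)$ falling into exactly one of the three parity classes (even-even, mixed, odd-odd), so the three sets $X_L(0), X_L(1), X_L(2)$ form an exact partition of $V(\cG_L)$. Given this, both claims of the lemma follow immediately from \Cref{thm:embedding} with the same constants $a$ and $b$.
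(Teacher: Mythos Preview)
Your proposal is correct and follows exactly the paper's approach: observe that $I_{\code_L\to\sq_L}$ is a bijection identifying $\Set(X_L)$ with $V(\cG_L)$ and that the check--qubit incidences coincide with the edges of $\cG_L$, so both locality and bounded density reduce immediately to \Cref{thm:embedding}. The paper states this in two sentences, while you have spelled out the bookkeeping more explicitly, but the argument is identical.
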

\begin{proof}
  Notice that $\Set(X_L) = V(\cG_L)$
    and the qubit-check relations of the code are precisely the adjacency relations of $\cG_L$.
  This implies that it suffices to check $I_{\sq_L \to \euclid}: V(\cG_L) \to \ZZ^D$ is geometrically local and has bounded density.
  This holds because of \Cref{thm:embedding}.
\end{proof}

\subsection{Dimension}

We next show that the codes retain the same dimensions.
Essentially, what we want to show is that $\cF: X \to X_L$ induces a quasi-isomorphism,
  i.e. $Z^i(X) / B^i(X) \xrightarrow{\sim} Z^i(X_L) / B^i(X_L)$ are isomorphisms for all $i$.
Intuitively, this holds for the same reason that cellular homology of a manifold is independent of the choice of the cellulation.
However, we will not go through those discussions.
We will show $Z^i(X) / B^i(X) \xrightarrow{\sim} Z^i(X_L) / B^i(X_L)$ for $i=1$ directly by brute force.

\begin{lemma}
  If the code $Q(X)$ has dimension $k$ then the code $Q(X_L)$ has dimension $k$.
\end{lemma}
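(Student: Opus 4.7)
The plan is to show that the chain map $\cF$ from \eqref{eq:cF} induces an isomorphism on first cohomology, i.e.\ $[\cF_1]: Z^1(X)/B^1(X) \xrightarrow{\sim} Z^1(X_L)/B^1(X_L)$, where $Z^i = \ker \delta_i$ and $B^i = \im \delta_{i-1}$. Since the CSS code dimension equals $\dim Z^1 - \dim B^1 = \dim(Z^1/B^1)$ and the commuting diagram \eqref{eq:cF} already guarantees that $\cF_1$ sends cocycles to cocycles and coboundaries to coboundaries, it suffices to verify that $[\cF_1]$ is both injective and surjective. Morally, this is just the statement that cellular cohomology is unchanged under subdivision, but I will argue it by brute force using the subcomplex decomposition of \Cref{sec:subcomplex}.

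For injectivity, suppose $\wt c_1 \in Z^1(X)$ with $\cF_1(\wt c_1) = \delta_0 c_0$ for some $c_0 \in \F2^{X_L(0)}$. The key observation is that $\cF_1(\wt c_1)$ is identically $0$ on each component $S_i \in \cS$ (it is supported only on $T$), so $\delta_0(c_0|_{S_i}) = 0$ on the interior of $S_i$. Since each $S_i$ is connected as a 2-complex, the only cocycles of $\delta_0|_{S_i}$ are constants, so $c_0$ is constant on each $S_i$. Defining $\wt c_0 \in \F2^{X(0)}$ by reading off these constants (via the bijection $\cS \cong X(0)$), a direct verification shows $\wt c_1 = \delta_0 \wt c_0 \in B^1(X)$.

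For surjectivity, given $c_1 \in Z^1(X_L)$, I would normalize it modulo $B^1(X_L)$ in two stages, leveraging the ``local'' trivial cohomology of the generalized surface codes and generalized repetition codes depicted in \Cref{fig:generalized-code}. First, on each component $S_i$, the generalized surface code has a product structure whose relative cohomology (with respect to the boundary strip shared with $T$) vanishes in degree $1$; this lets me choose $c_0|_{S_i}$ so that $c_1 + \delta_0 c_0$ is supported outside the interior of $S_i$. Performing this cleanup for every $S_i$ pushes the entire cocycle onto $T$. Second, on each component $T_i \in \cT$, the generalized repetition code is a tree-like 1-complex whose degree-$1$ cochains in the kernel of $\delta_1$ are exactly constants; using this, I can further adjust by a coboundary so that $c_1|_{T_i}$ is a single value $\wt c_1(v)$ depending only on the corner $v \in X(1)$ corresponding to $T_i$. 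This defines $\wt c_1 \in \F2^{X(1)}$ with $\cF_1(\wt c_1) = c_1 + \delta_0 c_0$, and the cocycle condition $\delta_1 c_1 = 0$ combined with the injectivity of $\cF_2$ forces $\delta_1 \wt c_1 = 0$, so $\wt c_1 \in Z^1(X)$ as required.

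The main obstacle will be making the surjectivity step rigorous, since it requires identifying the correct ``local'' cohomology of the generalized surface and repetition codes and checking that the cleanups on different components can be patched together consistently into a global $c_0 \in \F2^{X_L(0)}$. The patching is not automatic because different $S_i$'s share edges of $T$, so the boundary behavior of $c_0|_{S_i}$ must be compatible across adjacent components; this is where the fact that the cocycle $c_1$ satisfies $\delta_1 c_1 = 0$ globally (not just locally) becomes essential. I expect these local cohomology calculations to be deferred to \Cref{sec:prop-surface-code}, which the preliminary discussion of the generalized codes has already foreshadowed.
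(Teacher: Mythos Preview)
Your approach matches the paper's proof closely, and the injectivity argument is essentially identical. However, two points in your surjectivity outline are off and should be corrected rather than deferred.

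First, the ``second stage'' cleanup on $T_i$ via coboundary adjustment is neither possible nor necessary: since $L$ is odd, every vertex in $T$ has at least one odd coordinate, so $T_i \cap X_L(0) = \emptyset$ and hence $B^1(X_L|_{T_i}) = 0$; there is literally nothing to adjust by. Instead, the constancy of $c_1' \coloneqq c_1 + \delta_0 c_0$ on each $T_i$ follows directly from the cocycle condition $\delta_1 c_1' = 0$ once $c_1'|_S = 0$: each check in $T_i$ touches exactly two qubits in $T_i$ and some qubits in $S$, so the two $T_i$-values must agree. This is how the paper argues it, with no second coboundary step.

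Second, your patching worry is a non-issue. The components $S_i$ are disjoint as vertex sets and together exhaust $X_L(0)$ (again because $L$ is odd, every $0$-cell lies in $S$). Choosing $c_0|_{S_i}$ independently on each component therefore already defines a global $c_0 \in \F2^{X_L(0)}$ with no compatibility constraint to check; different $S_i$'s do not ``share edges of $T$'' in any sense relevant to defining $c_0$. The paper's argument invokes only the exactness of $X_L|_S$ at level $1$ and does not need the quantitative expansion results of \Cref{sec:prop-surface-code}.
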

\begin{proof}
  The goal is to show that $\cF_1$ induces a bijection between the codewords (equivalent classes) $Z^1(X)/B^1(X)$ and $Z^1(X_L)/B^1(X_L)$
    by mapping $[\wt c_1]$ to $[\cF_1(\wt c_1)]$.
  To show this, we need to show
  \begin{enumerate}
    \item $\wt c_1 \in B^1(X) \implies \cF_1(\wt c_1) \in B^1(X_L)$.
    \item $\wt c_1 \in Z^1(X) \implies \cF_1(\wt c_1) \in Z^1(X_L)$.
    \item $\cF_1(\wt c_1) \in B^1(X_L) \implies \wt c_1 \in B^1(X)$.
    \item Given $c_1 \in Z^1(X_L)$, there exists $\wt c_1 \in Z^1(X)$, such that $c_1 \in [\cF_1(\wt c_1)]$.
  \end{enumerate}
  The first two imply that the map $[\wt c_1] \mapsto [\cF_1(\wt c_1)]$ is well-defined and maps $Z^1(X)/B^1(X)$ to $Z^1(X_L)/B^1(X_L)$.
  The last two imply that the map is injective and surjective, respectively.
  So, overall, we get a bijection.
  We show the desired statements one by one.

  \paragraph*{1. $\wt c_1 \in B^1(X) \implies \cF_1(\wt c_1) \in B^1(X_L)$.}
  Because $\wt c_1 \in B^1(X)$ there exists $\wt c_0 \in \F2^{X(0)}$ such that $\wt c_1 = \delta_1 \wt c_0$.
  The statement then follows from the left block of the commutative diagram in \Cref{eq:cF}, $\cF_1 (\wt c_1) = \cF_1 (\delta_0 \wt c_0) = \delta_0 \cF_0 (\wt c_0)$.

  \paragraph*{2. $\wt c_1 \in Z^1(X) \implies \cF_1(\wt c_1) \in Z^1(X_L)$.}
  This follows from the right block of the commutative diagram in \Cref{eq:cF}, $\delta_1 \cF_1 (\wt c_1) = \cF_2 (\delta_1 \wt c_1) = 0$.

  \paragraph*{3. $\cF_1(\wt c_1) \in B^1(X_L) \implies \wt c_1 \in B^1(X)$.}
  Let $c_1 = \cF_1(\wt c_1)$ and let $c_0 \in \F2^{X_L(0)}$ be the vector that satisfies $\delta_0 c_0 = c_1$.
  Since $c_1|_S = 0$, $c_0$ takes the same value in each connected component $S_i \in \cS$.
  That means $c_0 \in \im \cF_0$.

  Let $\wt c_0 = \cF_0^{-1} (c_0)$.
  We claim that $\delta_0 c_0 = \wt c_1$.
  From the left block of the commutative diagram in \Cref{eq:cF}, we have $\cF_1 (\delta_0 \wt c_0) = \delta_0 \cF_0 (\wt c_0) = \delta_0 c_0 = c_1 = \cF_1(\wt c_1)$.
  Since $\cF_1$ is injective $\delta_0 c_0 = \wt c_1$.

  \paragraph*{4. Given $c_1 \in Z^1(X_L)$, there exists $\wt c_1 \in Z^1(X)$, such that $c_1 \in [\cF_1(\wt c_1)]$.}
  We first construct $c_1' \in \im \cF_1$ by removing the support of $c_1$ in $S$.
  Notice that the chain complex $X_L|_S: \F2^{S \cap X_L(0)} \to \F2^{S \cap X_L(1)} \to \F2^{S \cap X_L(2)}$ is exact
    because the subcomplex $X_L|_{S_i}$ is exact
      for each connected component $S_i \in \cS$.
  Since $\delta_1 c_1 = 0$, $c_1|_S \in Z^1(X_L|_S) = B^1(X_L|_S)$.
  Therefore, there exists $c_0 \in \F2^{S \cap X_L(0)} = \F2^{X_L(0)}$ such that $(c_1 + \delta_0 c_0)|_S = 0$.
  So $c_1' = c_1 + \delta_0 c_0$ is not supported in $S$ and is only supported in $T$.

  To show $c_1' \in \im \cF_1$,
    we additionally need to show that $c_1'$ takes the same value in each connected component $T_i \in \cT$.
  Notice that $c_1' \in Z^1(X_L)$ which means $c_1'$ violates no checks.
  Each check in $T_i$ acts on two qubits in $T_i$ and some other qubits in $S$.
  Since $c_1'$ is not supported in $S$, the two nearby qubits in $T_i$ take the same value.
  Hence, $c_1'$ take the same value in each connected component.

  Let $\wt c_1 = \cF_1^{-1}(c_1')$.
  We suffice to show $\wt c_1 \in Z^1(X)$.
  From the right block of the commutative diagram in \Cref{eq:cF}, $\cF_2 (\delta_1 \wt c_1) = \delta_1 \cF_1 (\wt c_1) = \delta_1 c_1' = 0$.
  Since $\cF_2$ is injective $\delta_1 \wt c_1 = 0$
    which concludes the proof.
\end{proof}

\begin{corollary}
  $Q(X_L)$ has dimensions $\Omega(n^{\frac{D-2}{D}})$.
\end{corollary}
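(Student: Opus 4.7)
The plan is to combine three ingredients that have all been set up earlier in the paper: (i) the preceding lemma, which asserts that $Q(X_L)$ has the same code dimension $k$ as $Q(X)$; (ii) \Cref{thm:qLDPC}, which guarantees that $Q(X)$ has linear dimension $k = \Theta(|X(1)|)$; and (iii) the size estimates in \Cref{eq:X-size,eq:XL-size}, which tell us that $|X(1)| = \Theta(L^{D-2})$ while $n = |X_L(1)| = \Theta(L^D)$.

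Concretely, I would first invoke the preceding lemma to transfer the dimension of $Q(X)$ to $Q(X_L)$, then cite \Cref{thm:qLDPC} to conclude $k = \Omega(|X(1)|)$. Next, I would substitute the size estimate $|X(1)| = \Theta(L^{D-2})$ from \Cref{eq:X-size}. Finally, I would invert the relation $n = \Theta(L^D)$ to get $L = \Theta(n^{1/D})$, yielding
\begin{equation*}
  k \;=\; \Omega(L^{D-2}) \;=\; \Omega\!\left(n^{\frac{D-2}{D}}\right).
\end{equation*}

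There is no real obstacle here; the corollary is essentially bookkeeping once the dimension-preservation lemma is in hand. The only mild subtlety is making sure that the number of qubits $n$ in the statement is interpreted as $|X_L(1)|$ (rather than as $|\Set(X_L)|$ or some other count), but since all three sets $X_L(0), X_L(1), X_L(2)$ have size $\Theta(L^D)$, the asymptotic scaling is unaffected by this choice. Thus the proof should fit in just a few lines.
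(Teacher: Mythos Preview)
Your proposal is correct and matches the paper's proof essentially line for line: the paper also invokes the dimension-preservation lemma, the linear-dimension property of $Q(X)$, and the size estimates \Cref{eq:X-size,eq:XL-size} to obtain $k(X_L) = k(X) = \Theta(|X(1)|) = \Theta(L^{D-2}) = \Omega(n^{\frac{D-2}{D}})$. Your remark about interpreting $n$ as $|X_L(1)|$ is exactly the clarification the paper makes at the outset of its proof.
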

\begin{proof}
  We first prove the case of $Q(X_L)$ where $n = |X_L(1)|$.
  Recall that \Cref{eq:X-size,eq:XL-size} say
    $|X(1)| = \Theta(L^{D-2})$
    and $|X_L(1)| = \Theta(L^D)$.
  Thus,
  $k(X_L) = k(X) = \Theta(|X(1)|) = \Theta(L^{D-2}) = \Omega(n^{\frac{D-2}{D}})$,
    where the second equality follows from the fact that good codes have linear dimension.
\end{proof}

\subsection{Distance, Energy Barrier: By Showing Small-Set Boundary Expansion}

We now show the rest of the code properties.
As discussed in \Cref{sec:prelim-chain-complex-expansion},
  we will show them by showing the subdivided code has boundary and coboundary expansion.
As a reminder we say a chain complex has $(\alpha,\beta,\gamma)$-coboundary expansion if
\begin{equation*}
  \forall c_1 \in \F2^{X(1)}, \abs{c_1} \le \alpha |X(1)|:
    \exists c_0 \in \F2^{X(0)}, \beta \abs{c_1 + \delta_0 c_0} \le \abs{\delta_1 c_1}, \gamma \abs{c_0} \le \abs{c_1}.
\end{equation*}
\begin{theorem}
  \label{thm:main-expansion}
  If $X$ has $(\alpha_\qLDPC, \beta_\qLDPC, \gamma_\qLDPC)$-small-set coboundary (boundary) expansion,
    then $X_L$ has $(\alpha, \beta, \gamma)$-small-set coboundary (boundary) expansion
    where
  \begin{equation}
    \alpha = \frac{\eta_0^\sur}{4L} \frac{4}{\Delta_{\max}^2}\alpha_{\qLDPC}, \quad
    \beta = \frac{1}{\frac{1}{\beta_1^\sur} + \frac{1}{\beta^\rep \eta_1^\sur} + \frac{1}{\beta_\qLDPC \eta^\rep \eta_1^\sur} \frac{\Delta_{\max}}{2} L}, \quad
    \gamma = \frac{1}{\frac{2}{\beta_0^\sur} + \frac{4}{\gamma_\qLDPC\eta_0^{\sur}} \frac{\Delta_{\max}^2}{4} L}.
  \end{equation}
\end{theorem}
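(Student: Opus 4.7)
The plan is to establish the coboundary expansion case in detail; the boundary case follows by the dual argument, with $\beta_0^\sur$ and $\beta_1^\sur$ swapping roles. The strategy is to exploit the subcomplex decomposition from \Cref{sec:subcomplex}: given $c_1 \in \F2^{X_L(1)}$ with $|c_1| \le \alpha\,|X_L(1)|$, I would construct $c_0 \in \F2^{X_L(0)}$ in three stages that progressively push $c_1 + \delta_0 c_0$ into the image of $\cF_1$, at which point the expansion of $X$ takes over.

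In Stage 1, for each component $S_i \in \cS$, I would apply the generalized surface code's small-set coboundary expansion on $X_L|_{S_i}$ (constants $\beta_1^\sur, \eta_1^\sur$) to $c_1|_{S_i}$, producing $c_0^{(1,i)} \in \F2^{S_i \cap X_L(0)}$. Since $\delta_0 c_0^{(1,i)}$ stays confined to $S_i$, summing yields $c_0^{(1)}$ with $c_1^{(1)} := c_1 + \delta_0 c_0^{(1)}$ satisfying $|c_1^{(1)}|_S| \le (1/\beta_1^\sur)\,|\delta_1 c_1|$. In Stage 2, for each component $T_i \in \cT$, I would invoke the generalized repetition code expansion (constants $\beta^\rep, \eta^\rep$) to select a best-fit constant value $\wt c_1(v_i) \in \F2$ for $c_1^{(1)}|_{T_i}$; the corresponding coarse vector $\wt c_1 \in \F2^{X(1)}$ satisfies $|c_1^{(1)}|_T + \cF_1(\wt c_1)| \le (1/\beta^\rep)\,|\delta_1 c_1|$. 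In Stage 3, I would feed $\wt c_1$ into the $(\alpha_\qLDPC, \beta_\qLDPC, \gamma_\qLDPC)$-coboundary expansion of $X$ to obtain $\wt c_0 \in \F2^{X(0)}$, and lift to $c_0^{(2)} := \cF_0(\wt c_0)$. The output is $c_0 = c_0^{(1)} + c_0^{(2)}$.

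To bound $|c_1 + \delta_0 c_0|$, I would decompose by support on $S$ and $T$. The $S$-part is $|c_1^{(1)}|_S|$ from Stage 1. The $T$-part, by triangle inequality, is at most $|c_1^{(1)}|_T + \cF_1(\wt c_1)| + |\cF_1(\wt c_1) + \delta_0 \cF_0(\wt c_0)|$; the first summand is Stage 2's error, and the second equals $|\cF_1(\wt c_1 + \delta_0 \wt c_0)|$ by the commutativity in \Cref{eq:cF}. Since $\cF_1$ inflates by at most $(\Delta_{\max}/2)\,L$ (the maximum size of a single $T_i$), Stage 3 bounds it by $((\Delta_{\max}/2) L/\beta_\qLDPC)\,|\delta_1 \wt c_1|$, and $|\delta_1 \wt c_1| = |\delta_1 \cF_1(\wt c_1)|$ by injectivity of $\cF_2$ traces back to $|\delta_1 c_1|$ up to the Stages 1--2 residues. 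Combining the three contributions reproduces the stated $\beta$. For the weight bound $\gamma$, I would sum $|c_0^{(1)}| \le (1/\beta_0^\sur)\,|c_1|$ from the surface code's level-$0$ expansion, with $|c_0^{(2)}| = |\cF_0(\wt c_0)| \le (\Delta_{\max}^2/4)\,L^2 \cdot (1/\gamma_\qLDPC)\,|\wt c_1|$, and convert $|\wt c_1|$ back to $|c_1|$ via the per-component smallness at each $T_i$.

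The main obstacle will be propagating the smallness hypotheses coherently across the three stages: the input bound $|c_1| \le \alpha\,|X_L(1)|$ must imply, in sequence, $|c_1|_{S_i}| \le \eta_1^\sur |S_i \cap X_L(1)|$ for each $i$ (so Stage 1 applies), then $|c_1^{(1)}|_{T_i}| \le \eta^\rep |T_i \cap X_L(1)|$ for each $i$ (so Stage 2 applies), and finally $|\wt c_1| \le \alpha_\qLDPC\,|X(1)|$ (so Stage 3 applies). Components violating these per-component thresholds must be handled separately, e.g.\ by absorbing them as ``bad'' vertices at the coarse level, which is the origin of the factors $1/\eta_1^\sur$ and $1/\eta^\rep$ in the denominators of $\beta$. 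Careful bookkeeping of these thresholds, together with the component size estimates from \Cref{claim:X-bound}, yields the stated $\alpha = \frac{\eta_0^\sur}{4L} \cdot \frac{4}{\Delta_{\max}^2}\alpha_\qLDPC$; this threshold propagation is the most delicate step of the argument.
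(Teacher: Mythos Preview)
Your three-stage plan (clean on $S$ via the generalized surface code, clean on $T$ via the generalized repetition code, then lift and apply the qLDPC expansion) matches the paper's proof, and your final choice $c_0 = c_0^{(1)} + c_0^{(2)}$ is exactly the paper's $c_0 = c_0^S + c_0''$. However, your account of the $\eta$ parameters is wrong in a way that would derail the execution.

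The coboundary expansion of the generalized surface and repetition codes (\Cref{cor:rep-expansion,cor:sur-expansion}) is \emph{not} a small-set statement: it applies to every input, with no threshold hypothesis. The parameters $\eta_1^\sur$ and $\eta^\rep$ are not small-set cutoffs but \emph{boundary-spillover} constants: $\eta_1^\sur$ controls the violated checks that Stage~1 pushes into $T$ (part~(e) of \Cref{cor:sur-expansion}: $\eta_1^\sur |\delta_1 c_1^S|_T \le |\delta_1 c_1|_S$), and $\eta^\rep$ controls those that Stage~2 pushes into $U$ (part~(d) of \Cref{cor:rep-expansion}). The factors $1/\eta_1^\sur$ and $1/\eta^\rep$ in $\beta$ arise from the chain $|\delta_1 c_1''| \le \frac{1}{\eta^\rep}|\delta_1 c_1'| \le \frac{1}{\eta^\rep \eta_1^\sur}|\delta_1 c_1|$, not from any ``bad-component'' absorption. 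Likewise, your claim that ``$\delta_0 c_0^{(1,i)}$ stays confined to $S_i$'' is false: the coboundary of a $0$-chain in $S_i$ lands partly in the adjacent $T$-regions, and that is precisely what $\eta_0^\sur$ controls (part~(d) of \Cref{cor:sur-expansion}); this spillover is what gives $|c_1'| \le \frac{2}{\eta_0^\sur}|c_1|$ and hence the stated $\alpha$.

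Consequently your ``main obstacle'' is a phantom: there are no per-component smallness conditions to verify in Stages~1 and~2, and no bad components to handle separately. The \emph{only} smallness check occurs in Stage~3, namely $|\wt c_1| \le \alpha_\qLDPC |X(1)|$, and it follows directly from $|\wt c_1| \le |c_1''|/L \le \frac{4}{\eta_0^\sur L}|c_1|$ together with \Cref{claim:X-bound}. Once you reinterpret the $\eta$'s as spillover controls, the three error terms $|c_1^S| + |c_1^T| + |c_1'' + \delta_0 c_0''|$ combine to the stated $\beta$, and $|c_0^S| + |c_0''|$ gives $\gamma$.
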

In our context $\beta_0^\sur, \beta_1^\sur, \beta^\rep = \Theta(1/L)$ and $\eta_0^\sur, \eta_1^\sur, \eta^\rep, \alpha_\qLDPC, \beta_\qLDPC, \gamma_\qLDPC, \Delta_{\max} = \Theta(1)$.
Thus, $\alpha, \beta, \gamma = \Theta(1/L)$.

With that we have the following results.
\begin{corollary}
  $Q(X_L)$ has distance $\Omega(n^{\frac{D-1}{D}})$ and energy barrier $\Omega(n^{\frac{D-2}{D}})$.
\end{corollary}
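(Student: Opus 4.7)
The plan is to simply combine \Cref{thm:main-expansion} with the code-theoretic interpretation of small-set (co)boundary expansion already recorded in \Cref{sec:prelim-chain-complex-expansion}, together with the size estimate $|X_L(1)| = \Theta(L^D)$ from \Cref{eq:XL-size}.

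First I would invoke \Cref{thm:main-expansion}: since the good qLDPC code $X$ has $(\alpha_\qLDPC, \beta_\qLDPC, \gamma_\qLDPC)$-small-set boundary and coboundary expansion with all three parameters $\Theta(1)$, and since the local surface/repetition parameters satisfy $\beta_0^\sur, \beta_1^\sur, \beta^\rep = \Theta(1/L)$ while $\eta_0^\sur, \eta_1^\sur, \eta^\rep, \Delta_{\max} = \Theta(1)$, substituting into the formulas for $\alpha, \beta, \gamma$ yields $\alpha = \Theta(1/L)$ and $\beta = \Theta(1/L)$ (the value of $\gamma$ is not needed here). Thus $X_L$ is a $(\Theta(1/L), \Theta(1/L), \Theta(1/L))$-small-set boundary and coboundary expander.

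Next, I would recall from \Cref{sec:prelim-chain-complex-expansion} the two implications: small-set boundary and coboundary expansion with parameters $(\alpha,\beta,\gamma)$ imply that the associated CSS code has distance at least $\alpha |X_L(1)|$ and energy barrier at least $\alpha\beta|X_L(1)|$. Applying this to $X_L$ with $n = |X_L(1)|$ gives
\begin{equation*}
  d(Q(X_L)) \;\ge\; \alpha \cdot |X_L(1)| \;=\; \Theta(1/L)\cdot \Theta(L^D) \;=\; \Theta(L^{D-1}),
\end{equation*}
\begin{equation*}
  \cE(Q(X_L)) \;\ge\; \alpha\beta \cdot |X_L(1)| \;=\; \Theta(1/L^2)\cdot \Theta(L^D) \;=\; \Theta(L^{D-2}).
\end{equation*}

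Finally, I would convert these scaling laws to a statement in terms of $n$. Since $n = |X_L(1)| = \Theta(L^D)$ by \Cref{eq:XL-size}, we have $L = \Theta(n^{1/D})$, so the bounds become $d = \Omega(n^{(D-1)/D})$ and $\cE = \Omega(n^{(D-2)/D})$, which is exactly the claimed corollary. There is no real obstacle here beyond bookkeeping: the heavy lifting has been done in \Cref{thm:main-expansion} and in the preliminary reduction from expansion to distance/energy-barrier bounds; the only thing to watch is that the two expansion conditions (boundary and coboundary) are needed simultaneously, since the quantum code distance $d = \min(d_x,d_z)$ and energy barrier $\cE = \min(\cE_x,\cE_z)$ each require controlling both the $X$ and $Z$ sectors.
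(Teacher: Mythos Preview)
Your proposal is correct and follows essentially the same route as the paper's own proof: invoke \Cref{thm:main-expansion} to get $\alpha,\beta=\Theta(1/L)$, apply the reduction in \Cref{sec:prelim-chain-complex-expansion} to obtain $d\ge\alpha|X_L(1)|$ and $\cE\ge\alpha\beta|X_L(1)|$, and then use $|X_L(1)|=\Theta(L^D)$ from \Cref{eq:XL-size} to convert to powers of $n$. Your added remark that both boundary and coboundary expansion are needed (to control both the $X$ and $Z$ sectors) is a helpful clarification that the paper leaves implicit.
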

\begin{proof}
  Let $n = |X_L(1)|$.
  As discussed in \Cref{sec:prelim-chain-complex-expansion},
    distance $d \ge \alpha |X_L(1)|$ and energy barrier $\cE \ge \alpha \beta |X_L(1)|$.
  Recall that \Cref{eq:XL-size} say $|X_L(1)| = \Theta(L^D / \polylog L)$.
  Thus,
  $d(X_L) = \Omega(\alpha |X_L(1)|) = \Theta(L^{D-1}) = \Omega(n^{\frac{D-1}{D}})$
  and
  $\cE(X_L) = \Omega(\alpha \beta |X_L(1)|) = \Theta(L^{D-2}) = \Omega(n^{\frac{D-2}{D}})$.
\end{proof}

Before proving the full theorem, we prove the following simpler theorem as a warm up.
\begin{theorem} [A simplification of \Cref{thm:main-expansion}]
  \label{thm:simple-expansion}
  If $X$ has cosystolic (systolic) distance $\alpha_\qLDPC$,
    then $X_L$ has cosystolic (systolic) distance $\alpha = \frac{\eta_0^\sur}{2L} \frac{4}{\Delta_{\max}^2} \alpha_{\qLDPC}$.
  More explicitly,
  \begin{equation}
    \forall c_1 \in Z^1(X_L), |c_1| \le \alpha |X_L(1)| : c_1 \in B^1(X_L).
  \end{equation}
  This means that if $Q(X)$ has linear distance, then $Q(X_L)$ has distance $\Theta(n/L)$ where $n = |X_L(1)|$ is the number of qubits.
\end{theorem}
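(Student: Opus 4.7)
The plan is to reduce a small cocycle $c_1 \in Z^1(X_L)$ to a small cocycle $\wt c_1 \in Z^1(X)$ via the chain map $\cF$, at which point the hypothesis on cosystolic distance of $X$ closes the argument. The three-piece decomposition $S,T,U$ introduced in \Cref{sec:subcomplex} is engineered exactly for this reduction: each $S_i$ supports a generalized surface code, each $T_j$ supports a generalized repetition code, and the $U_k$ correspond to checks of $X$.

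\textbf{Stage 1: Clean $c_1$ on the interior pieces.}
For each $S_i \in \cS$, any check $u \in S_i \cap X_L(2)$ has both coordinates odd and $\le L-2$, so all of its neighbors in $X_L(1)$ lie in $S_i$. Hence $c_1|_{S_i} \in Z^1(X_L|_{S_i})$ as a cocycle of the generalized surface code. The choice $\alpha = \tfrac{\eta_0^\sur}{2L}\tfrac{4}{\Delta_{\max}^2}\alpha_{\qLDPC}$, combined with the lower bound on $|S_i \cap X_L(1)|$ implicit in \Cref{claim:X-bound}, is calibrated so that $|c_1|_{S_i}| \le |c_1| \le \alpha |X_L(1)| < \eta_0^\sur |S_i \cap X_L(1)|$ for every $i$, where $\eta_0^\sur$ is the cosystolic distance ratio of the generalized surface code (to be established in \Cref{sec:prop-surface-code}). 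Each restriction is therefore a coboundary: $c_1|_{S_i} = (\delta_0 c_0^{(i)})|_{S_i}$ for some $c_0^{(i)} \in \F2^{S_i \cap X_L(0)}$. Assembling $c_0 = \sum_i c_0^{(i)} \in \F2^{X_L(0)}$ (valid because $X_L(0) \subset S$ is disjointly partitioned among the $S_i$), the corrected vector $c_1' \coloneqq c_1 + \delta_0 c_0 \in Z^1(X_L)$ satisfies $c_1'|_S = 0$.

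\textbf{Stage 2: Descend $c_1'$ to $X$ via $\cF_1$.}
With $c_1'|_S = 0$ and $\delta_1 c_1' = 0$, I would read off the cocycle equation at each check outside $S$: the $S$-neighbors contribute nothing, leaving a constancy relation between the two $T$-neighbors along the same branch of the generalized repetition code $X_L|_{T_j}$. Propagating along each branch and then across the branching vertex forces $c_1'$ to be constant on every $T_j$, so $c_1' \in \im \cF_1$ and $\wt c_1 \coloneqq \cF_1^{-1}(c_1') \in \F2^{X(1)}$ is well-defined. Commutativity of the right square of \eqref{eq:cF} together with injectivity of $\cF_2$ then yields $\delta_1 \wt c_1 = 0$, i.e. $\wt c_1 \in Z^1(X)$.

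\textbf{Stage 3: Apply the cosystolic distance of $X$ and lift back.}
Because $c_1'$ is constant on every $T_j$ and $|T_j \cap X_L(1)| = \Theta(L)$, we get $|c_1'| \ge \Omega(L)\cdot|\wt c_1|$. Bounding $|c_1'| \le |c_1| + |(\delta_0 c_0)|_T|$ via the boundary leakage of the surface-code correction, a book-keeping with \Cref{claim:X-bound} shows $|\wt c_1| \le \alpha_{\qLDPC}|X(1)|$. The hypothesis on $X$ then furnishes $\wt c_0 \in \F2^{X(0)}$ with $\delta_0 \wt c_0 = \wt c_1$, and the left square of \eqref{eq:cF} gives $c_1' = \cF_1 \delta_0 \wt c_0 = \delta_0 \cF_0(\wt c_0) \in B^1(X_L)$. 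Therefore $c_1 = c_1' + \delta_0 c_0 \in B^1(X_L)$, proving the cosystolic statement; the systolic version is entirely dual, obtained by interchanging $\delta$ with $\partial$, swapping the roles of $X_L(0)$ and $X_L(2)$, and replacing the surface-code cosystolic distance by its systolic counterpart.

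\textbf{Main obstacle.} The delicate step is Stage 3: the cleanup on $S$ can bleed into $T$ through $\delta_0 c_0$, so $|\wt c_1|$ must be bounded using not only $|c_1|$ but also the weight and geometry of the $c_0^{(i)}$ produced by the surface-code correction. A crude estimate loses the $1/L$ factor in $\alpha$, so the $c_0^{(i)}$ must be chosen either to minimize the leakage $|(\delta_0 c_0^{(i)})|_T|$ directly, or -- more robustly -- via a coboundary expansion property of the surface code controlling both $|c_0^{(i)}|$ and its support. This is precisely why $\eta_0^\sur$, $1/L$, and $4/\Delta_{\max}^2$ appear multiplicatively in the statement, and foreshadows the richer parameter balance needed in the full \Cref{thm:main-expansion}.
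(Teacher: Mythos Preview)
Your overall strategy---clean $c_1$ on $S$, descend to $X$ via $\cF_1$, apply the hypothesis, lift back---is exactly the paper's proof. However, Stage~1 contains a concrete error that undermines the bookkeeping in Stage~3.

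You write that $\eta_0^\sur$ is ``the cosystolic distance ratio of the generalized surface code'' and claim that $\alpha$ is calibrated so that $|c_1|_{S_i}| \le \alpha|X_L(1)| < \eta_0^\sur |S_i \cap X_L(1)|$. Both assertions are wrong. First, the generalized surface code $X_L|_{S_i}$ is \emph{exact} at level~$1$ (this is used in the dimension proof), so no smallness hypothesis on $c_1|_{S_i}$ is needed to produce $c_0^{(i)}$. Second, your inequality fails in general: $\alpha|X_L(1)| = \Theta(L^{D-1})$ while $|S_i\cap X_L(1)| = \Theta(L^2)$, so for $D\ge 4$ the claimed bound is false. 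Third---and this is what actually matters---$\eta_0^\sur$ in the paper is \emph{not} a cosystolic distance; it is the boundary-leakage constant from the level-$0$ expansion of the surface code (\Cref{cor:sur-expansion}(d)): the $c_0^S$ produced there satisfies $\tfrac{\eta_0^\sur}{2}|\delta_0 c_0^S|_T \le |c_1|_S$. This single inequality is what gives $|c_1'| \le \tfrac{2}{\eta_0^\sur}|c_1|$, after which the chain $|\wt c_1| \le |c_1'|/L \le \tfrac{2}{\eta_0^\sur L}\alpha|X_L(1)| \le \alpha_{\qLDPC}|X(1)|$ is immediate from \Cref{claim:X-bound}.

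In short: your ``Main obstacle'' paragraph correctly diagnoses that the leakage must be controlled, but Stage~1 as written invokes the wrong property of the surface code and therefore never supplies that control. Replace the cosystolic-distance argument in Stage~1 by a direct appeal to \Cref{cor:sur-expansion}(d), and the proof goes through verbatim as in the paper.
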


In the following, we only study the coboundary and cosystolic properties.
Notice that $X$ and $X_L$ are symmetric when flipping the direction of the chain complexes.
So the argument that applies to the coboundary expansion also applies to the boundary expansion.
Therefore, it suffices to study the coboundary expansion.

The proof will utilize the coboundary expansion properties of the generalized surface codes and the generalized repetition codes in \Cref{sec:prop-surface-code}.
We have stated the consequences of coboundary expansion properties explicitly in the proof,
  so it is not required to have a full understanding of \Cref{sec:prop-surface-code} before reading the proof below.
For reference, the definitions and the lemma statements that is used in the proof can be found in \Cref{sec:expansion-properties-generalized-code}.

\begin{proof}
  We first outline the proof.
  To show $c_1 \in B^1(X_L)$,
    the idea is to first reduce the question to another codeword $c_1'$ in the same homology, i.e. $c_1 - c_1' \in B^1(X_L)$.
    (Here, being a codeword means being a systole, i.e. $c_1' \in Z^1(X_L)$.)
  If we can show $c_1' \in B^1(X_L)$, this implies $c_1 \in B^1(X_L)$.

  The codeword $c_1'$ will be constructed in a way so that it is supported in specific 1d-like regions and has a small weight.
    This is achieved using the expansion properties of the generalized surface code.

  Because of the structure of $c_1'$'s support, $c_1'$ can then be associated with another codeword $\wt c_1$ in the old code $X$, which again has a small weight.

  Now, we invoke the distance assumption that small weight codewords of $X$ are boundaries.
  That means $\wt c_1 \in B^1(X)$ which, in turn, implies $c_1' \in B^1(X_L)$, and thus $c_1 \in B^1(X_L)$.

  We now go through each step in detail.

  \paragraph{Step 1: Construct $c_1'$ by removing $c_1$ in $S$, i.e. 2D cleaning.}
  Notice that each connected region of $S$ is a generalized surface code, which has expansion properties by \Cref{cor:sur-expansion}.
  The same applies to their disjoint union with the same expansion parameters.\footnote{In reality, these generalized surface codes are not entirely disjoint. They are disjoint for the interior regions in $S$ but share boundary regions in $T$. Nevertheless, this overlap on the boundary only improves the parameters.}
  By applying \Cref{cor:sur-expansion} to the disjoint regions of $S$ with $\hat f_1 = c_1$, there exist $c^S_0 = f_0$ and $c^S_1 = f_1$ supported on $S$ which satisfy
  \begin{multline}
    (a)\, c_1|_S = (\delta_0 c^S_0 + c^S_1)|_S, \quad
    (b)\, \frac{\beta^\sur_0}{2} |c^S_0|_S \le |c_1|_S, \quad
    (c)\, \beta^\sur_1 |c^S_1|_S \le |\delta_1 c_1|_S, \quad \\
    (d)\, \frac{\eta^\sur_0}{2} |\delta_0 c^S_0|_T \le |c_1|_S, \quad
    (e)\, \eta^\sur_1 |\delta_1 c^S_1|_T \le |\delta_1 c_1|_S.
  \end{multline}
  (The superscript in $c^S_0$ is a reminder that its support is in $S$.
    The subscript is a reminder that $c^S_0$ is a $0$-chain, i.e. $c^S_0 \in \F2^{X_L(0)}$.)
  Notice $c^S_1|_S = 0$, because (c) $\beta_1^\sur |c^S_1|_S \le |\delta_1 \hat c^S_1|_S = 0$.
  We then define $c_1' = c_1 + \delta_0 c^S_0$.
  Notice $c_1'$ has no support in $S$ because (a)
    $c_1'|_S = (c_1 + \delta_0 c^S_0)|_S = c^S_1|_S = 0$.

  Additionally, we have control of the weight $|c_1'|$:
  \begin{align}
    |c_1'|
    &= |c_1'|_T \nonumber\\
    &\le |c_1|_T + |\delta_0 c^S_0|_T \nonumber\\
    &\le |c_1|_T + \frac{2}{\eta^{\sur}_0} |c_1|_S \nonumber\\
    &\le \frac{2}{\eta^{\sur}_0} |c_1| \label{eq:simple-step1}
  \end{align}
  where the third inequality uses (d) and the last inequality uses $1 \le \frac{2}{\eta^{\sur}_0}$.

  \paragraph{Step 2: Construct $c_1''$ by making $c_1'$ consistent in $T$, i.e. 1d cleaning.}
  We say $c_1'$ is consistent on $T$,
    if for each connected region $T_i \subset T$,
    $c_1'$ takes the same value on $T_i$.
  In the current simpler context, $c_1'$ is already consistent,
    because $c_1' \in Z^1$, i.e. $\delta_1 c_1' = 0$, which means $c_1'$ does not violate any check.
  Notice that the check in $T_i$ acts on two qubits in $T_i$ and some other qubits in $S$.
    Given that $c_1'$ is not supported in $S$, the two qubits in $T_i$ must take the same value.
  Therefore, within the connected region $T_i$, all qubits take the same value.

  \paragraph{Step 3: Construct $\wt c_1$ from $c_1''$ by moving from $X_L$ to $X$.}
  The consistency of $c_1'$ and the fact that $c_1'$ is not supported in $S$ implies $c_1' \in \im \cF_1$.
  We define $\wt c_1 = \cF_1^{-1}(c_1')$.

  Notice that the small weight of $c_1'$ implies the small weight of $\wt c_1$.
  Because each region $T_i$ has size at least $|T_i \cap X_L(1)| \ge L$\footnote{From \Cref{fig:proof-regions}, it may seem that we merely have $|T_i \cap X_L(1)| \ge \frac{L+1}{2}$.
  To get $|T_i \cap X_L(1)| \ge L$, we recall that $T_i$ covers at least $2$ branches, since $\Delta_{\min} \ge 2$.
  When there are $2$ branches, precisely $|T_i \cap X_L(1)| = L$.
  Therefore, in the general case, $|T_i \cap X_L(1)| \ge L$.},
    each nonzero value is repeated at least $L$ times.
  This implies $|\wt c_1| \le |c_1'| / L$.

  \paragraph{Step 4: Construct $\wt c_0$ from $\wt c_1$ by applying expansion assumption of $X$.}
  Combine the above, we obtain that the weight $|\wt c_1|$ is small:
  \begin{equation*}
    |\wt c_1| \le |c_1'| / L \le \frac{2}{\eta^\sur_0 L} |c_1|
    \le \frac{2}{\eta^\sur_0 L} \alpha |X_L(1)|
    \le \frac{2L}{\eta^\sur_0} \frac{\Delta_{\max}^2}{4} \alpha |X(1)|
    = \alpha_{\qLDPC} |X(1)|
  \end{equation*}
  where the second inequality uses \Cref{eq:simple-step1} and the fourth inequality uses \Cref{claim:X-bound}, $|X_L(1)| \le \frac{\Delta_{\max}^2}{4} L^2 |X(1)|$.
  Therefore, we can apply the expansion assumption and obtain $\wt c_1 \in B^1(X)$,
    i.e. there exist $\wt c_0 \in \F2^{X(0)}$ such that
    $\wt c_1 = \delta_0 \wt c_0$.

  \paragraph{Step 5 and wrap up.}
  To complete the proof, we need to show $c_1' \in B^1(X_L)$.
  This is done by lifting $\wt c_0 \in \F2^{X(0)}$ into $c_0 = \cF_0(\wt c_0) \in \F2^{X_L(0)}$.
  We have $c_1' = \cF_1(\wt c_1) = \cF_1(\delta_0 \wt c_0) = \delta_0 \cF_0(\wt c_0) = \delta_0 c_0$,
    where the third equality uses the left block of the commutative diagram \Cref{eq:cF}.
\end{proof}

With the proof of the simpler theorem in mind, we are ready to use the same template to show \Cref{thm:main-expansion}.

\begin{proof} [Proof of \Cref{thm:main-expansion}]
  Recall that because $X$ and $X_L$ are symmetric,
    it suffices to study the coboundary expansion which we will now focus on.

  \paragraph{Step 1: Construct $c_1'$ by removing $c_1$ in $S$, i.e. 2D cleaning.}
  By applying \Cref{cor:sur-expansion} to the disjoint regions of $S$ with $\hat f_1 = c_1$, there exist $c^S_0 = f_0$ and $c^S_1 = f_1$ supported on $S$ which satisfy
  \begin{multline} \label{eq:cs0-cs1-dcs0-dcs1}
    (a)\, c_1|_S = (\delta_0 c^S_0 + c^S_1)|_S, \quad
    (b)\, \frac{\beta^\sur_0}{2} |c^S_0|_S \le |c_1|_S, \quad
    (c)\, \beta^\sur_1 |c^S_1|_S \le |\delta_1 c_1|_S, \quad \\
    (d)\, \frac{\eta^\sur_0}{2} |\delta_0 c^S_0|_T \le |c_1|_S, \quad
    (e)\, \eta^\sur_1 |\delta_1 c^S_1|_T \le |\delta_1 c_1|_S.
  \end{multline}
  We then set $c_1'$ to be $c_1 + \delta_0 c^S_0$ restricted to the support on $T \cup U$.
  Notice that $T \cup U$ is the complement of $S$.
  So by (a),
  \begin{equation} \label{eq:c}
    c_1 + \delta_0 c^S_0 = c^S_1 + c_1'.
  \end{equation}
  Intuitively, $c^S_0$ tries to clean up $c_1$ in $S$ using the coboundaries $\delta_0 c^S_0$ as much as possible and
    $c^S_1$ removes the remaining support in $S$.
  The rest of the vector supported in $T \cup U$ is $c_1'$.

  The inequalities (b) and (c) bound the resource needed to remove the support of $c_1$ in $S$,
    which is intuitively saying that one can control the difference between $c_1'$ and $c_1$.
  On the other hand, the inequalities (d) and (e) bound the unwanted effect created in $T \cup U$ caused by the cleaning process,
    which is intuitively saying that $c_1'$ has small weight.
  More concretely, we have the following bounds on the weight
  \begin{align}
    |c_1'| &= |c_1'|_T \nonumber\\
      &\le |\delta_0 c^S_0|_T + |c_1|_T \nonumber\\
      &\le \frac{2}{\eta_0^{\sur}} |c_1|_S + |c_1|_T \nonumber\\
      &\le \frac{2}{\eta_0^{\sur}} |c_1| \label{eq:c1'}
  \end{align}
  where the third inequality uses (d) and the last inequality uses $1 \le \frac{2}{\eta_0^{\sur}}$.
  Furthermore,
  \begin{align}
    |\delta_1 c_1'| &= |\delta_1 c_1'|_T + |\delta_1 c_1'|_U \nonumber\\
      &\le |\delta_1 c^S_1|_T + |\delta_1 c_1|_T + |\delta_1 c_1|_U \nonumber\\
      &\le \frac{1}{\eta_1^{\sur}} |\delta_1 c_1|_S + |\delta_1 c_1|_T + |\delta_1 c_1|_U \nonumber\\
      &\le \frac{1}{\eta_1^{\sur}} |\delta_1 c_1| \label{eq:dc1'}
  \end{align}
  where the second inequality uses $|\delta_1 c_1'|_U = |\delta_1 c_1|_U$
    because by \Cref{eq:c}, $\delta_1 c_1' - \delta_1 c_1 = \delta_1 c^S_1$ which is supported only in $T$.
  The third inequality uses (e) and the last inequality uses $1 \le \frac{1}{\eta_1^{\sur}}$.

  \paragraph{Step 2: Construct $c_1''$ by making $c_1'$ consistent in $T$, i.e. 1d cleaning.}
  By applying \Cref{cor:rep-expansion} to the disjoint regions of $T$
    with $\hat f_0 = c_1'$,
    there exists $c^T_1 = f_0$ supported on $T$ which satisfies
    \begin{equation} \label{eq:ct1-dct1}
      (a)\, c_1' + c^T_1 \in B^1, \quad
      (b)\, |c^T_1|_T \le |c_1'|_T, \quad
      (c)\, \beta^\rep |c^T_1|_T \le |\delta_1 c_1'|_T, \quad
      (d)\, \eta^\rep |\delta_1 c^T_1|_U \le |\delta_1 c_1'|_T.
    \end{equation}
    We then set $c_1''$ to be $c_1' + c^T_1$
    which by (a) violates no checks in $T$, i.e. consistent in $T$.

  We can again bound the weight properties of $c_1''$
  \begin{align} \label{eq:c1''}
    |c_1''| = |c_1'+c^T_1|_T \le |c_1'|_T + |c^T_1|_T \le 2 |c_1'|_T \le \frac{4}{\eta_0^{\sur}} |c_1|
  \end{align}
  where the third inequality uses (b) and the last inequality uses \Cref{eq:c1'}.
  Further,
  \begin{align}
    |\delta_1 c_1''| &= |\delta_1 c_1''|_U \nonumber\\
      &\le |\delta_1 c^T_1|_U + |\delta_1 c_1'|_U \nonumber\\
      &\le \frac{1}{\eta^{\rep}} |\delta_1 c_1'|_T + |\delta_1 c_1'|_U \nonumber\\
      &\le \frac{1}{\eta^{\rep}} |\delta_1 c_1'| \nonumber\\
      &\le \frac{1}{\eta^{\rep} \eta_1^{\sur}} |\delta_1 c_1| \label{eq:dc1''}
  \end{align}
  where the third inequality uses (d) and the last inequality uses \Cref{eq:dc1'}.

  \paragraph{Step 3: Construct $\wt c_1$ from $c_1''$ by moving from $X_L$ to $X$.}
  Because $c_1''$ is consistent on each connected region of $T$,
    we can define $\wt c_1 = \cF_1^{-1} (c_1'') \in \F2^{X(1)}$ as in the proof of \Cref{thm:simple-expansion}.

  Because each connected region of $T$ has size $\ge L$,
    we can again bound the weight properties of $\wt c_1$ where
    $|\wt c_1| \le |c_1''| / L$.
  Furthermore, $|\delta_1 \wt c_1| = |\delta_1 c_1''|$
    because all the checks in $T$ are satisfied
    and by the construction of $\cF_1$,
    $|\delta_1 \wt c_1|_U = |\delta_1 c_1''|_U$.
  (Another way to see this is to use the commutative diagram \Cref{eq:cF} $\delta_1 c_1'' = \delta_1 \cF_1(c_1) = \cF_2(\delta_1 \wt c_1)$.
  Notice that $\cF_2$ does not change the weight of the vector by construction.
  Hence, $|\delta_1 \wt c_1| = |\delta_1 c_1''|$.)

  \paragraph{Step 4: Construct $\wt c_0$ from $\wt c_1$ by applying expansion assumption of $X$.}
  Notice the weight $\wt c_1$ is small:
  \begin{equation*}
    |\wt c_1| \le |c_1''| / L \le \frac{4}{\eta_0^\sur L} |c_1|
    \le \frac{4}{\eta_0^\sur L} \alpha |X_L(1)|
    \le \frac{4L}{\eta_0^\sur} \frac{\Delta_{\max}^2}{4} \alpha |X(1)|
    = \alpha_{\qLDPC} |X(1)|
  \end{equation*}
    where the fourth inequality uses \Cref{claim:X-bound}, $|X_L(1)| \le \frac{\Delta_{\max}^2}{4} L^2 |X(1)|$.
  Therefore, by the expansion assumption of $X$, there exist $\wt c_0 \in X(0)$ such that
    $\beta_{\qLDPC} |\wt c_1 + \delta_0 \wt c_0| \le |\delta_1 \wt c_1|$
    and $\gamma_{\qLDPC} |\wt c_0| \le |\wt c_1|$.

  \paragraph{Step 5: Construct $c_0''$ from $\wt c_0$ by moving back from $X$ to $X_L$.}
  Same as in the proof of \Cref{thm:simple-expansion},
    we lift $\wt c_0$ to $c_0'' = \cF_0(\wt c_0) \in \F2^{X_L(0)}$.

  Notice that each entry of $\wt c_0$ is repeated $\le (\Delta_{\max} \frac{L-1}{2} + 1)^2 \le \frac{\Delta_{\max}^2}{4} L^2$ times,
    and each entries of $\wt c_1$ is repeated between $L$ and $\Delta_{\max} \frac{L-1}{2} + 1 \le \frac{\Delta_{\max}}{2} L$ times.
  Therefore, $\beta_{\qLDPC} |\wt c_1 + \delta_0 \wt c_0| \le |\delta_1 \wt c_1|$ lifts into
    \begin{equation} \label{eq:dc0''}
      \beta_{\qLDPC} |c_1'' + \delta_0 c_0''| \le \frac{\Delta_{\max}}{2} L |\delta_1 c_1''|
    \end{equation}
    because $|c_1'' + \delta_0 c_0''| \le \frac{\Delta_{\max}}{2} L |\wt c_1 + \delta_0 \wt c_0|$
      and $|\delta_1 \wt c_1| = |\delta_1 c_1''|$.
  Further, $\gamma_{\qLDPC} |\wt c_0| \le |\wt c_1|$ lifts into
    \begin{equation} \label{eq:c0''}
      \gamma_{\qLDPC} |c_0''| \le \frac{\Delta_{\max}^2}{4} L |c_1''|
    \end{equation}
    because $|c_0''| \le \frac{\Delta_{\max}^2}{4} L^2 |\wt c_0|$
      and $L |\wt c_1| \le |c_1''|$.

  \paragraph{Wrap up.}
  We are ready to construct $c_0$ with the desired bound $\beta |c_1 + \delta_0 c_0| \le |\delta_1 c_1|$ and $\gamma |c_0| \le |c_1|$.
  This is done by utilizing the vectors we have derived from $c_1$ in the process above, $c_0^S, c_1^S, c_1', c_1^T, c_1'', \wt c_1, \wt c_0, c_0''$, and their associated inequalities.

  We set $c_0 = c_0^S + c_0''$ and check the inequalities:
  \begin{align}
    |c_1 + \delta_0 c_0| &= |c_1 + \delta_0 c_0^S + \delta_0 c_0''| \nonumber\\
      &= |c_1^S + c_1' + \delta_0 c_0''| \nonumber\\
      &= |c_1^S + c_1^T + c_1'' + \delta_0 c_0''| \nonumber\\
      &\le |c_1^S| + |c_1^T| + |c_1'' + \delta_0 c_0''| \nonumber\\
      &\le \frac{1}{\beta_1^\sur} |\delta_1 c_1| + \frac{1}{\beta^\rep} |\delta_1 c_1'|_T + \frac{1}{\beta_\qLDPC} \frac{\Delta_{\max}}{2} L |\delta_1 c_1''| \nonumber\\
      &\le \frac{1}{\beta_1^\sur} |\delta_1 c_1| + \frac{1}{\beta^\rep \eta_1^\sur} |\delta_1 c_1| + \frac{1}{\beta_\qLDPC \eta^\rep \eta_1^\sur} \frac{\Delta_{\max}}{2} L |\delta_1 c_1| \nonumber\\
      &= \left(\frac{1}{\beta_1^\sur} + \frac{1}{\beta^\rep \eta_1^\sur} + \frac{1}{\beta_\qLDPC \eta^\rep \eta_1^\sur} \frac{\Delta_{\max}}{2} L\right) |\delta_1 c_1|
  \end{align}
  where the second and third equality holds by construction.
  The fifth inequality uses \Cref{eq:cs0-cs1-dcs0-dcs1}(c), \Cref{eq:ct1-dct1}(c) and \Cref{eq:dc0''}.
  The sixth inequality uses \Cref{eq:dc1',eq:dc1''}.
  Finally,
  \begin{align}
    |c_0| &= |c_0^S + c_0''| \nonumber\\
      &\le |c_0^S| + |c_0''| \nonumber\\
      &\le \frac{2}{\beta_0^\sur} |c_1|_S + \frac{1}{\gamma_\qLDPC} \frac{\Delta_{\max}^2}{4} L |c_1''| \nonumber\\
      &\le \frac{2}{\beta_0^\sur} |c_1| + \frac{4}{\gamma_\qLDPC\eta_0^{\sur}} \frac{\Delta_{\max}^2}{4} L |c_1| \nonumber\\
      &= \left(\frac{2}{\beta_0^\sur} + \frac{4}{\gamma_\qLDPC\eta_0^{\sur}} \frac{\Delta_{\max}^2}{4} L \right) |c_1|
  \end{align}
  where the third inequality uses \Cref{eq:cs0-cs1-dcs0-dcs1}(b) and \Cref{eq:c0''}.
  The fourth inequality uses \Cref{eq:c1''}.
  This completes the proof.
\end{proof}

\section{Properties of the Generalized Repetition Code and the Generalized Surface Code}
\label{sec:prop-surface-code}

This section focuses on the two \Cref{cor:rep-expansion,cor:sur-expansion} that are used in the proof of \Cref{thm:main-expansion} in the previous section.
They are derived from \Cref{lem:rep-expansion,lem:sur-expansion}
  which are about the coboundary expansion properties of the generalized repetition code and the generalized surface code.

We briefly recall how \Cref{cor:rep-expansion,cor:sur-expansion} were used in the proof of \Cref{thm:main-expansion}.
The proof of \Cref{thm:main-expansion} is to reduce the question of the expansion property of the subdivided code $X_L$
  to the expansion property of the original code $X$.
This reduction is achieved by cleaning up the vector $c_1$ into $c_1'$ then into $c_1''$.
This cleaning is where \Cref{cor:rep-expansion,cor:sur-expansion} comes in.
The property used in cleaning is coboundary expansion and an additional condition that tracks some property on the ``boundary''.

We will discuss this additional condition later in more detail.
For now, we just want to mention that when defining the generalized repetition code and the generalized surface code,
  we additionally consider their ``boundary''.
In particular, the chain complex $\F2^{Y(i)} \xrightarrow{\delta_i} \F2^{Y(i+1)} \xrightarrow{\delta_{i+1}} \F2^{Y(i+2)}$ is extended to have the additional structure $\F2^{Y(i) \cup Y_\bound(i)} \xrightarrow{\delta_i'} \F2^{Y(i+1) \cup Y_\bound(i+1)} \xrightarrow{\delta_{i+1}'} \F2^{Y(i+2) \cup Y_\bound(i+2)}$.\footnote{Note that in the examples we study, $\delta_{i+1}\delta_i = 0$ but $\delta_{i+1}'\delta_i' \ne 0$. The maps $\delta_i'$ do not, and do not need to, form a chain complex.}
We will refer to this structure as a chain complex with boundary.
To simplify the notation, from now on $\delta_i: \F2^{Y(i) \cup Y_\bound(i)} \to \F2^{Y(i+1) \cup Y_\bound(i+1)}$ will mean the map that includes the boundary.
We will use $|f_i|_\int$ to denote the Hamming weight of the vector in $Y$,
  i.e. $|f_i|_\int = |\supp f_i \cap Y(i)|$ (the subscript stands for interior),
  and use $|f_i|_\bound$ to denote the Hamming weight in $Y_\bound$,
  i.e. $|f_i|_\bound = |\supp f_i \cap Y_\bound(i)|$.

\subsection{The Generalized Repetition Code and the Generalized Surface Code}

We first define the generalized repetition code and the generalized surface code,
  then describe their expansion properties.

The repetition code is a 1d chain of bits
  connected by a 1d chain of checks which require the neighboring bits to be the same.
The generalized repetition code is similar,
  except that now there could be one branching point at the center.
See \Cref{fig:generalized-repetition-code}.
The adjacency matrix defines the map $\F2^{Y(0)} \to \F2^{Y(1) \cup Y_\bnd(1)}$.
This can be extended to $Y^\rep: \F2 \xrightarrow{\delta_{-1}} \F2^{Y(0)} \xrightarrow{\delta_0} \F2^{Y(1) \cup Y_\bnd(1)}$
  where $\delta_{-1}$ maps $1$ to the all $1$ vector.
Notice that the restriction of the linear maps to the interior, $\F2 \xrightarrow{\delta_{-1}|_Y} \F2^{Y(0)} \xrightarrow{\delta_0|_Y} \F2^{Y(1)}$, is a chain complex.

We say a generalized repetition code has length $L$ if there are $L$ bits on the path from one boundary to another boundary.
Although the definition of length may seem arbitrary,
  it is defined in a way that eliminates the need for additional conversions when applying the result to the previous section.

\begin{figure}[H]
  \centering
  \includegraphics[width=0.99\linewidth]{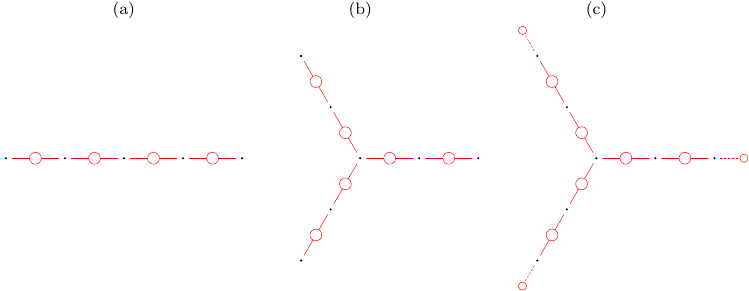}
  \vspace{1em}
  \caption{The figure illustrates the generalized repetition codes of length $5$.
  The black dots are the bits $X(0)$ and the red circles are the checks $X(1)$.
  (a) The standard repetition code.
  (b) The generalized repetition code with degree $3$ branching at the center.
  (c) The same generalized repetition code with boundaries illustrated using the smaller circles.}
  \label{fig:generalized-repetition-code}
\end{figure}

The surface code is the tensor product of two repetition codes.
Similarly, the generalized repetition code is the tensor product of two generalized repetition codes.
See \Cref{fig:generalized-surface-code}.
The adjacency matrices define the map $\F2^{Y(0)} \to \F2^{Y(1) \cup Y_\bnd(1)} \to \F2^{Y(2) \cup Y_\bnd(2)}$
  which again can be extended to
  $Y^\sur: \F2 \xrightarrow{\delta_{-1}} \F2^{Y(0)} \xrightarrow{\delta_0} \F2^{Y(1) \cup Y_\bound(1)} \xrightarrow{\delta_1} \F2^{Y(2) \cup Y_\bound(2)}$
  where $\delta_{-1}$ maps $1$ to the all $1$ vector.
Notice that the restriction of the linear maps on the interior, $\F2 \xrightarrow{\delta_{-1}|_Y} \F2^{Y(0)} \xrightarrow{\delta_0|_Y} \F2^{Y(1)} \xrightarrow{\delta_1|_Y} \F2^{Y(2)}$, is a chain complex.
We say a generalized surface code has length $L$ if the two generalized repetition codes have length $L$.

\begin{figure}[H]
  \centering
  \includegraphics[width=0.99\linewidth]{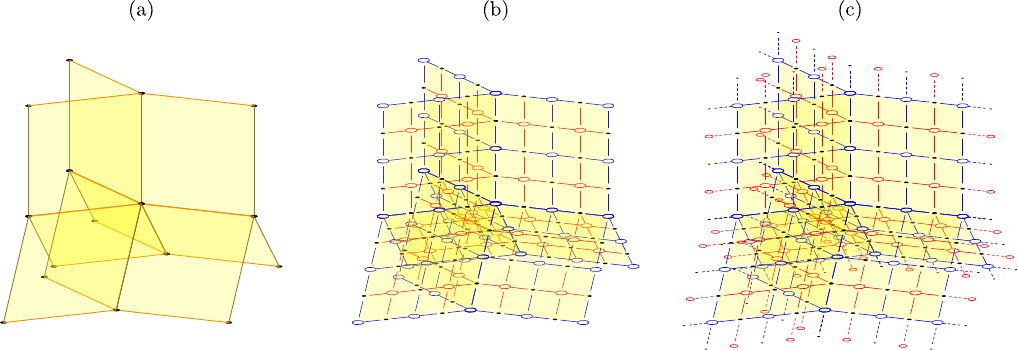}
  \vspace{1em}
  \caption{The figure illustrates the generalized surface codes of length $5$.
  The blue circles are the stabilizers $X(0)$, the black dots are the qubits $X(1)$, and the red circles are the checks $X(2)$.
  (a) The scaffold is the Cartesian product of two Y-shaped structure with degree $3$ branching.
  (b) The generalized surface code.
  (c) The same generalized surface code with boundaries illustrated using the smaller circles.}
  \label{fig:generalized-surface-code}
\end{figure}

\subsection{Expansion Properties of the Generalized Repetition Code and the Generalized Surface Code}
\label{sec:expansion-properties-generalized-code}

Having defined the generalized repetition code and generalized surface code,
  we now describe the desired expansion properties.

\begin{definition} \label{def:coboundary-expansion}
  A chain complex $Y$ with boundary $Y_\bound$ is a $(\beta_i, \eta_i)$-coboundary expander at level $i$
    if for all $\hat f_i \in \F2^{Y(i)}$, there exists $f_i \in \hat f_i + B^i \subset \F2^{Y(i)}$ (recall $B^i = \Ima \delta_{i-1}|_Y$ where both the domain and the range are restricted to $Y$.)
    such that
    \begin{equation} \label{eq:coboundary-expansion}
      (1)\,  |f_i|_\int \le |\hat f_i|_\int, \quad
      (2)\,  \beta_i |f_i|_\int \le |\delta_i \hat f_i|_\int, \quad
      (3)\,  \eta_i |\delta_i f_i|_\bound \le |\delta_i \hat f_i|_\int.
    \end{equation}
\end{definition}

Inequality (2) is the standard coboundary expansion for $Y$ in \Cref{def:standard-coboundary-expansion} and inequality (3) is the extra inequality which bounds the newly introduced objects during cleaning.

The following two lemmas say the generalized repetition code and the generalized surface code have coboundary expansion property.
We will use the subscript to keep track of the level we are discussing.

\begin{lemma} \label{lem:rep-expansion}
  A generalized repetition code with size $L$, $Y^\rep$,
  is a $(\beta_0=\frac{2}{L}, \eta_0=1)$-coboundary expander at level $0$.
\end{lemma}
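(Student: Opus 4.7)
The plan is to exploit the fact that $B^0 = \Ima\delta_{-1}|_Y = \{0, \mathbf{1}\}$, so the only two candidates for $f_0$ are $\hat f_0$ and $\hat f_0 + \mathbf{1}$, and I will pick whichever has smaller interior weight. This immediately gives (1) together with the useful bound $|f_0|_\int \le n/2$, where $n := |Y(0)|$. Moreover $\delta_0 \mathbf{1}$ is supported only on $Y_\bound(1)$, so $|\delta_0 f_0|_\int = |\delta_0 \hat f_0|_\int$; the inequalities (2) and (3) thus become statements depending only on $f_0$.

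Next I would use the spider structure of $Y$: a central bit $c$ of degree $k$ with $2 \le k \le \Delta_{\max}$, attached to $k$ arms, each a path of $r := (L-1)/2$ interior bits ending at a leaf (so $n = 1 + kr$ and $L = 2r+1$). Writing $S := \supp f_0$ and decomposing $S$ into tree-connected components $\{C_i\}$, one has $|\delta_0 f_0|_\int = \sum_i e(C_i, S^c)$ and $|\delta_0 f_0|_\bound = \sum_i \lambda(C_i)$, where $\lambda(C)$ denotes the number of leaves of $Y$ contained in $C$. So (2) and (3) reduce to the per-component inequalities $|C_i| \le \tfrac{L}{2}\, e(C_i, S^c)$ and $\lambda(C_i) \le e(C_i, S^c)$.

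For each component $C$ lying entirely within one arm, $|C| \le r < L/2$, $\lambda(C) \in \{0,1\}$, and $e(C, S^c) \ge 1$, so both per-component inequalities are routine. The main obstacle is the unique component $C^*$ containing the center (if one exists), for which $\lambda(C^*) + e(C^*, S^c) = k$, since each arm either has its leaf in $C^*$ or contributes exactly one outgoing edge. Here the choice made in Step~1 is essential: letting $\rho_i$ be the extent of $C^*$ along arm $i$ and $m := \lambda(C^*) = |\{i : \rho_i = r\}|$, the bound $|C^*| \ge 1 + mr$ together with $|C^*| \le |f_0|_\int \le n/2 = (1+kr)/2$ forces $m < k/2$. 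This already yields $\lambda(C^*) \le e(C^*, S^c)$, and combining $e(C^*, S^c) \ge \lceil (k+1)/2 \rceil$ with $|C^*| \le n/2$ and $L = 2r+1$ delivers $|C^*| \le \tfrac{L}{2}\, e(C^*, S^c)$ after a short arithmetic check. The constant $2/L$ in the statement is tight precisely in the regime where a single arm of $C^*$ is nearly filled, which is what makes this last step tight.
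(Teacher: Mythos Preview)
Your proof is correct. Both you and the paper begin identically---choose $f_0$ as whichever of $\hat f_0,\ \hat f_0+\mathbf{1}$ has smaller weight, so that $|f_0|_\int\le n/2$ and (since $\delta_0\mathbf{1}$ is supported only on $Y_\bound(1)$) the quantities $|\delta_0 f_0|_\int$ and $|\delta_0\hat f_0|_\int$ coincide---but then the arguments diverge in organization. The paper treats (2) and (3) by separate global case analyses: for (2) it splits on whether $|\delta_0\hat f_0|_\int\ge\Delta/2$, and in the small case shows the root value must be $0$ so that $f_0$ is confined to at most $|\delta_0\hat f_0|_\int$ branches; for (3) it classifies branches by the parity of their violated-check count (into counts $s,t,u$) and argues $s\le t+u$ from the half-weight bound. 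Your route is instead a single per-component decomposition of $\supp f_0$, reducing both (2) and (3) to the same pair of inequalities on each piece; arm components are trivial, and for the central component the identity $\lambda(C^*)+e(C^*,S^c)=k$ together with $|C^*|\ge 1+mr$ and $|C^*|\le n/2$ does all the work at once. Your approach is a bit more uniform and graph-theoretic, while the paper's is more hands-on; both extract the same content from the bound $|f_0|_\int\le n/2$.
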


\begin{lemma} \label{lem:sur-expansion}
  A generalized surface code with size $L$, $Y^\sur$,
  is a $(\beta_0=\frac{1}{L}, \eta_0=\frac{L-1}{4L})$-coboundary expander at level $0$ and a $(\beta_1=\frac{2}{3L}, \eta_1=\frac{1}{2})$-coboundary expander at level $1$.\footnote{The constant prefactors are not tight. Showing $\beta_i = \Theta(1/L)$ and $\eta_i = \Theta(1)$ is sufficient for our purpose.}
\end{lemma}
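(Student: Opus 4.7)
The plan is to exploit the tensor-product structure $Y^{\sur} = Y^{\rep} \otimes Y^{\rep}$ to reduce both claims to the one-dimensional Lemma~\ref{lem:rep-expansion}. Label $Y(0) = V \times V$, $Y(1) = (V \times E) \cup (E \times V)$, and $Y(2) = E \times E$, where $V = V(Y^{\rep})$ and $E = E(Y^{\rep})$; call an edge in $V \times E$ vertical and one in $E \times V$ horizontal. Each interior face $(e_1, e_2)$ has two horizontal and two vertical boundary edges, so $\delta_1 \hat f_1$ at $(e_1, e_2)$ is the $\F2$-sum of the four surrounding edge values.

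For level $0$: since $B^0 = \{0, \mathbf{1}\}$, I would first pick $f_0$ to be the lighter of $\hat f_0$ and $\hat f_0 + \mathbf{1}$; this gives (1) immediately and also controls (3), because the lighter representative has fewer vertices adjacent to the Y-tips where boundary edges attach. For (2), view $f_0$ as a $0/1$ matrix on $V \times V$. For each fixed $u$, the row $\{u\} \times V$ is a copy of $V(Y^{\rep})$ and the horizontal edges in that row are the $Y^{\rep}$-coboundary of $f_0(u, \cdot)$. I would apply Lemma~\ref{lem:rep-expansion} row-by-row, and separately column-by-column, partitioning rows and columns into ``light'' and ``heavy'' by density; the fact that $f_0$ is the smaller coset representative ensures that enough rows are light, and a careful summation yields $|f_0|_{\int} \leq L \cdot |\delta_0 f_0|_{\int}$, i.e.\ $\beta_0 = \Theta(1/L)$.

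For level $1$: I would construct $c_0 \in \F2^{Y(0)}$ in two passes so that $f_1 = \hat f_1 + \delta_0 c_0|_Y$ is small. First pass (columns): for each fixed $v$, the vertical edges $V \times \{v\}$ together with the face-syndrome $E \times \{v\}$ live on a chain isomorphic to a generalized repetition code, so Lemma~\ref{lem:rep-expansion} (viewed as a short syndrome-matching) produces a $c_0^{(1)}$ supported on $V \times \{v\}$ whose $\delta_0$-correction reduces the vertical weight in that column to at most $(L/2)$ times the column-syndrome weight. Second pass (rows): perform the analogous horizontal cleaning. The key observation is that the second pass only activates on rows still carrying residual syndromes after the first pass, a subset of the original syndrome support, so the passes do not compound. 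Summing gives $|f_1|_{\int} = O(L) \cdot |\delta_1 \hat f_1|_{\int}$, hence $\beta_1 = \Theta(1/L)$. The leakage $|\delta_1 f_1|_{\bound}$ is bounded because each boundary violation produced during cleaning ``consumes'' one interior violation, yielding $\eta_1 = \Theta(1)$.

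The main obstacle is carrying out the two-pass cleaning at level $1$ without double-counting: a $\delta_0$-correction at vertex $(u,v)$ simultaneously affects the entire row and column through $(u,v)$, and the Y-branching at the center means that each such ``row'' and ``column'' actually forks into three sub-paths at the hub. I would handle this by partitioning $Y^{\sur}$ into the nine arm-by-arm rectangles meeting at the central hub, running the cleaning on each rectangle separately, and absorbing the $O(1)$-size hub into the final constants. Pinning down the stated explicit prefactors $\beta_0 = 1/L$, $\eta_0 = (L-1)/(4L)$, $\beta_1 = 2/(3L)$, $\eta_1 = 1/2$ (rather than $\Theta(1/L)$ and $\Theta(1)$) would then be a bookkeeping exercise on top of this outline.
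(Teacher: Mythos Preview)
Your level-$0$ outline is plausible but differs from the paper's route. The paper does not do a row/column light--heavy partition; instead it recasts level-$0$ expansion as an isoperimetric (functional) inequality on the graph $\cG$ whose vertices are $Y^{\sur}(0)$ and edges are $Y^{\sur}(1)$, observes that $\cG$ is the Cartesian product of the two repetition-code graphs, and shows (\Cref{lem:functional-inequality}) that the relevant functional inequalities are preserved under Cartesian product. This converts the $2$D bound into two applications of the $1$D bound with no case analysis. Your approach might be completable, but the tensorization of functional inequalities is cleaner and yields the stated constants directly.

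Your level-$1$ plan has a genuine gap. The two-pass column-then-row cleaning does not control the interaction you yourself flag: applying $\delta_0$ at a vertex $(u,v)$ flips edges in \emph{both} the row through $u$ and the column through $v$, so the first pass can inflate the horizontal edge weight far beyond $|\delta_1 \hat f_1|_\int$. Your ``key observation'' that the second pass only activates on rows with residual syndrome does not help: adding $\delta_0 c_0^{(1)}$ leaves $\delta_1$ unchanged, so the rows with syndrome after pass one are exactly the rows with syndrome before, and meanwhile their horizontal weight may have grown arbitrarily. Partitioning into the $\Delta_1\Delta_2$ arm-by-arm rectangles does not fix this, because the same row/column coupling persists inside each rectangle. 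The paper in fact states that it could not make a tensor-product argument work at level $1$ and proceeds entirely differently: it decomposes $\hat f_1$ into path components (endpoints are violated checks or boundary), handles components away from the branching seam by shortest-path replacement, and for components touching the seam first pushes them into the ``next-to-seam'' region and then runs a local minimality argument (the $p+q\ge r$ condition across cross-sections) to bound $|\delta_1 f_1|_\bnd$. This seam analysis is the technical heart of the proof and has no analogue in your outline.
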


We attempt to phrase the lemmas above to elucidate the structure.
However, it is not obvious how the current statements relate to the main proof presented in the previous section.
So the following rephrasing seeks to enhance explicitness and usability.

\begin{corollary} \label{cor:rep-expansion}
  Given a generalized repetition code with size $L$, $Y^\rep: \F2 \xrightarrow{\delta_{-1}} \F2^{Y(0)} \xrightarrow{\delta_0} \F2^{Y(1) \cup Y_\bound(1)}$, for all $\hat f_0 \in \F2^{Y(0)}$, there exists $f_0 \in \F2^{Y(0)}$ such that
  \begin{equation} \label{eq:rep-expansion}
    (a)\, \hat f_0 + f_0 \in B^0, \quad
    (b)\, |f_0|_\inter \le |\hat f_0|_\inter, \quad
    (c)\, \beta^\rep |f_0|_\inter \le |\delta_0 \hat f_0|_\inter, \quad
    (d)\, \eta^\rep |\delta_0 f_0|_\bound \le |\delta_0 \hat f_0|_\inter,
  \end{equation}
  with $\beta^\rep = \frac{2}{L}$ and $\eta^\rep = 1$.
\end{corollary}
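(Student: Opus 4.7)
The plan is to deduce Corollary \ref{cor:rep-expansion} as a direct unpacking of Lemma \ref{lem:rep-expansion} at level $i = 0$. By that lemma, the generalized repetition code $Y^\rep$ is a $(\beta_0 = 2/L,\, \eta_0 = 1)$-coboundary expander at level $0$. Applying Definition \ref{def:coboundary-expansion} to any $\hat f_0 \in \F2^{Y(0)}$ therefore guarantees the existence of some $f_0 \in \hat f_0 + B^0 \subset \F2^{Y(0)}$ satisfying
\begin{equation*}
|f_0|_\int \le |\hat f_0|_\int,\qquad \tfrac{2}{L}\,|f_0|_\int \le |\delta_0 \hat f_0|_\int,\qquad |\delta_0 f_0|_\bound \le |\delta_0 \hat f_0|_\int.
\end{equation*}

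First I would observe that $f_0 \in \hat f_0 + B^0$ is the same as $\hat f_0 + f_0 \in B^0$, since we are working in characteristic two; this supplies condition (a). The remaining three displayed inequalities are precisely (b), (c), and (d) under the renaming $\beta^\rep = \beta_0 = 2/L$ and $\eta^\rep = \eta_0 = 1$. In short, the corollary is a syntactic rephrasing of the lemma tailored to how the cleaning step is invoked in the proof of \Cref{thm:main-expansion}, where the form $\hat f_0 + f_0 \in B^0$ is more convenient than $f_0 \in \hat f_0 + B^0$.

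Consequently, there is no substantive obstacle at the level of the corollary itself; all of the mathematical content sits inside Lemma \ref{lem:rep-expansion}. If one were forced to also sketch that lemma from scratch, my plan would be to exploit the tree structure of $Y^\rep$, namely a Y-shape with a single branching vertex: for each arm, I would take the cleaning $f_0$ to flip $\hat f_0$ along a shortest run of bits that makes each internal check satisfied, which produces a representative of the same $B^0$-class whose interior weight is at most $(L/2)$ times the number of violated internal checks, giving $\beta_0 = 2/L$. The boundary estimate $\eta_0 = 1$ would then follow because each boundary check at the far end of an arm can be charged one-to-one to an interior violation on that same arm, so that $|\delta_0 f_0|_\bound \le |\delta_0 \hat f_0|_\int$. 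The arms being independent (they only share the central branching vertex) is what lets these per-arm arguments be summed without any interference.
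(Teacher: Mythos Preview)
Your proof of the corollary itself is correct and coincides with the paper's: apply Lemma~\ref{lem:rep-expansion} and Definition~\ref{def:coboundary-expansion} at level $0$, note that $f_0 \in \hat f_0 + B^0$ is equivalent to (a), and read off (b),(c),(d) from (1),(2),(3).

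Your auxiliary sketch of Lemma~\ref{lem:rep-expansion}, however, rests on a misconception. Because $\delta_{-1}$ sends $1\in\F2$ to the all-ones vector, we have $B^0=\{0,\mathbbm{1}\}$, so the coset $\hat f_0+B^0$ contains exactly two elements: $\hat f_0$ and its global complement. There is no mechanism for ``flipping along a shortest run on each arm'' while staying in the same $B^0$-coset, and the arms are not independent---they share the root, and the only available coboundary flips every bit simultaneously. The paper's actual proof (Section~\ref{sec:proof-lem-rep-expansion}) simply takes $f_0$ to be whichever of $\hat f_0,\ \hat f_0+\mathbbm{1}$ has interior weight at most $|Y(0)|/2$, and then does casework on the value at the branching root to obtain $\beta_0=2/L$ and $\eta_0=1$. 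This does not affect the corollary, whose proof only quotes the lemma as a black box, but your sketched route to the lemma would not go through.
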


\begin{proof} [Proof of \Cref{cor:rep-expansion}]
  By \Cref{lem:rep-expansion}, we can find $f_0 \in \hat f_0 + B^0$ where $f_0$ satisfies the inequalities in \Cref{eq:coboundary-expansion} at level $i=0$.
  By construction, we have (a).
  One easily check that (1) $\implies$ (b), (2) $\implies$ (c), (3) $\implies$ (d).
\end{proof}

\begin{corollary} \label{cor:sur-expansion}
  Given a generalized surface code with size $L$, $Y^\sur: \F2 \xrightarrow{\delta_{-1}} \F2^{Y(0)} \xrightarrow{\delta_0} \F2^{Y(1) \cup Y_\bound(1)} \xrightarrow{\delta_1} \F2^{Y(2) \cup Y_\bound(2)}$, for all $\hat f_1 \in \F2^{Y(1)}$, there exist $f_0 \in \F2^{Y(0)}$ and $f_1 \in \F2^{Y(1)}$ such that
  \begin{multline} \label{eq:sur-expansion}
    (a)\, \hat f_1 = (\delta_0 f_0 + f_1)|_\inter, \quad
    (b)\, \frac{\beta^\sur_0}{2} |f_0|_\inter \le |\hat f_1|_\inter, \quad
    (c)\, \beta^\sur_1 |f_1|_\inter \le |\delta_1 \hat f_1|_\inter, \quad \\
    (d)\, \frac{\eta^\sur_0}{2} |\delta_0 f_0|_\bound \le |\hat f_1|_\inter, \quad
    (e)\, \eta^\sur_1 |\delta_1 f_1|_\bound \le |\delta_1 \hat f_1|_\inter,
  \end{multline}
  with $\beta^\sur_0 = \frac{1}{L}, \eta^\sur_0 = \frac{L-1}{4L}, \beta^\sur_1 = \frac{2}{3L}, \eta^\sur_0 = \frac{1}{2}$.
\end{corollary}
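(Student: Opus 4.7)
The plan is to derive this corollary as a repackaging of Lemma~\ref{lem:sur-expansion}, applied successively at level 1 and then at level 0, exactly in the spirit of the proof of Corollary~\ref{cor:rep-expansion}. The tuple $(f_0, f_1)$ in the statement splits naturally: $f_1$ should be the output of the level-1 coboundary expander acting on $\hat f_1$, and $f_0$ should be the output of the level-0 coboundary expander acting on whatever pre-image is exhibited by the level-1 coset condition.

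First I would apply Lemma~\ref{lem:sur-expansion} at level 1 to the input $\hat f_1 \in \F2^{Y(1)}$. This yields a vector $f_1 \in \hat f_1 + B^1 \subset \F2^{Y(1)}$ satisfying the three inequalities of Definition~\ref{def:coboundary-expansion}; inequalities (2) and (3) translate immediately into (c) and (e) of the corollary with $\beta_1^{\sur} = \frac{2}{3L}$ and $\eta_1^{\sur} = \frac{1}{2}$. Membership $f_1 \in \hat f_1 + B^1$ means that there exists a pre-image $g_0 \in \F2^{Y(0)}$ with $(\delta_0 g_0)|_\inter = \hat f_1 + f_1$, which will eventually give (a).

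Second I would apply Lemma~\ref{lem:sur-expansion} at level 0 to $g_0$, obtaining $f_0 \in g_0 + B^0$ satisfying the level-0 inequalities. The crucial observation is that adding a level-0 coboundary does not change any coboundary: since $B^0 = \im \delta_{-1}|_Y$ consists of the all-$1$ vector (or $0$), we have $\delta_0 f_0 = \delta_0 g_0$ both on the interior and on the boundary. Hence (a) persists: $(\delta_0 f_0 + f_1)|_\inter = (\delta_0 g_0 + f_1)|_\inter = \hat f_1$. Similarly, $|\delta_0 f_0|_\bound = |\delta_0 g_0|_\bound$ and $|\delta_0 f_0|_\inter = |\delta_0 g_0|_\inter$, so the level-0 inequalities (2) and (3) become $\beta_0^{\sur} |f_0|_\inter \le |\delta_0 g_0|_\inter$ and $\eta_0^{\sur} |\delta_0 f_0|_\bound \le |\delta_0 g_0|_\inter$.

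To finish, I combine these with $|\delta_0 g_0|_\inter = |\hat f_1 + f_1|_\inter \le |\hat f_1|_\inter + |f_1|_\inter$, and then use inequality (1) from the level-1 application, $|f_1|_\inter \le |\hat f_1|_\inter$, to bound this by $2|\hat f_1|_\inter$. Dividing by $2$ yields exactly (b) and (d). The step involving the factor of $2$ is the only non-trivial accounting — it is essentially the triangle inequality chained with the norm-control clause (1) of the coboundary expander. Since Lemma~\ref{lem:sur-expansion} is the real content, there is no further obstacle; the corollary is just the convenient reformulation that plugs directly into Step 1 of the proof of Theorem~\ref{thm:main-expansion}.
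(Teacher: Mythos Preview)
Your argument is correct and matches the paper's proof: apply \Cref{lem:sur-expansion} at level $1$ to obtain $f_1$, then at level $0$ to the pre-image $g_0$ (the paper calls it $\hat f_0$) to obtain $f_0$, and close with the triangle inequality combined with clause (1) at level $1$ to absorb the factor of $2$. One small correction: the claim that $\delta_0 f_0 = \delta_0 g_0$ ``on the boundary'' is not true in general (the paper explicitly notes that the extended maps including $Y_\bnd$ do \emph{not} form a chain complex, so $\delta_0 \mathbbm{1}$ need not vanish on $Y_\bnd(1)$), but fortunately you never actually use it --- inequality (3) at level $0$ already reads $\eta_0^\sur |\delta_0 f_0|_\bnd \le |\delta_0 g_0|_\int$ directly from \Cref{def:coboundary-expansion}, and (a) only concerns the interior restriction, where the chain-complex identity does hold.
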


\begin{proof} [Proof of \Cref{cor:sur-expansion}]
  By \Cref{lem:sur-expansion}, we can write $\hat f_1 = f_1 + (\delta_0 \hat f_0)|_\int$ for some $\hat f_0 \in \F2^{X(0)}$ where $f_1$ satisfies the inequalities in \Cref{eq:coboundary-expansion} at level $i=1$.
    And we can find $f_0 \in \hat f_0 + B^0$ where $f_0$ satisfies the inequalities in \Cref{eq:coboundary-expansion} at level $i=0$.
  By construction, we have (a).
  One can check that (2) at level 1 $\implies$ (c) and
  (3) at level 1 $\implies$ (e).
  To show (b) and (d), we use (2) and (3) at level 0,
    $\beta^\sur_0 |f_0|_\inter \le |\delta_0 \hat f_0|_\inter$ and $\eta^\sur_0 |\delta_0 f_0|_\bnd \le |\delta_0 \hat f_0|_\inter$
    together with
  \begin{equation}
    |\delta_0 \hat f_0|_\inter \le |\hat f_1|_\inter + |f_1|_\inter \le 2 |\hat f_1|_\inter
  \end{equation}
  where the last inequality uses (1) at level 1, $|f_1|_\inter \le |\hat f_1|_\inter$.
\end{proof}

The remainder of this section is devoted to showing \Cref{lem:rep-expansion,lem:sur-expansion}.

\subsection{Proof of \Cref{lem:rep-expansion} for Generalized Repetition Codes}
\label{sec:proof-lem-rep-expansion}

We observe that there are only two elements in $\hat f_0 + B^0$ since $B^0 = \set{0, \mathbbm{1}}$ where $0$ is the all $0$ vector and $\mathbbm{1}$ is the all $1$ vector.
So all is left is to figure out which choice satisfies the inequalities.
We denote $\Delta$ as the degree of the branching.
We call the branching point the root.

\begin{proof} [Proof of \Cref{lem:rep-expansion}]
  We set $f_0 = \hat f_0$ if $|\hat f_0|_\int \le |Y(0)|/2$.
  Otherwise, we set $f_0 = \hat f_0 + \mathbbm{1}$.
  This immediately satisfies inequality (1) $|f_0|_\int \le |\hat f_0|_\int$.
  Notice that $|f_0|_\int \le |Y(0)|/2$.

  We now show (2) $\beta_0 |f_0|_\int \le |\delta_0 \hat f_0|_\int$ with $\beta_0 = \frac{2}{L}$.
    If $|\delta_0 \hat f_0|_\int \ge \Delta/2$, because $|f_0|_\int \le |Y(0)|/2$ and $|Y(0)| = \Delta \frac{L-1}{2} + 1$, the inequality is satisfied.
  Otherwise, $|\delta_0 \hat f_0|_\int < \Delta/2$, this means $\delta_0 \hat f_0|_\int$ is not supported on $\Delta - |\delta_0 \hat f_0|_\int$ branches
    which means these branches take the same value as the root.
    We observe that this value has to be $0$
      otherwise the root and at least half of the branches have value $1$
      which violates $|f_0|_\int \le |Y(0)|/2$.
  Therefore, $|f_0|_\int \le \frac{L}{2} |\delta_0 \hat f_0|_\int$,
    since $f_0$ can be supported on at most $|\delta_0 \hat f_0|_\int$ branches and each branch has weight at most $\frac{L}{2}$.

  Finally, we show (3) $\eta_0 |\delta_0 f_0|_\bnd \le |\delta_0 \hat f_0|_\int$ with $\eta_0 = 1$.
    We divide the $\Delta$ branches into different groups based on the number of $1$s of $\delta_0 \hat f_0|_\int$ in each branch.
    Let $s, t, u$ be the number of branches with no $1$, with an odd number of $1$s, and an even nonzero number of $1$s, respectively.
    Then $|\delta_0 \hat f_0|_\int \ge t + 2u$.
  If the root value of $f_0$ is $0$,
    then $|\delta_0 f_0|_\bnd = t$ which satisfies the inequality
    $|\delta_0 f_0|_\bnd \le |\delta_0 \hat f_0|_\int$.
  Otherwise, the root value of $f_0$ is $1$,
    then $|\delta_0 f_0|_\bnd = s + u$.
  So it suffices to show $s \le t + u$.
  Notice these $s$ branches takes the same value as the root which is $1$.
  Since $|f_0|_\int \le |Y(0)|/2$, this means $s$ covers at most half of the $\Delta$ branches.
  Thus, $s \le t + u$.
\end{proof}

\subsection{Proof of \Cref{lem:sur-expansion} for Generalized Surface Codes}

One might hope to demonstrate the expansion of the generalized surface code using two facts: (1) the generalized surface code is a tensor product of two generalized repetition codes (2) the expansion property is preserved under tensor product.
It is indeed the case that we can establish the level $0$ expansion through this approach, which we will explain.
However, it is not clear to us if a tensor product argument applies to the level $1$ expansion.
Therfore, we will show the level $1$ expansion in an ad hoc method through case studies.
It would be appealing to have a simpler proof.

\subsubsection{Level $0$ Expansion}
Note that the level $0$ expansion is essentially the isoperimetric inequalities which is known to behave nicely under tensor product \cite{chung1998isoperimetric,tillich2000edge}.
This gives inequality (2).
To show inequality (3), we will apply a similar proof technique.

Consider the graph $\cG_1$ corresponding to the generalized repetition code where $Y^\rep(0)$ are the vertices and $Y^\rep(1)$ (without $Y^\rep_\bnd(1)$) are the edges.
Similarly, consider the graph $\cG$ corresponding to the generalized surface code where $Y^\sur(0)$ are the vertices and $Y^\sur(1)$ (without $Y^\sur_\bnd(1)$) are the edges.
We observe that the graph from the generalized surface code $\cG$ is the Cartesian product of two graphs $\cG_1$ and $\cG_2$ originating from the generalized repetition code.
To capture the notion of boundary $|\delta_0 f_0|_\bnd$ in the context of graphs,
  we also consider the boundary vertices which are the vertices in $Y^\rep$ ($Y^\sur$) next to $Y^\rep_\bnd$ ($Y^\sur_\bnd$).
  See \Cref{fig:proof-level-0-boundary}.

\begin{definition} [Cartesian product of two graphs with boundary]
  Given two (undirected) graphs $\cG_1$ and $\cG_2$.
  Let $V_i$, $E_i$, and $V_i^\bnd \subset V_i$ be the vertex set, the edge set, and the boundary vertex set of $\cG_i$, respectively.

  The Cartesian product of two graphs $\cG_1$ and $\cG_2$, denoted as $\cG_1 \times \cG_2$, is a graph with vertex set $V = V_1 \times V_2$
  and two vertices $(x_1, x_2)$ and $(y_1, y_2)$ are adjacent iff $x_2 = y_2$ and $\{x_1, y_1\}$ is an edge of $\cG_1$ or $x_1 = y_1$ and $\{x_2, y_2\}$ is an edge of $\cG_2$ which can be expressed as $E = (E_1 \times V_2) \cup (V_1 \times E_2)$.
  The new boundary vertex (multi)set is $V^\bnd = (V_1^\bnd \times V_2) \sqcup (V_1 \times V_2^\bnd)$, where $\sqcup$ is the disjoint union.
\end{definition}

\begin{figure}[H]
  \centering
  \includegraphics[width=0.7\linewidth]{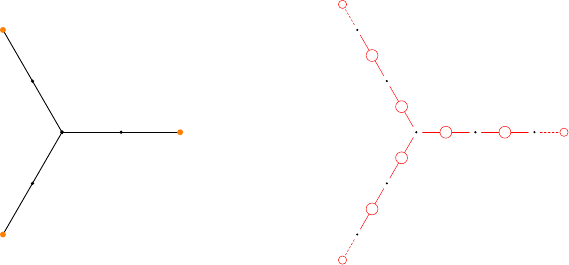}
  \vspace{1em}
  \caption{An illustration of $\cG_1$ on the left that corresponds to $Y^\rep$ on the right. Note that $V_1^\bnd$ is marked in orange.}
  \label{fig:proof-level-0-boundary}
\end{figure}

We interpret $V^\bnd$ as a multiset which double counts the vertices $V_1^\bnd \times V_2^\bnd$
  so that $|\delta_0 f_0|_\bnd$ for the generalized surface code can be expressed as $\sum_{x \in V^\bnd} 1_{f_0(x) \ne 0}$.

It is known that isoperimetric inequalities are closely related to functional inequalities.
These functional inequalities behave nicely under tensor product.
Therefore, our strategy is to use the functional inequalities of the generalized repetition codes
  to derive the functional inequalities for the generalized surface codes,
  which will then give the desired isoperimetric inequalities.

\begin{definition}
  We say a graph with boundary $\cG=(V, E, V^\bnd)$ satisfies $(C, C^\bnd)$-functional inequalities if for all function $h: V \to \set{0,1}$\footnote{The functional inequalities are typically studied for more general functions $h: V \to \RR$.
    In general, these two scenarios are equivalent under some weak assumptions.
    However, we do not need this stronger form and will omit this discussion.
    For more information, see \cite[Sec 2]{tillich2000edge}.}
  \begin{gather}
    \sum_{\{x, y\} \in E} |h(x) - h(y)| \ge \frac{C}{|V|} \sum_{x, y \in V} |h(x) - h(y)|, \label{eq:functional-inequality-1} \\
    \sum_{\{x, y\} \in E} |h(x) - h(y)| \ge \frac{C^\bnd}{|V^\bnd|} \sum_{x \in V^\bnd, y \in V} |h(x) - h(y)|. \label{eq:functional-inequality-2}
  \end{gather}
\end{definition}

\begin{lemma}
  \label{lem:functional-inequality}
  If $\cG_1, \cG_2$ satisfy $(C, C^\bnd)$-functional inequalities,
  then $\cG_1 \times \cG_2$ satisfies $(C, \min(C, C^\bnd))$-functional inequalities.
\end{lemma}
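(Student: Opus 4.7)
The plan is a standard tensorization argument for functional inequalities, taking care to obtain the sharper constant $\min(C,C^\bnd)$ in the boundary inequality. First, I would decompose the edge-sum of $\cG_1 \times \cG_2$ along the two product directions, writing
\begin{equation*}
\text{LHS}_1 = \sum_{x_2 \in V_2}\sum_{\{x_1,y_1\}\in E_1} |h(x_1,x_2)-h(y_1,x_2)|,
\end{equation*}
and $\text{LHS}_2$ symmetrically, so that $\sum_{\{x,y\}\in E}|h(x)-h(y)| = \text{LHS}_1+\text{LHS}_2$. Applying the two functional inequalities of $\cG_1$ to the fiber function $x_1 \mapsto h(x_1,x_2)$ for each fixed $x_2$, and summing over $x_2$, gives
\begin{equation*}
\text{LHS}_1 \geq \frac{C}{|V_1|}A_1, \qquad \text{LHS}_1 \geq \frac{C^\bnd}{|V_1^\bnd|}B_1,
\end{equation*}
where $A_1 = \sum_{x_2 \in V_2}\sum_{x_1,y_1 \in V_1} |h(x_1,x_2)-h(y_1,x_2)|$ and $B_1$ is the analog with $x_1$ restricted to $V_1^\bnd$; define $A_2,B_2$ symmetrically for $\cG_2$.

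Next, I would use the triangle inequality along the two axes to reduce the global pair sums on $V$ to these fiberwise quantities. Writing $|h(x_1,x_2)-h(y_1,y_2)| \leq |h(x_1,x_2)-h(y_1,x_2)|+|h(y_1,x_2)-h(y_1,y_2)|$ and counting free coordinates gives $\sum_{x,y \in V}|h(x)-h(y)| \leq |V_2|A_1 + |V_1|A_2$, after which \eqref{eq:functional-inequality-1} is immediate from $\tfrac{C}{|V|}(|V_2|A_1+|V_1|A_2)=\tfrac{C}{|V_1|}A_1+\tfrac{C}{|V_2|}A_2 \leq \text{LHS}_1+\text{LHS}_2$. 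Splitting $V^\bnd = (V_1^\bnd\times V_2)\sqcup(V_1\times V_2^\bnd)$ and doing the same axis-by-axis expansion on each half produces the four-term bound
\begin{equation*}
\sum_{x\in V^\bnd,y\in V}|h(x)-h(y)| \leq |V_2|B_1 + |V_1^\bnd|A_2 + |V_2^\bnd|A_1 + |V_1|B_2.
\end{equation*}

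The step where $D := \min(C,C^\bnd)$ enters is to split $\text{LHS}_1$ as a convex combination matched to the identity $|V^\bnd| = |V_1^\bnd||V_2|+|V_1||V_2^\bnd|$:
\begin{equation*}
\text{LHS}_1 = \frac{|V_2^\bnd||V_1|}{|V^\bnd|}\text{LHS}_1 + \frac{|V_2||V_1^\bnd|}{|V^\bnd|}\text{LHS}_1.
\end{equation*}
Applying the $C$-bound to the first piece and the $C^\bnd$-bound to the second piece, and using $C,C^\bnd \geq D$, yields $\text{LHS}_1 \geq \tfrac{D}{|V^\bnd|}(|V_2^\bnd|A_1+|V_2|B_1)$. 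Running the symmetric argument on $\text{LHS}_2$ and adding gives $\text{LHS} \geq \tfrac{D}{|V^\bnd|}(|V_2|B_1+|V_1^\bnd|A_2+|V_2^\bnd|A_1+|V_1|B_2)$, which combined with the four-term bound above is \eqref{eq:functional-inequality-2}.

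The main obstacle I expect is exactly this convex-combination step. The two fiberwise bounds on $\text{LHS}_1$ share the same left-hand side, so one cannot simply add them to control $|V_2^\bnd|A_1+|V_2|B_1$ at once; the weights $|V_1||V_2^\bnd|$ and $|V_1^\bnd||V_2|$ are forced by what must cancel the $|V_1|$ and $|V_1^\bnd|$ denominators coming from the two inequalities, and this is exactly what reconstructs $|V^\bnd|$ in the denominator with coefficient $D$ in the numerator. Everything else is direct bookkeeping from the Cartesian-product structure of $E$ and $V^\bnd$.
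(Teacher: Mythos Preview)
Your proposal is correct and follows essentially the same route as the paper's proof: decompose the product edge-sum by direction, apply the fiber inequalities of $\cG_1$ and $\cG_2$, and use the triangle inequality along the two axes to reduce the global pair sums (and the boundary pair sums after splitting $V^\bnd = (V_1^\bnd\times V_2)\sqcup(V_1\times V_2^\bnd)$) to the fiberwise quantities. The only cosmetic difference is that the paper runs the boundary inequality as a single chain of upper bounds starting from $\sum_{x\in V^\bnd,\,y\in V}|h(x)-h(y)|$ and ending at $\frac{|V^\bnd|}{\min(C,C^\bnd)}\sum_{\{x,y\}\in E}|h(x)-h(y)|$, whereas you package the same four terms via the convex-combination split of $\text{LHS}_1$ and $\text{LHS}_2$; algebraically these are the same computation.
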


We will prove the lemma at the end of this section.
We first discuss how the functional inequalities and the coboundary expander are related.
We then combine everything to prove the level $0$ coboundary expansion of the surface code.

Let $h$ be $f_0$ where $f_0$ is the vector with the smaller weight between $\hat f_0$ and $\hat f_0 + \mathbbm{1}$.
In the language of chain complexes, the two functional inequalities become
\begin{gather}
  |\delta_0 f_0|_\int \ge \frac{C}{|V|} 2|f_0|\Big(|V|-|f_0|\Big), \label{eq:functional-inequality-f0-1} \\
  |\delta_0 f_0|_\int \ge \frac{C^\bnd}{|V^\bnd|} \bigg(|\delta_0 f_0|_\bnd \Big(|V|-|f_0|\Big) + \Big(|V^\bnd|-|\delta_0 f_0|_\bnd\Big) |f_0|\bigg). \label{eq:functional-inequality-f0-2}
\end{gather}
Applying this relation, we obtain the following facts.

\begin{claim} \label{claim:repetition-code-has-functional-inequality}
  The graphs corresponding to the generalized repetition code $\cG_1, \cG_2$ satisfy $(C=\frac{1}{L}, C^\bnd=\frac{1}{L})$-functional inequalities.
\end{claim}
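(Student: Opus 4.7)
The plan is to prove both functional inequalities by exploiting the tree structure of $\cG_i$ via a path-counting identity. In a tree, for any two vertices $x,y$ with unique path $P(x,y)$, the telescoping bound
\[
|h(x)-h(y)| \;\le\; \sum_{e \in P(x,y)} |h(u_e)-h(v_e)|
\]
holds for any $\{0,1\}$-valued function $h$, where $u_e,v_e$ are the endpoints of edge $e$. Summing over all pairs and swapping the order of summation converts each functional inequality into a bound that compares $\sum_{e\in E}|h(u_e)-h(v_e)|$ to a weighted count of the paths passing through each edge. Since each edge $e$ in a tree separates $V$ into two components $A_e,B_e$, the pair $(x,y)$ crosses $e$ iff $\{x,y\}$ has one endpoint in each component, which makes the weights purely combinatorial.

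First, I would treat inequality \eqref{eq:functional-inequality-1}. Swapping sums gives $\sum_{x,y\in V}|h(x)-h(y)| \le \sum_{e}|h(u_e)-h(v_e)|\cdot 2|A_e||B_e|$, so it suffices to show $2|A_e||B_e|\le L\,|V|$ for every $e$. The Y-shape has $\Delta\ge 2$ branches of length $(L-1)/2$ meeting at a root, so removing any edge $e$ in branch $i$ at distance $d\in\{1,\dots,(L-1)/2\}$ from the root produces an "outer" component of size exactly $(L-1)/2 - d + 1 \le (L-1)/2 < L/2$. Hence $\min(|A_e|,|B_e|)< L/2$, which yields $|A_e||B_e| \le (L/2)|V|$ and gives $C = 1/L$.

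For inequality \eqref{eq:functional-inequality-2}, I would apply the same telescoping identity but restricted to pairs with $x\in V^\bnd$. The boundary vertex set consists of the $\Delta$ tip vertices, so $|V^\bnd|=\Delta$. For an edge $e$ in branch $i$, the outer component $A_e$ contains exactly one tip (the tip of branch $i$) and $B_e$ contains the remaining $\Delta-1$ tips, giving the path count
\[
|\{(x,y)\in V^\bnd\times V : e\in P(x,y)\}| \;=\; |B_e| + (\Delta-1)|A_e| \;=\; |V| + (\Delta-2)|A_e|.
\]
Using $|A_e|\le (L-1)/2$ and $|V|=\Delta(L-1)/2+1$, a short algebraic check confirms $|V|+(\Delta-2)|A_e|\le L\Delta = L\,|V^\bnd|$ for all $\Delta\ge 2$, $L\ge 1$. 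This yields $C^\bnd = 1/L$.

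There is no real obstacle here beyond the routine edge-by-edge bookkeeping and the final arithmetic showing $(\Delta-1)(L-1)+1 \le L\Delta$; the essential combinatorial content is the trivial fact that deleting any edge of a Y-shaped tree isolates at most one branch, so the smaller component has at most $(L-1)/2$ vertices.
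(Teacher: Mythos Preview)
Your proof is correct and takes a genuinely different route from the paper. The paper does not argue combinatorially on the tree; instead it invokes the already-established coboundary expansion of the generalized repetition code (\Cref{lem:rep-expansion}, giving $|\delta_0 f_0|_\int \ge \tfrac{2}{L}|f_0|$ and $|\delta_0 f_0|_\int \ge |\delta_0 f_0|_\bnd$ for the lighter half $f_0$) and then algebraically massages these two bounds into the form of \Cref{eq:functional-inequality-f0-1,eq:functional-inequality-f0-2}, using $|V|/|V^\bnd| \le L/2$. Your approach instead uses the telescoping path identity on a tree and the edge-cut count $2|A_e||B_e|$ (respectively $|B_e|+(\Delta-1)|A_e|$), which reduces both inequalities to the single combinatorial fact that deleting any edge isolates at most $(L-1)/2$ vertices. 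This is more elementary and self-contained---it does not depend on \Cref{lem:rep-expansion} or its case analysis---and it would work verbatim for any tree whose edge-cuts satisfy the same size bound. The paper's route, on the other hand, is more modular within its own framework: since \Cref{lem:rep-expansion} is needed elsewhere anyway, deriving the functional inequalities from it avoids redoing the branch bookkeeping.
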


\begin{proof}
  Recall the generalized repetition code has $(\beta_0 = \frac{2}{L}, \eta_0=1)$-coboundary expansion.
  In particular, for $|f_0| \le |V|/2$
  \begin{gather}
    |\delta_0 f_0|_\int \ge \frac{2}{L} |f_0|, \\
    |\delta_0 f_0|_\int \ge |\delta_0 f_0|_\bnd.
  \end{gather}

  Since $|V| \ge |V| - |f_0|$ and $|V^\bnd| \ge |V^\bnd| - |\delta_0 f_0|_\bnd$, the inequalities above imply
  \begin{equation}
    |\delta_0 f_0|_\int
      \ge \frac{1/L}{|V|} 2 |f_0| |V|
      \ge \frac{1/L}{|V|} 2 |f_0| \Big(|V|-|f_0|\Big),
  \end{equation}
  \begin{align}
    \Big(\frac{L}{2} + \frac{L}{2}\Big) |\delta_0 f_0|_\int
      &\ge \frac{L}{2} |\delta_0 f_0|_\bnd + |f_0| \\
      &\ge \frac{1}{|V^\bnd|} \Big(|\delta_0 f_0|_\bnd |V| + |V^\bnd| |f_0|\Big) \\
      &\ge \frac{1}{|V^\bnd|} \bigg(|\delta_0 f_0|_\bnd \Big(|V|-|f_0|\Big) + \Big(|V^\bnd|-|\delta_0 f_0|_\bnd\Big) |f_0|\bigg)
  \end{align}
  where the second inequality uses $\frac{L}{2} \ge \frac{|V|}{|V^\bnd|}$.

  Thus, by comparing to \Cref{eq:functional-inequality-f0-1,eq:functional-inequality-f0-2}, the functional inequalities are satisfied for $(C=\frac{1}{L}, C^\bnd=\frac{1}{L})$.
\end{proof}

\begin{claim} \label{claim:functional-inequality-implies-coboundary-expansion}
  If the graph with boundary satisfies $(C, C^\bnd)$-functional inequalities,
    then the corresponding complex is a $(\beta_0 = C, \eta_0 = \frac{C^\bnd|V|}{2|V^\bnd|})$-coboundary expander at level $0$.
\end{claim}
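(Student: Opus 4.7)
The plan is to turn the two functional inequalities into the two nontrivial coboundary-expansion inequalities by specializing them to a cleverly chosen $\{0,1\}$-valued function. Given $\hat f_0\in\F2^{Y(0)}$, I would set $f_0$ to be whichever of $\hat f_0$ and $\hat f_0+\mathbbm{1}$ has the smaller Hamming weight. Because $\delta_0\delta_{-1}=0$ on the interior, we have $B^0=\Ima(\delta_{-1}|_Y)=\{0,\mathbbm{1}\}$, so $f_0\in\hat f_0+B^0$ is a legal candidate. This choice simultaneously buys us the two things we need: inequality (1), namely $|f_0|_\int\le|\hat f_0|_\int$, and the key pointwise bound $|f_0|\le|V|/2$.

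Before plugging into the functional inequalities, I would record the useful identity $|\delta_0 f_0|_\int=|\delta_0\hat f_0|_\int$, which again follows from $\delta_0\mathbbm{1}=0$ on the interior. Thus the right-hand sides of the target inequalities (2) and (3), which involve $|\delta_0\hat f_0|_\int$, may be replaced by $|\delta_0 f_0|_\int$, reducing everything to a statement purely about $f_0$. For (2), substituting $h=f_0$ into \eqref{eq:functional-inequality-f0-1} and using $|V|-|f_0|\ge|V|/2$ gives
\begin{equation*}
|\delta_0 f_0|_\int\ge \frac{C}{|V|}\cdot 2|f_0|\cdot \frac{|V|}{2}=C|f_0|,
\end{equation*}
which is exactly $\beta_0|f_0|_\int\le|\delta_0\hat f_0|_\int$ with $\beta_0=C$. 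For (3), substituting into \eqref{eq:functional-inequality-f0-2}, discarding the nonnegative $(|V^\bnd|-|\delta_0 f_0|_\bnd)|f_0|$ term, and again using $|V|-|f_0|\ge|V|/2$ yields
\begin{equation*}
|\delta_0 f_0|_\int\ge \frac{C^\bnd}{|V^\bnd|}|\delta_0 f_0|_\bnd(|V|-|f_0|)\ge \frac{C^\bnd|V|}{2|V^\bnd|}|\delta_0 f_0|_\bnd,
\end{equation*}
which is (3) with $\eta_0=\frac{C^\bnd|V|}{2|V^\bnd|}$.

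There is no real obstacle: once the ``balance $|f_0|\le|V|/2$'' trick is in place, both inequalities are one-line consequences of the functional inequalities. The single thing to be careful about is the identity $|\delta_0\hat f_0|_\int=|\delta_0 f_0|_\int$: on the boundary these two quantities may genuinely differ, since $\delta_0\mathbbm{1}$ need not vanish on $Y_\bnd(1)$, but the definition of coboundary expansion only demands an upper bound on $|\delta_0 f_0|_\bnd$ in terms of $|\delta_0\hat f_0|_\int$, so only the interior equality is used and no contradiction arises.
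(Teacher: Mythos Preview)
Your proposal is correct and essentially identical to the paper's own proof: both pick $f_0$ to be the lighter of $\hat f_0$ and $\hat f_0+\mathbbm{1}$, use $|f_0|\le|V|/2$, and then read off inequalities (2) and (3) directly from \eqref{eq:functional-inequality-f0-1} and \eqref{eq:functional-inequality-f0-2}. You include a bit more detail than the paper (explicitly noting $B^0=\{0,\mathbbm{1}\}$, the identity $|\delta_0\hat f_0|_\int=|\delta_0 f_0|_\int$, and why the discarded term in (3) is nonnegative), but the argument is the same.
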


\begin{proof}
  Let $f_0$ be the vector with the smaller weight between $\hat f_0$ and $\hat f_0 + \mathbbm{1}$.
  Since $|f_0| \le |V|/2$, \Cref{eq:functional-inequality-f0-1,eq:functional-inequality-f0-2} imply
  \begin{gather}
    |\delta_0 f_0|_\int \ge C |f_0|, \\
    |\delta_0 f_0|_\int \ge \frac{C^\bnd|V|}{2|V^\bnd|} |\delta_0 f_0|_\bnd.
  \end{gather}
\end{proof}

We now combine the above to prove the level $0$ coboundary expansion of the surface code.
\begin{proof} [Proof of \Cref{lem:sur-expansion} at level $0$]
  By \Cref{claim:repetition-code-has-functional-inequality},
    $\cG_1, \cG_2$ satisfy $(C=\frac{1}{L}, C^\bnd=\frac{1}{L})$-functional inequalities.
  Together with \Cref{lem:functional-inequality},
    this implies $\cG$ satisfies $(\frac{1}{L}, \frac{1}{L})$-functional inequalities.
  Together with \Cref{claim:functional-inequality-implies-coboundary-expansion},
    this implies the generalized surface code satisfies $(\beta_0=\frac{1}{L}, \eta_0=\frac{L-1}{4L})$-coboundary expansion,
    where we use $\frac{|V|}{|V^\bnd|} = \frac{|V_1||V_2|}{|V_1^\bnd||V_2| + |V_1||V_2^\bnd|} \ge \frac{L-1}{4}$ which holds because $\frac{|V_1|}{|V_1^\bnd|}, \frac{|V_2|}{|V_2^\bnd|} \ge \frac{L-1}{2}$.
\end{proof}

We now go back and prove the lemma.
\begin{proof} [Proof of \Cref{lem:functional-inequality}]
  We first review the proof of \Cref{eq:functional-inequality-1} in \cite{tillich2000edge},
    then extend the proof to \Cref{eq:functional-inequality-2}.
  The idea is to first split the edges into those running horizontally and those running vertically.
  We then apply the functional inequalities for $\cG_1$ and $\cG_2$.
  Finally, we use the triangle inequality to rewrite the terms.
  \begin{align*}
    \sum_{\{x, y\} \in E} |f(x) - f(y)|
    &= \sum_{\{x_1, y_1\} \in E_1, x_2 \in V_2} |f(x_1, x_2) - f(y_1, x_2)|
      + \sum_{y_1 \in V_1, \{x_2, y_2\} \in E_2} |f(y_1, x_2) - f(y_1, y_2)| \\
    &\ge \frac{C}{|V_1|} \sum_{x_1, y_1 \in V_1, x_2 \in V_2} |f(x_1, x_2) - f(y_1, x_2)|
      + \frac{C}{|V_2|} \sum_{y_1 \in V_1, x_2, y_2 \in V_2} |f(y_1, x_2) - f(y_1, y_2)| \\
    &\ge \frac{C}{|V_1||V_2|} \sum_{x_1, y_1 \in V_1, x_2, y_2 \in V_2} |f(x_1, x_2) - f(y_1, y_2)| \\
    &= \frac{C}{|V|} \sum_{x, y \in V} |f(x) - f(y)|
  \end{align*}

  We now apply the same technique to \Cref{eq:functional-inequality-2}, but this time, we do it in reverse order.
  \begin{align*}
    \sum_{x \in V^\bnd, y \in V} |f(x) - f(y)|
    &= \sum_{x \in V_1^\bnd \times V_2, y \in V} |f(x) - f(y)|
      + \sum_{x \in V_1 \times V_2^\bnd, y \in V} |f(x) - f(y)| \\
    &\hspace*{-6.4em}\le |V_2| \sum_{x_1 \in V_1^\bnd, y_1 \in V_1, x_2 \in V_2} |f(x_1, x_2) - f(y_1, x_2)|
      + |V_1^\bnd| \sum_{y_1 \in V_1, x_2, y_2 \in V_2} |f(y_1, x_2) - f(y_1, y_2)| \\
      &\hspace*{-5.4em}+ |V_1| \sum_{x_1 \in V_1, x_2 \in V_2^\bnd, y_2 \in V_2} |f(x_1, x_2) - f(x_1, y_2)|
      + |V_2^\bnd| \sum_{x_1, y_1 \in V_1, y_2 \in V_2} |f(x_1, y_2) - f(y_1, y_2)| \\
    &\hspace*{-6.4em}\le \frac{|V_1^\bnd||V_2|}{C^\bnd} \sum_{\{x_1, y_1\} \in E_1, x_2 \in V_2} |f(x_1, x_2) - f(y_1, x_2)|
      + \frac{|V_1^\bnd||V_2|}{C} \sum_{y_1 \in V_1, \{x_2, y_2\} \in E_2} |f(y_1, x_2) - f(y_1, y_2)| \\
      &\hspace*{-5.4em}+ \frac{|V_1||V_2^\bnd|}{C^\bnd} \sum_{x_1 \in V_1, \{x_2, y_2\} \in E_2} |f(x_1, x_2) - f(x_1, y_2)|
      + \frac{|V_1||V_2^\bnd|}{C} \sum_{\{x_1, y_1\} \in E_1, y_2 \in V_2} |f(x_1, y_2) - f(y_1, y_2)| \\
    &\hspace*{-6.4em}\le \frac{|V^\bnd|}{\min(C^\bnd, C)} \sum_{\{x, y\} \in E} |f(x) - f(y)|.
  \end{align*}
  Notice $|V^\bnd| = |V_1^\bnd||V_2| + |V_1||V_2^\bnd|$.
\end{proof}

\begin{remark}
  There are other functional inequalities that behave nicely under tensor product.
  This includes the Poincar\'e and log Sobolev inequalities
    (see Proposition 4.3.1 and Proposition 5.2.7 in \cite{bakry2014analysis})
    and hypercontractivity of the boolean hypercube
    (see \cite[Proof of the Bonami Lemma]{o2014analysis}).
\end{remark}

\subsubsection{Level $1$ Expansion}

As discussed, the level $1$ expansion is proven in an ad hoc manner.

\begin{proof} [Proof of \Cref{lem:sur-expansion} at level $1$]
  The idea is to decompose the support of $\hat f_1$ into components.
  (For now, one can think of these as the connected components.)
  Because of the triangle inequality,
    it is sufficient to establish the desired bound, (1), (2), and (3) in \Cref{eq:coboundary-expansion}, for each component individually.
  It is straightforward to show the desired bound for the component that is supported on a flat region.
  The remaining challenge is to show the desired bound for the components that crosses the seam which will be shown at the end.
  We refer to the lines at intersection of the planes as the seam.

  \begin{figure}[H]
    \centering
    \includegraphics[width=0.3\linewidth]{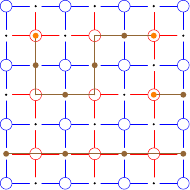}
    \vspace{1em}
    \caption{The figure illustrate an example of $\hat f_1$. The brown dots are the support of $\hat f_1$.
    The orange dots are the violated checks, i.e. the support of $\delta_1 \hat f_1|_\int$.
    The graph consist of edges connecting the support of $\hat f_1$ with the nearby checks.
    We see that the graph decomposes into disjoint $\hat f_{1,i}$.}
    \label{fig:surface-code-decompose-path}
  \end{figure}

  \paragraph{Step 1: Decompose $\hat f_1$ into components.}
  The goal of this step is to write $\hat f_1 = \sum_i \hat f_{1,i}$
    where $\supp \hat f_1$ is a disjoint union of $\supp \hat f_{1,i}$.
  Additionally, the support of $\hat f_{1,i}$ will look like 1d paths which will be made precise later.

  To obtain the decomposition,
    we consider the (bipartite) graph consisting of edges formed by the active qubits in $\supp \hat f_1$ and their nearby checks.
    We call a check an active check if it is next to an active qubit.
  We see that a non-violated check has an even degree while a violated check has an odd degree.
    See \Cref{fig:surface-code-decompose-path}.
  Based on this structure of the graph, we can decompose the graph into components
    whose end points are the violated checks or boundary qubits
    and whose active check has degree at most two.
  Note that when a check has more than two neighboring active qubits, there is more than one possible decomposition satisfying the above properties.
    In such cases, any such decomposition works.
  Notice that when the component does not contain qubits on the seam,
    the component is either a 1d loop or a 1d path
    which are simple objects to analyze directly.

  To be more mathematical, we can write $\hat f_1 = \sum_i \hat f_{1,i}$
    where $\set{\hat f_{1,i}}_i$ (the active qubits) have disjoint supports
    and $\set{\delta_1 \hat f_{1,i}|_\int}_i$ (the violated checks) have disjoint supports.
  Furthermore, for each $\hat f_{1,i}$:
  \begin{itemize}
    \item The graph formed by the qubits and their neighboring checks, $\{(v, w) \in X(1) \times X(2): v \in \supp \hat f_{1,i}, v \text{ is adjacent to } w\}$, is connected.
    \item Each check $w$ in the graph has degree at most two.
  \end{itemize}
  An example of the decomposition is illustrated in \Cref{fig:surface-code-decompose-path}(a).

  Note that to find $f_1$,
    it suffices to find $f_{1,i}$ for each $\hat f_{1,i}$.
  This is because once given $f_{1,i}$, we can set $f_1 = \sum_i f_{1,i}$.
  By construction, we have $f_1 \in \hat f_1 + B^1$ because $f_{1,i} \in \hat f_{1,i} + B^1$.
  The inequalities also carry over to $f_i$.
  For example, to show inequality (1) in \Cref{eq:coboundary-expansion}, we have
  \begin{equation}
    |f_1| \le \sum_i |f_{1,i}| \le \sum_i |\hat f_{1,i}| = |\hat f_1|
  \end{equation}
  where the first inequality holds by the triangle inequality,
    the second inequality, $|f_{1,i}| \le |\hat f_{1,i}|$, is based on the reduced problem for $\hat f_{1,i}$,
    and the third inequality holds because $\set{\hat f_{1,i}}_i$ have disjoint supports.
  Inequalities (2) and (3) in \Cref{eq:coboundary-expansion} can be obtained similarly using the fact that $\set{\delta_1 \hat f_{1,i}|_\int}_i$ have disjoint supports.

  \paragraph{Step 2: Bound each component.}
  We now find $f_{1,i}$ for four different cases of $\hat f_{1,i}$.
  The last case is the hardest.

  \paragraph*{Case 1: The cluster has no violated check, i.e. $\delta_1 \hat f_{1,i} = 0$.}
    In this case, one can simply set $f_{1,i} = 0$ which satisfies (1), (2), and (3).

  \paragraph*{Case 2: The cluster is not connected to the boundary nor the seam (excluding Case 1).}
    The corresponding graph of $\hat f_{1,i}$ is a path that connects two violated checks.
    We can set $f_{1,i}$ to be a shortest path between the two violated checks,
      for example, the path that goes vertically then horizontally.

  We now check (1), (2), and (3).
  Inequality (1) is satisfied because $f_{1,i}$ is the vector with the smallest weight in $\hat f_{1,i} + B^1$.
  Inequality (2) holds because $|f_{1,i}| \le L$ and $|\delta_1 \hat f_{1,i}|_\int = 2$.
  Inequality (3) holds because $|\delta_1 f_{1,i}|_\bnd = 0$.
  (In particular, the inequalities hold whenever $\beta_1 \le \frac{2}{L}$.)

  \paragraph*{Case 3: The cluster is connected to the boundary but not the seam (excluding Case 1).}
    The corresponding graph of $\hat f_{1,i}$ is a path that connects one violated check and the boundary.
    We can set $f_{1,i}$ to be a shortest path between the violated check and the boundary which either goes up vertically or right horizontally.

  We now check (1), (2), and (3).
  Inequality (1) is satisfied because $f_{1,i}$ is the vector with the smallest weight in $\hat f_{1,i} + B^1$.
  Inequality (2) holds because $|f_{1,i}| \le L/2$ and $|\delta_1 \hat f_{1,i}|_\int = 1$.
  Inequality (3) holds because $|\delta_1' f_{1,i}|_\bnd = |\delta_1 \hat f_{1,i}|_\int = 1$
  (In particular, the inequalities hold whenever $\beta_1 \le \frac{2}{L}, \eta_1 \le 1$.)

  \paragraph*{Case 4: Otherwise, the cluster is connected to the seam.}
    The corresponding graph of $\hat f_{1,i}$ is more complicated, so we cannot apply structure results directly as above.
    What we do here is to reduce $\hat f_{1,i}$ to another vector $\hat f_{1,i}'$ that is supported next to the seam.
      (The region next to the seam is depicted in \Cref{fig:proof-level-1-next-to-seam}.)
    This is done by cleaning up the surface region using ideas similar to those in Case 1, 2, 3,
      and pushing the support next to but not past the seam.
    Vector $\hat f_{1,i}'$ will then be analyzed in the next step.

  It is straightforward to check that the structure of the graph in the 2D area is either:
  \begin{itemize}
    \item A path that connects two qubits on the seam.
    \item A path that connects one qubit on the seam and a violated check.
    \item A path that connects one qubit on the seam and the boundary.
  \end{itemize}
  This allows us to decompose $\hat f_{1,i}$ into $\hat f_{1,i} = \hat f_{1,i}^x + \sum_j \hat f_{1,i,j}^y + \sum_k \hat f_{1,i,k}^z + \sum_l \hat f_{1,i,l}^w$ with disjoint support where
  \begin{itemize}
    \item $\hat f_{1,i}^x$ is supported on the seam.
    \item $\hat f_{1,i,j}^y$ are the parts that connect two qubits on the seam.
    \item $\hat f_{1,i,k}^z$ are the parts that connect one qubit on the seam to the boundary.
    \item $\hat f_{1,i,l}^w$ are the parts that connect one qubit on the seam to a violated check.
  \end{itemize}
  Furthermore, $\set{|\delta_1 f_{1,i,k}^z|_\bnd}_k$ are disjoint and $\set{|\delta_1 f_{1,i,l}^w|_\int}_l$ are disjoint.

  We now simplify $\hat f_{1,i,j}^y, \hat f_{1,i,k}^z, \hat f_{1,i,l}^w$ into $f_{1,i,j}^y, f_{1,i,k}^z, f_{1,i,l}^w$ within the same homology, i.e. $g_1 \in \hat g_1 + B^1$ for $g_1$ being $f_{1,i,j}^y$, $f_{1,i,k}^z$, or $f_{1,i,l}^w$.
  Additionally, we have bound on weights $|g_1| \le |\hat g_1|$.
  \begin{itemize}
    \item For $\hat f_{1,i,j}^y$, we simplify it into $f_{1,i,j}^y$ which connects the same two qubit through the route next to the seam.
    \item For $\hat f_{1,i,k}^z$, we simplify it into $f_{1,i,k}^z$ which is the shortest path between the qubit and the boundary. This path is next to the seam.
    \item For $\hat f_{1,i,l}^w$, we simplify it into $f_{1,i,l}^w$ which is the shortest path between the qubit and the violated check.
  \end{itemize}

  Let $\hat f_{1,i}' = \hat f_{1,i}^x + \sum_j f_{1,i,j}^y + \sum_k f_{1,i,k}^z$.
  By construction $\hat f_{1,i}'$ is supported next to the seam.

  We claim that we can reduce the question of finding $f_{1,i}$ to the question of finding $f_{1,i}'$ for $\hat f_{1,i}'$ which will satisfy the inequalities
  \begin{equation} \label{eq:f-next-to-seam}
    (1)\,  |f_{1,i}'|_\int \le |\hat f_{1,i}'|_\int, \quad
    (2)\,  \frac{2}{L} |f_{1,i}'|_\int \le |\delta_1 \hat f_{1,i}'|_\int, \quad
    (3)\,  \frac{1}{2} |\delta_1 f_{1,i}'|_\bound \le |\delta_1 \hat f_{1,i}'|_\int.
  \end{equation}
  The reason is that suppose we have $f_{1,i}' \in \hat f_{1,i}' + B^1$,
    we can then set $f_{1,i} = f_{1,i}' + \sum_l f_{1,i,l}^w$.
  One can straightforwardly check that $f_{1,i} \in \hat f_{1,i} + B^1$.

  We now check inequalities (1), (2), and (3) in \Cref{eq:coboundary-expansion}.
  Inequality (1) holds because
  \begin{align*}
    |f_{1,i}|
    &\le |f_{1,i}'| + \sum_l |f_{1,i,l}^w| \\
    &\le |\hat f_{1,i}'| + \sum_l |f_{1,i,l}^w| \\
    &\le |\hat f_{1,i}^x| + \sum_j |f_{1,i,j}^y|  + \sum_k |f_{1,i,k}^z| + \sum_l |f_{1,i,l}^w| \\
    &\le |\hat f_{1,i}^x| + \sum_j |\hat f_{1,i,j}^y| + \sum_k |\hat f_{1,i,k}^z| + \sum_l |\hat f_{1,i,l}^w| \\
    &= |\hat f_{1,i}|
  \end{align*}
  where the last equality holds because the components are disjoint.

  Inequality (2) holds for $\beta_1 = \frac{2}{3L}$ because
  \begin{align*}
    |f_{1,i}|
    \le |f_{1,i}'| + \sum_l |f_{1,i,l}^w|
    \le \frac{L}{2} |\delta_1 \hat f_{1,i}'|_\int + \sum_l L
    \le \left(\frac{L}{2} + L\right) |\delta_1 \hat f_{1,i}|_\int.
  \end{align*}
  The second inequality uses $|f_{1,i,l}^w| \le L$.
  The third inequality uses $|\delta_1 \hat f_{1,i}'|_\int = |\delta_1 \hat f_{1,i}|_\int$
    and $\sum_l 1 = |\delta_1 \hat f_{1,i}|_\int$.

  Inequality (3) holds for $\eta_1 = \frac{1}{2}$ because
  \begin{align*}
    |\delta_1 f_{1,i}|_\bnd = |\delta_1 f_{1,i}'|_\bnd \le 2 |\delta_1 \hat f_{1,i}|_\int = 2 |\delta_1 \hat f_{1,i}|_\int.
  \end{align*}

  The remaining task is to find $f_{1,i}'$ for $\hat f_{1,i}'$ that is supported next to the seam.

  \begin{figure}[H]
    \centering
    \includegraphics[width=0.4\linewidth]{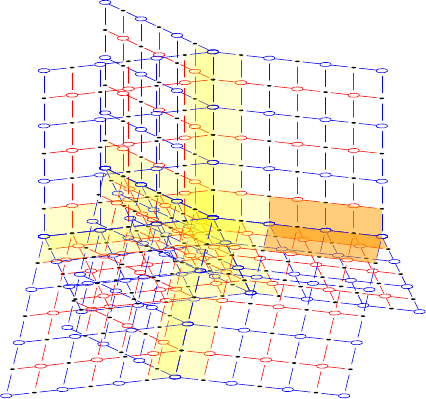}
    \vspace{1em}
    \caption{The figure illustrate the regions next to the seam in yellow.
    Each orange region will be referred to as a limb.}
    \label{fig:proof-level-1-next-to-seam}
  \end{figure}

  \paragraph{Step 3: Show the case where $\hat f_{1,i}'$ is next to the seam.}

  We simply pick $f_{1,i}'$ to be the vector with the smallest weight in $\hat f_{1,i}' + B^1$ while being next to the seam.
  This automatically satisfies inequality (1) in \Cref{eq:f-next-to-seam}.

  For inequality (2) in \Cref{eq:f-next-to-seam}, we construct a vector $f_{1,i}'' \in \hat f_{1,i}' + B^1$
    with $\frac{2}{L} |f_{1,i}''|_\int \le |\delta_1 \hat f_{1,i}'|_\int$.
    Since $f_{1,i}'$ has the smallest weight, it implies inequality (2).
  One can simply construct $f_{1,i}''$ by connecting each violated check, i.e. $\supp \delta_1 \hat f_{1,i}'$ to the closest boundary.
  Since each point is at most $\frac{L-1}{2}$ distance away from the boundary, we have $|f_{1,i}''|_\int \le \frac{L-1}{2} |\delta_1 \hat f_{1,i}'|_\int \le \frac{L}{2} |\delta_1 \hat f_{1,i}'|_\int$.
  (We need the tighter bound $|f_{1,i}''|_\int \le \frac{L-1}{2} |\delta_1 \hat f_{1,i}'|_\int$ in the next paragraph.)

  For inequality (3) in \Cref{eq:f-next-to-seam}, to show $\frac{1}{2} |\delta_1 f_{1,i}'|_\bnd \le |\delta_1 \hat f_{1,i}'|_\int$, we use proof by contradiction.
  We assume $|\delta_1 f_{1,i}'|_\bnd > 2 |\delta_1 \hat f_{1,i}'|_\int$ and show that $f_{1,i}'$ is not the vector in $\hat f_{1,i}' + B^1$ with the smallest weight.
  In particular, we will show $|f_{1,i}'| > \frac{L-1}{2} |\delta_1 \hat f_{1,i}'|_\int$.
  Because $|f_{1,i}''| \le \frac{L-1}{2} |\delta_1 \hat f_{1,i}'|_\int$ from the last paragraph,
    $f_{1,i}'$ has a larger weight than $f_{1,i}''$ which is the contradiction.

  We first perform some basic structural analysis of $f_{1,i}'$
    then observe a local condition of $f_{1,i}'$.
  Consider $v \in \supp f_{1,i}'$ on a horizontal seam.
  We study the corresponding limb as illustrated in \Cref{fig:proof-level-1-next-to-seam}.
    The edges emanating from $v$ either end at the neighboring violated check, or go towards the center, or go towards the boundary.
  Let the number of edges of each type be $p, q, r$, respectively.
  We claim that $p, q, r$ satisfies $p + q \ge r$.
  Otherwise, if $p + q < r$, and one can replace $f_{1,i}'$ with $f_{1,i}' + \delta_0 1_w$,
    where $w$ is the stabilizer next to the qubit $v$ that is closer to the boundary.
  See \Cref{fig:proof-level-1-pqr}(a)(b) for an illustration.

  \begin{figure}[H]
    \centering
    \includegraphics[width=0.99\linewidth]{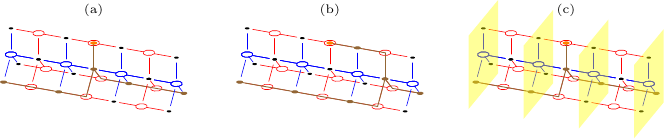}
    \caption{The figure illustrates one of the limbs where left is the direction to the center and right is the direction to the boundary.
    (a) The minimal vector $f_{1,i}'$. (b) The new vector $f_{1,i}' + \delta_0 1_w$ which has a lower weight if $p + q < r$. (c) The cross sections.}
    \label{fig:proof-level-1-pqr}
  \end{figure}

  We now use the local condition $p + q \ge r$ and the assumption $|\delta_1 f_{1,i}'|_\bnd > 2 |\delta_1 \hat f_{1,i}'|_\int$ to show $|f_{1,i}'| > \frac{L-1}{2} |\delta_1 \hat f_{1,i}'|_\int$.
  Consider one of the limbs.
  Suppose the limb contains $s$ violated checks and $t$ lines that connect to the boundary.
  Consider the number of active qubits on each cross section
    as shown in \Cref{fig:proof-level-1-pqr}(c).
  We claim that this number is at least $t-s$ for each cross section.
  Assuming so, since there are $\frac{L-1}{2}$ cross sections,
    this limb contains at least $\frac{L-1}{2} (t-s)$ active qubits,
    i.e. $|\supp f_{1,i}' \cap M| \ge \frac{L-1}{2} (t-s)$
    where $M$ is the set of vertices on the limb.
  We apply this observation to all limbs.
  Because the total number of violated checks is at most $|\delta_1 \hat f_{1,i}'|_\int$ and the number of lines that connect to the boundary is $|\delta_1 f_{1,i}'|_\bnd$,
    this means the number of active qubits $|f_{1,i}'|$ is at least $\frac{L-1}{2} (|\delta_1 f_{1,i}'|_\bnd - |\delta_1 \hat f_{1,i}'|_\int) > \frac{L-1}{2} |\delta_1 \hat f_{1,i}'|_\int$.
  It suffices to show the claim.

  Notice the number of active qubits on the cross section
    changes only when there is an active qubit on the seam
    between the two cross sections.
  With the same notation $p, q, r$,
    this number changes from $r$ to $q$ as we go closer to the center.
  The number drops by at most $p$ because the local condition says $q \ge r - p$.
  Since the number of active qubits is $t$ on the cross section closest to boundary and the number drops by at most $s$,
    each cross section has at least $t-s$ active qubits.
\end{proof}

\section{Application to Classical Codes} \label{sec:classical-LDPC-code}

The main part of the paper discusses the construction based on quantum LDPC codes.
This section explores the implications when applying the techniques to classical LDPC codes
  and shows \Cref{thm:classical-LTC}.
Parts of the result in section are discovered independently in \cite{baspin2023combinatorial}.

\subsection{Code Construction}

Given a classical LTC \cite{panteleev2021asymptotically,dinur2021locally,lin2022c,leverrier2022quantum,dinur2022good}, let $\F2^{X(0)} \to \F2^{X(1)}$ be the parity-check matrix where $X(0)$ are the bits and $X(1)$ are the checks.
The code naturally corresponds to the adjacency matrix of a bipartite graph $\cG$ with vertices $V(\cG) = X(0) \cup X(1)$.
We can subdivide both the code and the graph as shown in \Cref{fig:subdivide-3}.
We denote $X_L$ as the subdivided code and $\cG_L$ as the subdivided graph.

\begin{figure}[H]
  \centering
  \includegraphics[width=0.62\textwidth]{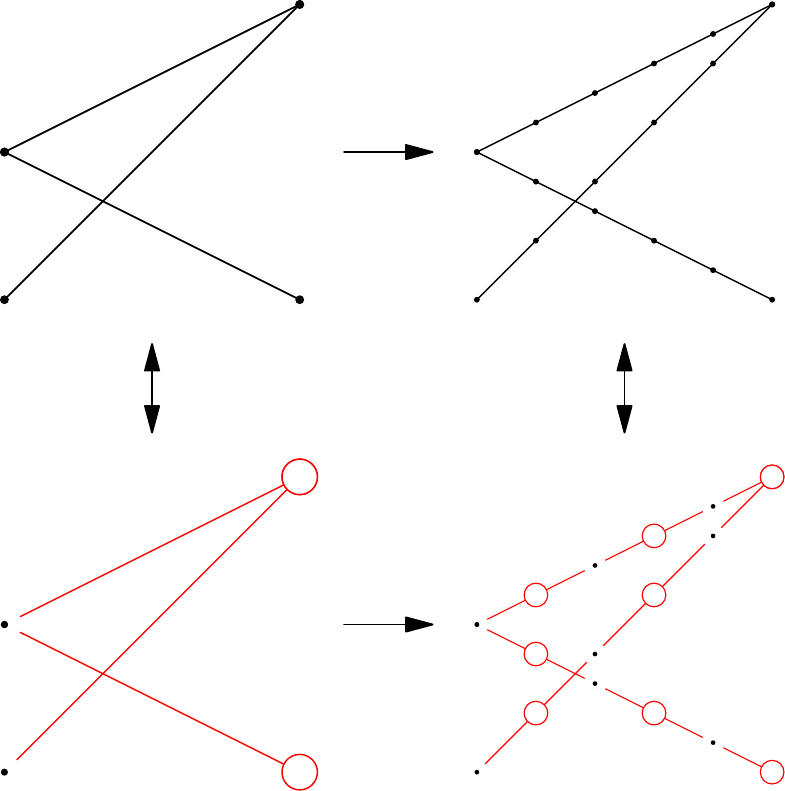}
  \caption{The subdivision of a classical LTC $X$ together with the subdivision of the associated bipartite graph $\cG$.}
  \label{fig:subdivide-3}
\end{figure}

The embedding of the code again boils down to the embedding of the subdivided graph similar to \Cref{thm:embedding}.
\begin{theorem} [Upcoming]
  \label{thm:embedding-1d}
  Given a bounded degree graph $\cG$,
  the $L$-subdivision $\cG_L$ has an embedding $f: V(\cG_L) \to \ZZ^D$ ($D\ge 2$) with constant $a, b = \Theta(1)$ such that
  \begin{enumerate}
    \item Geometrically local: For all adjacent vertices on the complex $(v_0, v_1) \in E(\cG_L)$,
      the distance of corresponding points in $\ZZ^D$ is bounded $|f(v_0) - f(v_1)| \le a$.
    \item Bounded density: The number of vertices at each point is bounded,
      i.e. for all $x \in \ZZ^D$, $|f^{-1}(x)| \le b$.
  \end{enumerate}
  for $L = \Theta(|V(\cG)|^{\frac{1}{D-1}})$
    where the constant depends only on the degree of the complex and the dimension $D$.
\end{theorem}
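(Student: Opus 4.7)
The plan is to follow the probabilistic embedding strategy used by Portnoy for \Cref{thm:embedding-old}, adapted from 2-dimensional square complexes to 1-dimensional graphs. First, I would map each vertex of $\cG$ independently and uniformly at random to the sphere $S^{D-1} \subset \RR^D$ of radius $R = c \cdot V^{1/(D-1)}$, where $V = |V(\cG)|$ and $c$ is a suitable constant. For each edge $(u,v) \in E(\cG)$ I would then place its $L-1$ interior subdivision vertices at equally spaced points on the straight-line segment from the image of $u$ to the image of $v$. Since each edge has length at most $2R = \Theta(L)$ in $\RR^D$, consecutive subdivision vertices land at distance $O(1)$, giving the required geometric locality after snapping to the integer lattice.

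The main quantitative step is the density bound. A direct counting argument explains why the threshold for $L$ drops from $V^{1/(D-2)}$ (the 2D case) to $V^{1/(D-1)}$: the total number of subdivision points is $\Theta(VL)$, since $\cG$ has $\Theta(V)$ edges by bounded degree and each is broken into $L$ pieces, while the ambient ball of radius $R$ in $\RR^D$ has volume $\Theta(R^D) = \Theta(VL)$, so the average density is $\Theta(1)$ precisely when $L = \Theta(V^{1/(D-1)})$. To promote this average bound to a pointwise estimate, I would fix a unit lattice cube $B \subset \RR^D$ and bound $\EE[|f^{-1}(B)|]$ by computing, for each edge, the probability that one of its subdivision points lands in $B$. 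Since both endpoints of each edge are uniform on $S^{D-1}$, a short geometric calculation shows that this expectation is $O(1)$.

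The main obstacle is converting the expected bound into a worst-case bound with a constant rather than a $\Theta(\log V)$ density. A direct Chernoff plus union bound over all unit cubes, in the style of Portnoy's appendix, would yield only density $O(\log V)$ and hence force a polylog blow-up in $L$, which the theorem statement forbids. To achieve the optimal $L = \Theta(V^{1/(D-1)})$, I would instead derandomize the placement: partition $S^{D-1}$ into $\Theta(V)$ patches of equal surface measure and distribute vertices one per patch, perturbing slightly so that no edge passes too close to any other vertex's patch, in analogy with the improved embedding claimed in \Cref{thm:embedding}. A final rounding step sends each subdivision point to a nearest integer lattice point in $\ZZ^D$, changing all pairwise distances by $O(1)$ and thereby preserving both geometric locality and bounded density with slightly larger constants $a$ and $b$.
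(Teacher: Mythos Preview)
The paper does not actually prove this theorem: it is labelled ``Upcoming'' and deferred to a forthcoming paper, with only a remark pointing to the Kolmogorov--Barzdin thick-embedding construction for $D\ge 3$ as the relevant prior art. So there is no proof in the paper to compare your proposal against; I can only assess the proposal on its own merits.

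Your overall architecture (random placement on $S^{D-1}$ of radius $R=\Theta(V^{1/(D-1)})$, linear interpolation along edges, snap to $\ZZ^D$) is the correct heuristic and your volume count is right: $\Theta(VL)$ subdivision points in a region of volume $\Theta(R^D)=\Theta(VL)$ gives average density $\Theta(1)$. You also correctly identify that Chernoff plus a union bound only yields density $O(\log V)$ and hence a polylog loss in $L$, which is exactly the content of \Cref{thm:embedding-old}.

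The gap is that your derandomization step is the entire content of the theorem, and it is not an argument yet. ``Partition $S^{D-1}$ into $\Theta(V)$ equal patches, put one vertex per patch, and perturb slightly so that no edge passes too close to any other vertex's patch'' does not establish bounded density: spreading the \emph{vertices} evenly says nothing about how the \emph{edges} (chords) distribute through the interior of the ball, and that is where the density must be controlled. Concretely, with vertices evenly placed on the sphere, whether a given unit cube in the interior is hit by $O(1)$ chords depends on the combinatorics of which pairs are joined in $\cG$, not just on the vertex placement; a small perturbation of vertex positions cannot fix an adversarial edge set. What is actually needed is either an explicit routing scheme in the spirit of Kolmogorov--Barzdin (route edges through a structured set of layers/tracks so that congestion is provably bounded), or a genuine derandomization argument (e.g.\ conditional expectations or an expander-mixing style bound) that converts the first-moment calculation into a pointwise guarantee. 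As written, your proposal reproduces the known sub-optimal \Cref{thm:embedding-old} and asserts the optimal result without a mechanism.
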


\begin{remark}
  When $D = 3$ (or $D \ge 3$ with an additional generalization), this embedding theorem is similar to the work by Kolmogorov and Barzdin \cite{barzdin1993realization}.\footnote{We thank Elia Portnoy for pointing out the work and its generalization.}
  The slight difference is that Kolmogorov and Barzdin
    study the notion of thick embedding,
    which prohibits the edges of the embedded graph from overlapping,
    whereas in our case, we allow overlaps as long as they remain to be of bounded density.
  This difference allows our embedding theorem to work even when $D = 2$.
\end{remark}

\subsection{Code Properties}

It is clear that the code has a geometrically local embedding induced from the graph. So we will mainly focus on the non geometric properties, including code dimension, distance, energy barrier, and soundness.

Similar to \Cref{sec:subcomplex}, we define the chain map $\cF$, which maps the original LTC $\F2^{X(0)} \to \F2^{X(1)}$ to the subdivided LTC $\F2^{X_L(0)} \to \F2^{X_L(1)}$.
This utilizes the bijection between vertices and connected components depicted in \Cref{fig:proof-regions-1d}
  where $\cT$ and $\cU$ are the sets of connected components of $T$ and $U$.
\begin{itemize}
  \item Given $\wt c_0 \in \F2^{X(0)}$, we define $\cF_0(\wt c_0) \in \F2^{X_L(0)}$
    by repeating the value $\wt c_0(v)$ at the corresponding component $T_i$ of $v$ for each $v \in X(0)$.
  \item Given $\wt c_1 \subset \F2^{X(1)}$, we define $\cF_1(\wt c_1) \in \F2^{X_L(1)}$
    by setting the value $\wt c_1(v)$ at the corresponding vertex $U_i$ of $v$ for each $v \in X(1)$
    and set other values in $T$ to be $0$.
\end{itemize}
The maps form the following commutative diagram.
\begin{equation}\label{eq:cF-classical}
  \begin{tikzcd}
    \F2^{X(0)} \arrow[r, "\delta_0"] \arrow[d, "\cF_0"] &
    \F2^{X(1)} \arrow[d, "\cF_1"] \\
    \F2^{X_L(0)} \arrow[r, "\delta_0"] & \F2^{X_L(1)}
  \end{tikzcd}
\end{equation}

\begin{figure}[H]
  \centering
  \includegraphics[width=0.99\textwidth]{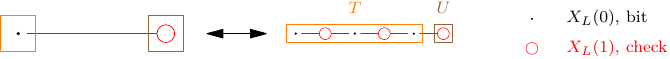}
  \caption{Regions $T, U$ and the bijections $X(0) \cong \cT$, $X(1) \cong \cU$.}
  \label{fig:proof-regions-1d}
\end{figure}

The following lemma relates the code properties of the original code $C(X)$ and the subdivided code $C(X_L)$.
\begin{lemma}
  If the code $C(X)$ has dimension $k_{\LTC}$ then the code $C(X_L)$ has dimension $k = k_{\LTC}$.

  If the code $C(X)$ has distance $d_{\LTC}$ then the code $C(X_L)$ has distance $d \ge L\, d_{\LTC}$.

  If the code $C(X)$ has soundness $s_{\LTC}$ then the code $C(X_L)$ has soundness
  \begin{equation}
    s \ge \frac{|X_L(0)|}{|X_L(1)|} \frac{1}{\frac{1}{\beta^\rep} + \frac{1}{\eta^\rep} \frac{|X(0)|}{|X(1)|} \frac{1}{s_{\LTC}} \frac{\Delta_{\max}}{2} L}.
  \end{equation}
\end{lemma}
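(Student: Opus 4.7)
The plan is to mirror the quantum proof of \Cref{thm:main-expansion} in the one-dimensional setting, using the chain map $\cF$ from \Cref{eq:cF-classical} as the bridge between $X$ and $X_L$, and \Cref{cor:rep-expansion} as the sole cleaning step (no surface-code analogue is needed, since the subdivided structure is 1D).

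For dimension, I will show that $\cF_0$ restricts to a bijection $\ker\delta_0(X) \xrightarrow{\sim} \ker\delta_0(X_L)$. Injectivity is immediate from injectivity of $\cF_0$ together with the commutative diagram \Cref{eq:cF-classical}. For surjectivity, given $c_0 \in \ker\delta_0(X_L)$, the vanishing of $\delta_0 c_0$ on the interior $T$-checks forces neighboring bits in each $T_i \cap X_L(0)$ to take equal values; since each $T_i$ is connected through these local equality constraints, $c_0$ is constant on each $T_i$. Thus $c_0 \in \im \cF_0$, and writing $c_0 = \cF_0(\wt c_0)$, injectivity of $\cF_1$ together with the right block of \Cref{eq:cF-classical} forces $\delta_0 \wt c_0 = 0$. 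The distance bound follows because $c_0 = \cF_0(\wt c_0)$ repeats each nonzero entry of $\wt c_0$ on at least $|T_i \cap X_L(0)| \ge L$ bits: the same counting as in the quantum case gives $|T_i \cap X_L(0)| \ge 1 + \Delta_{\min}(L-1)/2 \ge L$ when $\Delta_{\min} \ge 2$. Hence $|c_0| \ge L\, |\wt c_0| \ge L\, d_{\LTC}$.

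For soundness, given $x \in \F2^{X_L(0)}$, I will construct a candidate $c \in C(X_L)$ close to $x$ by a clean-then-lift procedure. Apply \Cref{cor:rep-expansion} to each disjoint $T_i$ with $\hat f_0 = x|_{T_i}$, and assemble the outputs $f_{0,i}$ into a single correction $c_0^T \in \F2^{X_L(0)}$ supported on $T$. By construction $x' := x + c_0^T$ is constant on each $T_i$, so $x' = \cF_0(\wt x)$ for a unique $\wt x \in \F2^{X(0)}$. Parts (c) and (d) of \Cref{cor:rep-expansion} summed over the $T_i$'s (using that the interior checks of $T_i$ touch only bits within $T_i$) give $|c_0^T| \le |\delta_0 x|_T / \beta^\rep$ and $|\delta_0 c_0^T|_U \le |\delta_0 x|_T / \eta^\rep$. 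Applying the LTC property of $X$ to $\wt x$ yields $\wt c \in C(X)$ with $|\wt x - \wt c| \le \frac{|X(0)|}{|X(1)| s_{\LTC}} |\delta_0 \wt x|$; set $c := \cF_0(\wt c) \in C(X_L)$ (a codeword by the dimension step). Using $\cF_1(\delta_0 \wt x) = \delta_0 x'$ and that $\cF_1$ preserves Hamming weight, $|\delta_0 \wt x| = |\delta_0 x'|_U \le |\delta_0 x|_U + |\delta_0 c_0^T|_U \le \tfrac{1}{\eta^\rep} |\delta_0 x|$, while each nonzero entry of $\wt x - \wt c$ inflates by at most $\tfrac{\Delta_{\max}}{2} L$ positions under $\cF_0$. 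Combining via the triangle inequality $|x - c| \le |c_0^T| + |\cF_0(\wt x - \wt c)|$ then produces the required upper bound on $\min_{c \in C(X_L)} |x - c|$, which rearranges into the stated lower bound on $s$.

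The main subtlety will be the bookkeeping around the $T$- versus $U$-supported parts of $\delta_0$: specifically, decomposing $|\delta_0 x'|_U \le |\delta_0 x|_U + |\delta_0 c_0^T|_U$ and then using $\eta^\rep \le 1$ (indeed $\eta^\rep = 1$ here) to absorb both contributions into a single $\tfrac{1}{\eta^\rep} |\delta_0 x|$ is what introduces the $\tfrac{1}{\eta^\rep}$ factor in the second term of the denominator. Beyond this bookkeeping, the argument is routine, since the 1D setting is genuinely simpler than the quantum proof --- no surface-code cleaning is required and the chain map $\cF$ has only two levels.
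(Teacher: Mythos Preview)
Your proposal is correct and follows essentially the same approach as the paper. The paper only sketches the dimension and distance arguments (exactly as you do, via the bijection induced by $\cF_0$ and the lower bound $|T_i \cap X_L(0)| \ge L$), and gives a full soundness proof that is step-by-step identical to yours: apply \Cref{cor:rep-expansion} on each $T_i$ to obtain the correction $c_0^T$, pass to $\wt c_0 = \cF_0^{-1}(c_0')$ via the commutative diagram, invoke soundness of $X$, and lift back, with the same triangle-inequality bookkeeping $|\delta_0 c_0'|_U \le |\delta_0 c_0|_U + |\delta_0 c_0^T|_U \le \tfrac{1}{\eta^\rep}|\delta_0 c_0|$ and the same $\tfrac{\Delta_{\max}}{2}L$ inflation factor under $\cF_0$.
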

In our context $\beta^\rep = \Theta(1/L)$, $\eta^\rep, \Delta_{\max} = \Theta(1)$, $|X_L(0)| = \Theta(L |X(0)|)$, and $|X_L(1)| = \Theta(L |X(1)|)$.
Therefore, when $X$ is chosen to be the LTC with linear dimension, linear distance, constant soundness, and satisfies $|X(1)| = \Theta(|X(0)|)$ provided in \cite{panteleev2021quantum,dinur2021locally,lin2022c,leverrier2022quantum,dinur2022good}, $X_L$ gives the desired properties \Cref{thm:classical-LTC}.
(Recall the bound on energy barrier can be inferred from the bound on distance and soundness as discussed in \Cref{sec:prelim-codes}.)

The lemma can be derived using similar arguments as those presented in \Cref{sec:prop-subdivided-code}.
Here, we provide only a proof sketch for code dimension and distance.
The code dimension of $C(X_L)$ can be estabilished by showing
  $\cF_0$ induces a bijection between the codewords of $C(X)$ and $C(X_L)$.
The distance of $C(X_L)$ can be shown by using this bijection and the fact that all bits in a component $T_i$ take the same value where $T_i$ has size at least $L$.

We now discuss the proof of soundness.
\begin{proof} [Proof of soundness]
  The goal is to show that for each $c_0 \in \F2^{X_L(0)}$,
  there exists a codeword $h_0 \in Z^0(X_L)$ such that
  \begin{equation} \label{eq:proof-soundness-1}
    \frac{s}{|X_L(0)|} |c_0 - h_0| \le \frac{1}{|X_L(1)|} |\delta_0 c_0|.
  \end{equation}

  The idea is to apply the coboundary expansion of the generalized repetition code to replace $c_0$ with $c_0'$ which is in the image of $\cF_0$.
  We then move back to the original code with $\wt c_0 = \cF_0^{-1} (c_0') \in \F2^{X(0)}$.
  This allows us to apply the soundness property of $X$ to find a codeword $\wt h_0 \in Z^0(X)$.
  The codeword $\wt h_0$ then induces the codeword $h_0 = \cF_0(\wt h_0) \in Z^0(X_L)$ which we claim to satisfy the desired inequaliy, \Cref{eq:proof-soundness-1}.

  \paragraph*{Step 1: Construct $c_0'$ by making $c_0$ consistent in $T$.}
  By applying \Cref{cor:rep-expansion} to the disjoint regions of $T$ with $\hat f_0 = c_0$, there exist $c_0^T = f_0$ supported on $T$ which satisfies
  \begin{equation} \label{eq:ct0-dct0}
    (a)\, c_0 + c^T_0 \in B^0, \quad
    (b)\, |c^T_0|_T \le |c_0|_T, \quad
    (c)\, \beta^\rep |c^T_0|_T \le |\delta_0 c_0|_T, \quad
    (d)\, \eta^\rep |\delta_0 c^T_0|_U \le |\delta_0 c_0|_T.
  \end{equation}
  We set $c_0' = c_0 + c^T_0$ which by (a) violates no check in $T$, i.e. consistent in $T$.

  Notice
  \begin{equation} \label{eq:dc0'}
    |\delta_0 c_0'|
    = |\delta_0 c_0'|_U
    \le |\delta_0 c_0^T|_U + |\delta_0 c_0|_U
    \le \frac{1}{\eta^\rep} |\delta_0 c_0|_T + |\delta_0 c_0|_U
    \le \frac{1}{\eta^\rep} |\delta_0 c_0|
  \end{equation}
  where the third inequality uses (d) and the last inequality uses $1 \le \frac{1}{\eta^\rep}$.

  \paragraph*{Step 2: Construct $\wt c_0$ from $c_0'$ by moving from $X_L$ to $X$.}
  The consistency of $c_0'$ implies $c_0' \in \im \cF_0$.
  Therefore, we define $\wt c_0 = \cF_0^{-1} c_0'$.

  Notice
  \begin{equation} \label{eq:dtc0}
    |\delta_0 c_0'| = |\delta_0 \cF_0 (\wt c_0)| = |\cF_1 (\delta_0 \wt c_0)| = |\delta_0 \wt c_0|
  \end{equation}
  where the third equality holds because, by construction, $\cF_1$ preserves the weight.

  \paragraph*{Step 3: Construct $\wt h_0$ from $\wt c_0$ by applying soundness assumption of $X$.}
  By soundness of $C(X)$, there exists $\wt h_0 \in Z^0(X)$ such that
  \begin{equation} \label{eq:X-LTC}
    \frac{s_\LTC}{|X(0)|} |\wt c_0 - \wt h_0| \le \frac{1}{|X(1)|} |\delta_0 \wt c_0|.
  \end{equation}

  \paragraph*{Step 4: Construct $h_0$ from $\wt h_0$ by moving back from $X$ to $X_L$.}
  Let $h_0 = \cF_0 (\wt h_0)$. By the commutative diagram \Cref{eq:cF-classical}, this implies $h_0 \in Z^0(X_L)$ is a codeword.

  Notice that
  \begin{equation} \label{eq:c0'h0}
    |c_0' - h_0| = |\cF_0(\wt c_0 - \wt h_0)| \le \frac{\Delta_{\max}}{2} L |\wt c_0 - \wt h_0|
  \end{equation}
  because each entry in $\wt c_0 - \wt h_0$ is repeated at most $\Delta_{\max} \frac{L-1}{2} + 1 \le \frac{\Delta_{\max}}{2} L$ times.

  \paragraph*{Wrap up.}
  Finally, we show the inequality, \Cref{eq:proof-soundness-1}.
  \begin{align}
    |c_0 - h_0|
    &= |-c_0^T + c_0' - h_0| \nonumber\\
    &\le |c_0^T| + |c_0' - h_0| \nonumber\\
    &\le \frac{1}{\beta^\rep} |\delta_0 c_0| + \frac{\Delta_{\max}}{2} L |\wt c_0 - \wt h_0| \nonumber\\
    &\le \frac{1}{\beta^\rep} |\delta_0 c_0| + \frac{|X(0)|}{|X(1)|} \frac{1}{s_{\LTC}} \frac{\Delta_{\max}}{2} L |\delta_0 \wt c_0| \nonumber\\
    &= \frac{1}{\beta^\rep} |\delta_0 c_0| + \frac{|X(0)|}{|X(1)|} \frac{1}{s_{\LTC}} \frac{\Delta_{\max}}{2} L |\delta_0 c_0'| \nonumber\\
    &\le \frac{1}{\beta^\rep} |\delta_0 c_0| + \frac{1}{\eta^\rep} \frac{|X(0)|}{|X(1)|} \frac{1}{s_{\LTC}} \frac{\Delta_{\max}}{2} L |\delta_0 c_0| \nonumber\\
    &= \left(\frac{1}{\beta^\rep} + \frac{1}{\eta^\rep} \frac{|X(0)|}{|X(1)|} \frac{1}{s_{\LTC}} \frac{\Delta_{\max}}{2} L\right) |\delta_0 c_0|
  \end{align}
  where the third inequality uses \Cref{eq:ct0-dct0}(c) and \Cref{eq:c0'h0}.
  The fourth inequality uses \Cref{eq:X-LTC}.
  The fifth equality uses \Cref{eq:dtc0}.
  The sixth inequality uses \Cref{eq:dc0'}.
\end{proof}

\section*{Acknowledgements}
We thank Jeongwan Haah for pointing out the paper on welding construction and helpful discussions on his lattice quantum codes.
We thank Elia Portnoy for discussions on the mathematical developments related to quantum codes and high dimensional expanders.
We thank John McGreevy and Tarun Grover for suggesting possible applications to physics and the SYK model.
TCL was supported in part by funds provided by the U.S. Department of Energy (D.O.E.) under the cooperative research agreement DE-SC0009919 and by the Simons Collaboration on Ultra-Quantum Matter, which is a grant from the Simons Foundation (652264 JM).

\bibliographystyle{unsrt}
\bibliography{references.bib}

\appendix

\section{Upper Bounds on Dimension, Distance, and Energy Barrier}
\label{sec:upper-bound}

This section collects various known upper bound on the geometrically local codes which our constructions have matched up to $\polylog(n)$.
We assume the codes have at most $1$ qubit (bit) at each lattice site of $[L]^D$ where $[L] = \{0, 1, 2, ..., L-1\}$.

We first discuss the bound on distance and energy barrier of the geometrically local quantum codes.
\begin{theorem}[{\cite[Theorem 1, 2]{bravyi2009no}}]
  Suppose $L \ge 2(r-1)^2$.
  For geometrically local quantum codes in $[L]^D$
  \begin{equation}
    d \le r L^{D-1},\quad \cE \le 6 r^2 L^{D-2}
  \end{equation}
  where each stabilizer generator is bounded by a hypercube of size $r^D$.
\end{theorem}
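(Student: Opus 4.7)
The plan is to prove the two bounds separately, via the classical Bravyi-Terhal slab argument for the distance and a one-qubit-flip sweep construction for the energy barrier. Throughout, I would fix one coordinate axis (say the first) and partition $[L]^D$ into $L/r$ disjoint slabs $A_1, \ldots, A_{L/r}$ of thickness $r$ along that axis. Each slab then has at most $rL^{D-1}$ qubits, and every stabilizer generator, being contained in a hypercube of side $r$, straddles at most one interface between consecutive slabs.

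For the distance bound I would run a dimension-counting argument slab by slab. For each $A_i$ consider the ``slab-local'' logical space
\[
  \mathcal{L}_i \;=\; \bigl(\mathcal{P}(A_i) \cap \mathcal{C}(S)\bigr) \big/ \bigl(\mathcal{P}(A_i) \cap S\bigr),
\]
where $\mathcal{P}(A_i)$ is the Pauli group on the qubits of $A_i$, $S$ is the stabilizer group, and $\mathcal{C}(S)$ is its centralizer. A careful count of the stabilizers straddling each interface---each such stabilizer lives in a box of side $r$ crossing exactly one interface, so it only contributes ``boundary'' relations to two neighbouring slabs---shows that $\sum_i \dim \mathcal{L}_i \ge k \ge 1$. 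The hypothesis $L \ge 2(r-1)^2$ is exactly what makes the slab thickness large enough compared to the straddling width $r-1$ for this bookkeeping to close. Hence some $A_i$ supports a nontrivial logical operator $\ell$, and splitting $\ell$ into its $X$ and $Z$ parts yields a nontrivial representative (in one sector or the other) of weight at most $|A_i| \le rL^{D-1}$, giving $d \le rL^{D-1}$.

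For the energy-barrier bound I would take the short logical $\ell$ produced above, say of $X$-type, supported in a slab $A$ of thickness $r$, and realize it by a single-qubit-flip walk. Order the qubits in $\supp \ell$ by one of the remaining $D-1$ coordinates (the sweep axis) as $q_1, \ldots, q_m$, and set $P_t = X_{q_1} \cdots X_{q_t}$, which gives a walk $0 = P_0, P_1, \ldots, P_m = \ell$. A $Z$-check is violated at step $t$ iff its support meets $\{q_1, \ldots, q_t\}$ in an odd number of qubits, so it must touch qubits on both sides of the sweep frontier, confining it to a $2r$-wide window along the sweep axis. Moreover the check must intersect $\supp \ell \subseteq A$, which is $r$-thick along the slab axis, so it also lies in a $3r$-wide strip along that axis; in the remaining $D-2$ axes it is unrestricted. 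Combined with the bounded-density assumption this gives at most $3r \cdot 2r \cdot L^{D-2} = 6r^2 L^{D-2}$ candidate violated $Z$-checks, so the energy of $P_t$ is at most $6r^2 L^{D-2}$ for every $t$. Swapping the roles of $X$ and $Z$ yields the same bound on $\cE_z$, whence $\cE \le 6r^2 L^{D-2}$.

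The main obstacle is the dimension count underlying the distance bound: the straddling stabilizers prevent one from simply reading off logical operators slab by slab, and the quotient defining $\mathcal{L}_i$ has to be set up so that these straddlers absorb exactly the right number of ``boundary'' degrees of freedom---no more, no less. This is where the technical hypothesis $L \ge 2(r-1)^2$ does its real work, and it is essentially the only step where any care is required. Once that bookkeeping produces a slab-supported logical, the energy-barrier bound reduces to the elementary frontier-volume calculation above.
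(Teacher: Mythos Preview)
Your proposal is correct and follows the same approach as the paper's sketch: find a nontrivial logical operator supported in a single width-$r$ slab (the Bravyi--Terhal argument, giving $d \le rL^{D-1}$), then build that operator by sweeping along one of the remaining $L$-directions so that only the moving frontier contributes to the syndrome (giving $\cE \le 6r^2L^{D-2}$). The paper itself only records this two-line idea, since the result is cited from \cite{bravyi2009no}; your fleshed-out version of both steps is exactly the intended argument.
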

Note that in \cite[Theorem 2]{bravyi2009no}, although the authors did not state the case where the dimension is greater than $2$, the proof technique applies directly and yield the bound stated above.

The idea is that one can show the existence of a logical operator in a slab of width $r$, which implies $d \le r L^{D-1}$.
(The slab has width $r$ in one of the directions and length $L$ in the other $D-1$ directions)
Then one can create this logical operator layer by layer (in one of the length $L$ directions).
Only the end layer costs energy which gives $\cE \le 6 r^2 L^{D-2}$.

The same idea applies to classical codes.
\begin{theorem}
  For geometrically local classical codes in $[L]^D$
  \begin{equation}
    d \le L^D,\quad \cE \le r L^{D-1}
  \end{equation}
  where each check is bounded by a hypercube of size $r^D$.
\end{theorem}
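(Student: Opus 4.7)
The distance bound is essentially tautological: by the bounded-density embedding into $[L]^D$ with at most one bit per lattice site, there are at most $L^D$ bits in total, so any nonzero vector---in particular any nontrivial codeword---has Hamming weight at most $L^D$. I therefore focus on the energy-barrier bound, which I plan to prove by exhibiting an explicit walk from $0$ to a nontrivial codeword, following the classical analogue of the ``build the logical operator slab by slab'' argument used for the quantum bound above.

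Pick any $c \in C - \{0\}$ and sweep along the first coordinate $x_1$. For $t \in \{0, 1, \ldots, L\}$ let $c^{(t)}$ denote the vector that agrees with $c$ on bits with $x_1 < t$ and is zero elsewhere, so that $c^{(0)} = 0$ and $c^{(L)} = c$. The walk moves from $c^{(t)}$ to $c^{(t+1)}$ by flipping, one bit at a time in any fixed order, those bits in the slice $\{x_1 = t\}$ on which $c$ is nonzero. It suffices to bound $|Hx|$ uniformly over intermediate states $x$ of this walk.

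The key observation is that any such $x$ agrees with $c$ on $\{x_1 < t\}$ and with $0$ on $\{x_1 > t\}$, so any check $h$ whose support lies entirely in $\{x_1 < t\}$ evaluates to $h(c) = 0$ and is satisfied, while any check whose support lies entirely in $\{x_1 > t\}$ evaluates to $0$ and is also satisfied. The only potentially violated checks are therefore those whose support meets the layer $\{x_1 = t\}$ or straddles it. Since each check sits inside an $r^D$-hypercube, its support has extent at most $r$ in the $x_1$-direction, so the minimum $x_1$-coordinate of its support lies in $[t-r+1, t]$. This confines the potentially violated checks to a slab of width $r$ in the $x_1$-direction, which contains at most $r \cdot L^{D-1}$ checks by bounded density.

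I do not foresee a real obstacle here; the only mild bookkeeping concerns intermediate states within a single layer, but these still activate only checks whose bounding hypercube meets the layer $\{x_1 = t\}$, so the same slab count applies uniformly throughout the walk. Taking the maximum over all intermediate states then gives $\cE \le r L^{D-1}$, as required.
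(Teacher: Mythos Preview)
Your proposal is correct and follows essentially the same approach the paper indicates: the paper only sketches this result by saying ``the same idea applies to classical codes,'' referring to the layer-by-layer construction of a codeword so that only checks near the current interface can be violated. Your sweep along $x_1$ with the partial codewords $c^{(t)}$ is exactly that argument, carried out in full.
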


We now discuss the tradeoff between code dimension and distance.
\begin{theorem}[{\cite[Equation 4, 5]{bravyi2010tradeoffs}}]
  For quantum codes
  \begin{equation}
    k \le \frac{2 D r^2 L^D}{\left(\frac{d}{2Dr}\right)^{\frac{2}{D-1}}} = \Theta\bigg(\frac{L^D}{d^{\frac{2}{D-1}}}\bigg)
  \end{equation}
  for large enough $d$.

  For classical codes
  \begin{equation}
    k \le L^D \left(1 - \frac{d}{\left(d^{\frac{1}{D}} + r\right)^D}\right) \le \frac{D r L^D}{d^{\frac{1}{D}}} = \Theta\bigg(\frac{L^D}{d^{\frac{1}{D}}}\bigg)
  \end{equation}
  for large enough $d$ where $d \ge r^D$.
\end{theorem}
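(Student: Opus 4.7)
The plan is to establish both tradeoffs by a single partition-and-cleaning template: decompose $[L]^D$ into many disjoint small blocks separated by buffers of width $r$, then use the distance $d$ to show that each block cannot carry too many logical (qu)bits, and finally sum the contributions.

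For the classical bound I would first fix a block side length $\ell$ with $\ell^D < d$ and partition $[L]^D$ into cells of side $\ell+r$. Inside each cell I place a smaller inner hypercube $C_i$ of side $\ell$, positioned so that the $C_i$'s are pairwise separated by a buffer of width at least $r$. Because each check acts within a ball of diameter $r$, no check simultaneously touches two different $C_i$'s. Setting $A=\bigcup_i C_i$, I would argue that the bit values on $A$ are determined by the bit values on $[L]^D\setminus A$: if two codewords $c,c'$ agree off $A$, then $e=c-c'$ is supported on $A$, and by the buffer separation each restriction $e|_{C_i}$ is itself a codeword supported on a set of size $\ell^D<d$, hence zero. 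Therefore $k\le L^D - |A| = L^D(1-(\ell/(\ell+r))^D)$, and choosing $\ell = d^{1/D}$ (permitted when $d\ge r^D$) gives the stated bound.

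For the quantum case I would run the same decomposition scheme but with the different calibration $\ell=\Theta(d^{1/(D-1)})$. The goal is now to bound the number of ``logical qubits localized to each cell'' rather than the number of undetermined bits. I would apply the Bravyi–Terhal slab argument locally: inside a hypercube of side $\ell$, the CSS subcode (qubits in the cell, stabilizer generators supported in the cell) admits X and Z logical operators supported in a slab of width $r$ and cross-section $\ell^{D-1}$, so its effective local distance is at most $r\ell^{D-1}$. Combined with a cleaning lemma for CSS codes — any logical operator can be pushed out of a correctable region by multiplication by stabilizers — I would argue that the contribution of each cell to the global dimension is at most $k_i = O(\ell^{D-2})$, and summing over the $(L/(\ell+r))^D$ cells produces $k = O(L^D/\ell^2) = O(L^D/d^{2/(D-1)})$ as required.

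The main obstacle is the quantum cleaning step. In the classical argument disjointness of buffers is enough because codewords restricted to a cell are themselves codewords; in the CSS setting, stabilizers straddling cell boundaries can glue apparently-local logical operators into global ones, and conversely a global logical operator need not restrict cleanly. I would need a careful cleaning argument showing that any global logical operator is stabilizer-equivalent to one whose footprint inside each cell is controlled by that cell's local dimension $k_i$, so that the bound $k\le\sum_i k_i$ holds up to manageable corrections from the buffer regions. This rests on the fact that every region of Euclidean diameter below $d$ is correctable, but chaining cleaning moves across many cells so that the width-$r$ buffer really insulates them is the delicate piece; matching the precise constants $2Dr^2$ and $(d/2Dr)^{2/(D-1)}$ in the stated theorem is then a bookkeeping exercise counting how many qubits sit in buffer regions versus in the inner cells.
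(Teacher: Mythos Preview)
Note first that the paper does not prove this theorem: it is stated in the appendix purely as a citation to Bravyi--Poulin--Terhal, with no argument supplied, so there is no in-paper proof to benchmark against. Your classical argument is correct and is essentially the BPT proof (take $\ell=\lfloor d^{1/D}\rfloor$ rather than $d^{1/D}$ to secure the strict inequality $\ell^D<d$).

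The quantum argument has a genuine gap precisely where you flag it. The inference ``local distance $\le r\ell^{D-1}$, hence the cell's contribution to the global dimension is $k_i=O(\ell^{D-2})$'' is not justified: the Bravyi--Terhal slab argument applied inside a cell upper-bounds the \emph{distance} of the restricted code, not its dimension, and converting a distance upper bound into a dimension upper bound is exactly the tradeoff you are trying to prove --- the reasoning is circular. Separately, even granting per-cell bounds $k_i$, the summation $k\le\sum_i k_i$ does not follow from cleaning. Unlike the classical case --- where a codeword restricted to a buffered cell is again a codeword --- a global anticommuting pair $(\bar X,\bar Z)$ can have both representatives threading through many cells with neither confined to any one of them; restricting a logical operator to a cell yields neither a logical of the global code nor of any well-defined local subcode, so there is no clean notion of ``logical qubits owned by cell $i$'' to sum.

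The actual BPT mechanism is not a per-cell dimension count but a support-confinement argument that uses cleaning \emph{twice}. One calibrates $\ell$ so that a single thickened $(D{-}1)$-face of the $\ell$-grid, of volume $\sim r\ell^{D-1}$, is correctable; this is what forces $\ell\sim(d/r)^{1/(D-1)}$ rather than $d^{1/D}$. A first application of the union-of-separated-correctables lemma together with cleaning pushes all logical support off one separated family of such faces; a second, independent application removes another family, and the two reductions together confine $k$ to the volume of a codimension-two structure in the grid, which is $O(r^{2}L^{D}/\ell^{2})$. Your single-pass partition-and-clean template reaches at best $k\le O(rL^{D}/\ell)$, one power of $\ell$ short; the missing idea is this second reduction, and it cannot be replaced by a per-cell Bravyi--Terhal bound.
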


We leave it as an open question about the tradeoff between code dimension and the energy barrier.

\end{document}